\newtheorem{theorem}{Theorem}[section]
\newtheorem{lemma}[theorem]{Lemma}
\newtheorem{fact}[theorem]{Fact}
\newtheorem{definition}[theorem]{Definition}
\newtheorem{problem}{Problem}
\newenvironment{fminipage}%
  {\begin{Sbox}\begin{minipage}}%
  {\end{minipage}\end{Sbox}\fbox{\TheSbox}}
\newenvironment{algbox}[0]{\vskip 0.1in
\noindent 
\begin{fminipage}{6.3in}
}{
\end{fminipage}
\vskip 0.1in
}
\def\pleq{\preccurlyeq}
\def\prob#1#2{\mbox{Pr}_{#1}\left[ #2 \right]}
\def\expec#1#2{{\mathbb{E}}_{#1}\left[ #2 \right]}
\def\defeq{\stackrel{\mathrm{def}}{=}}
\def\setof#1{\left\{#1  \right\}}
\def\sizeof#1{\left|#1  \right|}
\def\floor#1{\left\lfloor #1 \right\rfloor}
\def\union{\cup}
\def\eps{\epsilon}
\def\abs#1{\left|#1  \right|}
\newcommand\Tcal{\mathcal{T}}
\newcommand\Pcal{\mathcal{P}}
\newcommand\pth{\mathfrak{p}}
\newcommand\er{R_{\mathrm{eff}}}
\newcommand\bb{\boldsymbol{\mathit{b}}}
\newcommand\ee{\boldsymbol{\mathit{e}}}
\newcommand\xx{\boldsymbol{\mathit{x}}}
\renewcommand\AA{\boldsymbol{\mathit{A}}}
\newcommand\BB{\boldsymbol{\mathit{B}}}
\newcommand\LL{\boldsymbol{\mathit{L}}}
\renewcommand\SS{\boldsymbol{\mathit{S}}}
\newcommand\WW{\boldsymbol{\mathit{W}}}
\newcommand\Otil{\widetilde{O}}
\newcommand\SSi{\boldsymbol{\Sigma}}
\DeclareMathOperator*{\argmax}{arg\,max}
\newcommand{\kh}[1]{\left(#1\right)}
\newcommand{\NSTMaximize}{\textsc{NSTMaximize}}
\newcommand{\AddAbove}{\textsc{AddAbove}}
\newcommand{\GreedyTh}{\textsc{GreedyTh}}
\newcommand{\ApproxSchur}{\textsc{ApproxSchur}}
\newcommand{\EREst}{\textsc{EREst}}
\newcommand{\polylog}{{\rm poly}\log}
\newcommand{\ces}{Q}
\newcommand{\cen}{q}
\newcommand{\ses}{P}
\title{Maximizing the Number of Spanning Trees in a Connected Graph\footnote{Stacy Patterson is supported in part by NSF grants CNS-1553340 and CNS-1527287.}}
\date{}
\author{
    Huan Li\thanks{%Shanghai Key Lab of Intelligent Information Processing,
    School of Computer Science,
    Fudan University. \texttt{email:huanli16@fudan.edu.cn}}
    \and
    Stacy Patterson\thanks{Department of Computer Science,
        Rensselaer Polytechnic Institute. %\hspace{68pt}
        \texttt{email:sep@cs.rpi.edu}}
    \and
    Yuhao Yi\thanks{%Shanghai Key Lab of Intelligent Information Processing,
    School of Computer Science,
    Fudan University. \texttt{email:yhyi15@fudan.edu.cn}}
    \and
    Zhongzhi Zhang\thanks{%Shanghai Key Lab of Intelligent Information Processing,
    School of Computer Science,
    Fudan University. \texttt{email:zhangzz@fudan.edu.cn}}
}
\begin{document}

%\pagenumbering{gobble}

\maketitle

            %$\Otil( ( m + (n + q) \epsilon^{-2} ) \epsilon^{-1} )$%
            %time,

% or its direct acceleration via
            %nearly-linear time effective resistances estimation with
            %approximation ratio $(1 - \frac{1}{e} - \eps)$ in the exponent
            %and running time $\Otil((m + (n + \cen) \eps^{-2})k)$, where the latter's
            %running time could be superlinear or even
            %quadratic when $k$ is large.
            %no polynomial time algorithm can
            % unless P = NP.

\begin{abstract}	
    We study the problem of maximizing the number of spanning trees in
    a connected graph by adding at most $k$ edges from a given
    candidate edge set,
    a problem that
    has applications
    in domains including
    robotics, network science, and cooperative control.
    By Kirchhoff's matrix-tree theorem,
    this problem is equivalent to maximizing the determinant
    of an SDDM matrix.
    We give both algorithmic and hardness results for this problem:
    \begin{itemize}
        \item We give a greedy algorithm that, using submodularity,
            obtains an approximation ratio of $(1 - 1/e - \epsilon)$
            in the exponent of the number of spanning trees for any $\epsilon > 0$
            in time\footnote{The notation $\Otil$ hides $\polylog(n)$ factors.}
            $\Otil(m \epsilon^{-1} + (n + q) \epsilon^{-3})$, where $n$ is the number of vertices, and $m$ and $q$ are
            the number of edges in the original graph and
            the candidate edge set, respectively.
            Our running time
            is optimal with respect to the input size, up to logarithmic factors,
            and substantially improves upon
            the $O(n^3)$ running time of the previous proposed greedy algorithm, which has 
            an approximation ratio $(1 - 1/e)$ in the exponent.
            Notably, the independence of our running time of $k$ is novel,
            compared to
            conventional top-$k$ selections on graphs
            that usually run in $\Omega(mk)$ time.
            A key ingredient of our greedy algorithm is a routine for maintaining
            effective resistances under edge additions
            that is a hybrid of online and offline processing techniques; this routine
             may be of independent interest in areas including dynamic algorithms
            and data streams.
        \item We show the exponential inapproximability of this problem by
            proving that there exists a constant $c > 0$ such that
            it is $\mathbf{NP}$-hard to approximate the optimum number of spanning trees
            in the exponent within $(1 - c)$.
            By our reduction, the inapproximability of this problem
            can also be stated as  there exists a constant $d > 0$ such that
            it is $\mathbf{NP}$-hard to approximate the optimum number of spanning trees
            within $(1 + d)^{-n}$.
            Our inapproximability result follows from a reduction from
            the minimum path cover in undirected graphs,
            whose hardness again follows from the constant inapproximability
            of the Traveling Salesman Problem (TSP) with distances 1 and 2.
            Thus, the approximation ratio of our algorithm
            is also optimal up to a constant factor in the exponent.
            To our knowledge, this is the first hardness of approximation result for
            maximizing the number of spanning trees in a graph,
            or equivalently,
            maximizing the determinant of an SDDM matrix.
    \end{itemize}
\end{abstract}%

\newpage

%\pagenumbering{arabic}

\section{Introduction}

We study the problem of maximizing the number of spanning trees
in a weighted connected graph $G$ by adding at most $k$ edges
from a given candidate edge set.
By Kirchhoff's matrix-tree theorem~\cite{Kir47},
the  number of spanning trees in $G$ is equivalent to the
determinant of a minor of the graph Laplacian $\LL$. 
Thus, an equivalent problem is to maximize the determinant of a minor of $\LL$,
or, more generally, to maximize the determinant of an SDDM matrix.
The problem of maximizing the number of spanning trees, and the related problem
of maximizing the determinant of an SDDM matrix, have applications in a wide variety
of problem domains. We briefly review some of these applications below.

In robotics, the problem of maximizing the number of spanning trees has been applied
in graph-based Simultaneous Localization and Mapping (SLAM).
In  graph-based SLAM~\cite{TM06}, each vertex corresponds to a robot's pose or position,
and edges correspond to relative measurements between poses. The graph is used to estimate the most
likely pose configurations. 
Since measurements can be noisy, a larger number of measurements results in a more accurate estimate.
 The problem of  selecting which $k$ measurements to add to a SLAM pose graph to most improve the estimate 
 has been recast as a problem of selecting the $k$ edges to add to the graph that maximize the number of spanning trees~\cite{khosoussi2015good,KHD16,KSHD16a,KSHD16b}. 
We note that the complexity of the estimation problem increases with the number of measurements, and so 
 sparse, well-connected pose graphs are desirable~\cite{DK06}. 
Thus, one expects $k$ to be moderately sized with respect to the number of vertices.

% applications  the need maximal number of spanning trees:
In network science, the number of spanning trees has been studied as a measure of reliability
in communication networks, where reliability is defined as the probability
that every pair of vertices can communicate~\cite{M96}. 
Thus, network reliability can be improved by adding edges that most increase
the number of spanning trees~\cite{FL01}.
The number of spanning trees has also been used as a predictor of the spread of information in
social networks~\cite{BAE11}, with a larger number of spanning trees corresponding to better
information propagation.

%A solution for identifying the best \emph{single} edge to add was  developed in this context~\cite{FL01}. 
In the field of cooperative control, the log-number of
spanning trees  has been shown to capture the robustness of linear consensus algorithms.
Specifically, the log-number of spanning trees quantifies the network entropy, a measure of how well
the agents in the network maintain agreement when subject to external
stochastic disturbances~\cite{SM14,dBCM15,ZEP11}.
Thus, the problem of selecting which edges to add to the network graph to optimize robustness is
equivalent to the log-number of spanning trees maximization problem~\cite{SM18,ZSA13}.
%It has also been shown that, in tree networks with linear consensus dynamics,
%when a single edge can be added,  the steady-state variance of the deviation from
%consensus is minimized  by adding the edge that yields the graph with the most spanning trees~\cite{ZSA13}.  
Finally, the  log-determinant of an SDDM matrix has also been used directly as a measure of
controllability in more general linear dynamical systems~\cite{SCL16}. In this paper, we provide an approximation algorithm to maximize the log-number of spanning trees of a connected graph by adding edges.

% optimal experiment design
   %     D-optimality in experiment design~\cite{C78,C81,P93}, and applications in sensor networks \cite{BOYD}

% prior work on maximizing number of spanning trees
%We also note that the problem of identifying  graphs with the maximal number of spanning trees,
%from among all graphs with a $n$ vertices and $m$ edges,
%has received significant attention~\cite{S74,C81,BLXS91,W94,K96,GM97,PBS98,P02}.
%Solutions to this problem have been found for specific classes of graphs, for example,
%graphs having up to $n+2$ edges~\cite{BLXS91} and $n+3$ edges~\cite{W94}.
%Of particular note, a regular complete multipartite graph has been shown to have the
%maximal number of spanning trees from among all simple graphs with the same number
%of vertices and edges~\cite{C81}.

%Survey of spanning trees: https://link.springer.com/article/10.1007/s00373-010-0973-2

\subsection{Our Results}\label{sec:res}

Let $G = (V,E)$ denote an undirected graph  
with $n$ vertices and $m$ edges,
and let $w : E \to \mathbb{R}^+$ denote the edge weight function.
For another graph $H$ with edges supported on a subset of $V$,
we write ``$G$ plus $H$'' or $G + H$ to denote the graph obtained by
adding all edges in $H$ to $G$.

Let $\LL$ denote the Laplacian matrix of a graph $G$.
The effective resistance $\er(u,v)$ between two vertices $u$ and $v$
%or for an edge $e = (u,v)$,
is given by
\begin{align*}
    \er(u,v) \defeq (\ee_u - \ee_v)^T \LL^\dag (\ee_u - \ee_v),
\end{align*}
where $\ee_u$ denotes the $u^{\mathrm{th}}$ standard basis vector
and $\LL^\dag$ denotes the Moore-Penrose inverse of $\LL$.

The weight of a spanning tree $T$ in $G$
is defined as
\begin{align*}
    w(T) \defeq \prod_{e\in T} w(e),
\end{align*}
and the weighted number of spanning trees in $G$ is defined
as the sum of  the weights of all spanning trees,
denoted by $\Tcal(G) \defeq \sum\nolimits_{T} w(T)$.
By Kirchhoff's matrix-tree theorem~\cite{Kir47}, the weighted number of spanning trees
equals the determinant of a minor of the graph Laplacian:
\begin{align*}
    \Tcal(G) = \det\kh{\LL_{1:n-1,1:n-1}}.
\end{align*}

In this paper, we study the problem of maximizing the weighted number of spanning
trees in a connected graph by adding at most $k$ edges from a given candidate edge
set. We give a formal description of this problem below.
\begin{problem}[Number of Spanning Trees Maximization (NSTM)]\label{prob:ecard}
    Given a connected
    undirected graph $G = (V, E)$,
    an edge set $\ces$ of $\cen$ edges, % such that $\ces \intersect E = \emptyset$,
    an edge weight function $w : (E \union \ces) \to \mathbb{R}^+$,
    and an integer $1 \leq k \leq \cen$,
    add at most $k$ edges from $\ces$ to $G$ so that the
    weighted number of spanning trees in $G$ is maximized.
    Namely, the goal is to find a set $\ses\subseteq \ces$ of at most $k$ edges such that
    \begin{align*}
        \ses \in \argmax_{S\subseteq \ces,\sizeof{S}\leq k} \Tcal(G + S).
    \end{align*}
\end{problem}

\paragraph{Algorithmic Results.}{Our main algorithmic result is solving Problem~\ref{prob:ecard}
with an approximation factor of $(1 - \frac{1}{e} - \eps)$ in the exponent
of $\Tcal(G)$ in nearly-linear time, % using the submodularity of $\log \Tcal(G)$,
which can be described by the following theorem:
\begin{restatable}[]{theorem}{thmalgo}
    There is an algorithm $\ses = \NSTMaximize(G,\ces,w,\eps,k)$, which
    takes a connected graph $G = (V,E)$ with $n$ vertices and $m$ edges,
    %such that $\Tcal(G) \geq 1$,
    an edge set $\ces$ of $\cen$ edges, %such that $\ces\intersect E = \emptyset$,
    an edge weight function $w : (E \union \ces) \to \mathbb{R}^+$,
    a real number $0 < \eps \leq 1/2$,
    and an integer $1\leq k\leq \cen$,
    and returns an edge set $\ses \subseteq \ces$ of at most $k$ edges
    in time $\Otil((m + (n + \cen) \eps^{-2})\eps^{-1})$.
    With high probability, the following statement holds:
    \begin{align*}
        \log \frac{\Tcal(G + \ses)}{\Tcal(G)} \geq
        \kh{1 - \frac{1}{e} - \eps} \log \frac{\Tcal(G + O)}{\Tcal(G)},
    \end{align*}
    where $O \defeq \argmax\nolimits_{S \subseteq \ces, \sizeof{S}\leq k} \Tcal(G + S)$
    denotes an optimum solution.
    \label{thm:DetMaximize}
\end{restatable}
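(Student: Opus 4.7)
The plan is to reduce the problem to monotone submodular maximization under a cardinality constraint, with the objective $f(S)\defeq\log\Tcal(G+S)-\log\Tcal(G)$; run an approximate-oracle greedy whose per-round cost does not scale with $k$; and implement the oracle via effective-resistance sketches maintained by a hybrid offline/online routine.

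\textbf{Submodularity and the oracle.} For a single edge $e=(u,v)$ of weight $w_e$ added to a connected graph $G'$, the Laplacian is updated by the rank-one term $w_e \bb_e \bb_e^T$ with $\bb_e=\ee_u-\ee_v$. Applying the matrix determinant lemma to the $(n-1)\times(n-1)$ Laplacian minor gives the clean identity
\begin{align*}
\log\Tcal(G'+e)-\log\Tcal(G')=\log\bigl(1+w_e\,\er^{G'}(u,v)\bigr).
\end{align*}
Rayleigh monotonicity forces $\er^{G'}(u,v)$ to be non-increasing as edges accrue in $G'$, so $f$ is monotone submodular. The classical Nemhauser--Wolsey--Fisher analysis then yields a $(1-1/e)$ guarantee for exact greedy, and its approximate-oracle variant degrades to $(1-1/e-\eps)$ whenever every evaluated marginal is $(1\pm O(\eps))$-accurate. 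Because $\log(1+x)$ changes by at most a $(1\pm O(\eps))$ factor under a $(1\pm\eps)$-perturbation of $x>0$, the oracle reduces to maintaining $(1\pm\eps)$-approximations of $\er^{G+S}(u,v)$ for every $e\in\ces$ as $S$ grows.

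\textbf{Sketching and the factor-$k$ obstacle.} Writing $\er^{G'}(u,v)=\smallnorm{\PPi_{G'}\bb_e}_2^2$ via the ``resistive projection'' $\PPi_{G'}$, the Spielman--Srivastava Johnson--Lindenstrauss sketch of dimension $t=\Otil(\eps^{-2})$, realized by $t$ nearly-linear-time Laplacian solves, produces simultaneous $(1\pm\eps)$-estimates for all candidate pairs in $\Otil(m\eps^{-2})$ preprocessing and $\Otil(\eps^{-2})$ per lookup. Re-sketching after each of the $k$ additions would cost $\Otil(mk\eps^{-2})$, which the theorem forbids. My plan is to combine two ideas to eliminate the $k$-dependence: (i) a Badanidiyuru--Vondr\'ak threshold-greedy wrapper that replaces the $k$ selection rounds by $\Otil(\eps^{-1}\log n)$ geometrically decreasing thresholds, independent of $k$; and (ii) a two-level refresh schedule for the sketch.

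\textbf{Hybrid offline/online maintenance.} The main technical obstacle is the refresh schedule. I would trigger \emph{offline rebuilds} of the JL sketch only at $\Otil(\eps^{-1})$ epochs, each chosen when cumulative drift in the true effective resistances has accrued to $\Theta(\eps)$; summed over epochs this contributes $\Otil(m\eps^{-1})$. Between rebuilds, \emph{online updates} propagate each added edge's rank-one Laplacian change through the sketched coordinates of the candidates via the Sherman--Morrison formula, touching $\Otil((n+\cen)\eps^{-2})$ entries per online step and totalling $\Otil((n+\cen)\eps^{-3})$ across all epochs and thresholds. Rayleigh monotonicity is crucial here: it bounds the cumulative multiplicative drift of effective resistances over the entire run, so an $\Otil(\eps^{-1})$-epoch schedule suffices. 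The delicate step is quantifying precisely how stale the online-updated sketch may become before the $(1\pm O(\eps))$ oracle guarantee fails, and turning that tolerance into an epoch trigger (plausibly a cumulative $\sum_{e\in S}w_e\er(e)$ proxy) that simultaneously controls approximation error and balances the offline $m\eps^{-1}$ against the online $(n+\cen)\eps^{-3}$ terms.

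\textbf{Putting it together.} A union bound over all $\cen$ candidates and all $\Otil(\eps^{-1}\log n)$ epochs converts the JL concentration into the high-probability statement that every evaluated marginal is within $(1\pm O(\eps))$ of truth; plugged into the approximate-oracle threshold-greedy analysis, this yields $\log(\Tcal(G+\ses)/\Tcal(G))\ge(1-1/e-\eps)\log(\Tcal(G+O)/\Tcal(G))$ within the claimed $\Otil(m\eps^{-1}+(n+\cen)\eps^{-3})$ running time.
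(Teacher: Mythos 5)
Your first two ingredients match the paper exactly: the matrix determinant lemma plus Rayleigh monotonicity gives monotone submodularity of $\log\Tcal(G+S)$, and the Badanidiyuru--Vondr\'ak threshold greedy with $(1\pm O(\eps))$-accurate marginals (via the Lipschitz-type bound relating $\log(1+a)$ and $\log(1+b)$) reduces everything to answering effective-resistance queries interleaved with edge insertions, with the number of threshold phases being $O(\eps^{-1}\log(\cen/\eps))$ independent of $k$. Up to that point you are on the paper's track.

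The gap is in your maintenance scheme, which is where the entire difficulty of the theorem lives. Two specific problems. First, your online step applies a Sherman--Morrison correction after \emph{each} accepted edge, at cost $\Otil((n+\cen)\eps^{-2})$ per insertion; since a single threshold phase can accept up to $k$ edges, the online cost is $\Otil((n+\cen)k\eps^{-2})$ unless the number of insertions per rebuild is bounded, and nothing in your argument bounds it. This is precisely the factor-$k$ obstacle you set out to remove, reintroduced one level down. Second, your epoch trigger rests on the claim that Rayleigh monotonicity ``bounds the cumulative multiplicative drift'' of the candidate resistances so that $\Otil(\eps^{-1})$ rebuilds suffice. Rayleigh monotonicity only gives the \emph{sign} of the drift: a single inserted edge can cut the effective resistance of a nearby candidate (e.g.\ a parallel candidate) by a constant factor, so a drift-of-$\Theta(\eps)$ trigger can fire after essentially every insertion, i.e.\ $k$ times. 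There is no global potential that caps the total multiplicative change over the run. The paper avoids both issues by a different mechanism: within a threshold phase it fixes an arbitrary order on the remaining candidates and runs a divide-and-conquer on that sequence (the routine $\AddAbove$), recursing on approximate Schur complements onto the endpoints of each half, so that every query is answered on a graph of size proportional to its subinterval while all earlier insertions are folded in offline via one fresh Schur complement per node of the recursion tree. That yields $\Otil(m+(n+\cen)\eps^{-2})$ per phase with no dependence on the number of accepted edges, which your sketch-plus-refresh design does not achieve as stated.
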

%Note that the only reason that we need $\Tcal(G)$ to be at least $1$
%is to ensure that the logarithm of the number of spanning trees
%is non-negative, and thus the approximation ratio is well-defined.

The running time of $\NSTMaximize$ is independent of the number $k$
of edges to add to $G$, and it is optimal with respect to the input size
up to logarithmic factors. This running time substantially improves upon
the previous greedy algorithm's $O(n^3)$ running time~\cite{KSHD16b}, or
the $\Otil((m + (n + \cen)\eps^{-2})k)$
running time of its direct acceleration via
fast effective resistance approximation~\cite{SS11,DKP+17} ,
where the latter becomes quadratic when $k$ is $\Omega(n)$.
Moreover, the independence of our running time of $k$ is
novel,
comparing to conventional top-$k$ selections on graphs %relying on submodularity
that usually run in $\Omega(mk)$ time,
such as the ones in~\cite{BBCL14,Yos14,MTU16,LPS+18}.
We briefly introduce these top-$k$ selections
in Section~\ref{sec:related}.
%especially
%for large $k$.

A key ingredient of the algorithm $\NSTMaximize$ is
a routine $\AddAbove$ that,
%a data structure which,
%which stores a connected graph,
%and supports a specific modifying operation:
given a sequence of edges and a threshold, % ordering of the edges in $\ces$,
\textit{sequentially} adds to the graph any edge whose effective resistance
(up to a $1 \pm \eps$ error)
is above the threshold at the time the edge is processed.
The routine $\AddAbove$ runs in nearly-linear time
in the total number of edges in the graph and the edge sequence.
%performs one pass of \textit{sequential} insertions
%for edges with effective resistances above a given threshold.
%where each pass runs in nearly-linear time.
%\textit{sequentially} picks edges whose
%effective resistances are above a given threshold in nearly-linear time.
%This data structure is motivated by the divide-and-conquer used
%to generate random spanning trees in~\cite{DKP+17}.
The performance of $\AddAbove$ is characterized in the following lemma:
\begin{restatable}[]{lemma}{lemalgo}
    There is a routine
    $\ses = \AddAbove(G, (u_i,v_i)_{i=1}^{\cen}, w, th, \eps, k)$, which takes
    a connected graph $G = (V,E)$ with $n$ vertices and $m$ edges,
    %such that $\Tcal(G) \geq 1$,
    an edge sequence $(u_i,v_i)_{i=1}^{\cen}$,
    an edge weight function $w : (E \union (u_i,v_i)_{i=1}^{\cen}) \to \mathbb{R}^+$,
    %$w_f : \setof{f_i}_{i=1}^{\cen} \to \mathbb{R}^+$,
    real numbers $th$ and $0 < \eps \leq 1 / 2$, and an integer $k$,
    and performs a sequential edges additions to $G$
    and returns the set $P$ of edges that have been added with $\sizeof{P}\leq k$.
    The routine $\AddAbove$ runs in time $\Otil(m + (n + \cen) \eps^{-2})$.
    With high probability,
    there exist $\kh{\hat{r}_i}_{i=1}^{\cen}$ such that
    $\AddAbove$ has the same return value as the following procedure,
    in which
    \[
        (1 - 2\eps) \er^{G^{(i)}}(u_i,v_i) \leq \hat{r}_i \leq (1 + 2\eps) \er^{G^{(i)}}(u_i,v_i)
    \]
    holds for all $i=1,2,\ldots,\cen$:
    %in terms of the return value: % and the modifications to $\GraphER .G$:
%    \begin{center}
%        \begin{enumerate}
%            \item[] \texttt{For $i=1$ to $q$}
%        \end{enumerate}
%    \end{center}
    \begin{algbox}
        $G^{(1)} \gets G$ \\
        \For{$i = 1$ {\rm to} $\cen$}{
            \If{$w(u_i,v_i)\cdot \hat{r}_i \geq th$ {\rm and} $k > 0$}{
                $G^{(i+1)} \gets G^{(i)} + (u_i,v_i)$, $k \gets k - 1$
            }
            \Else{
                $G^{(i+1)} \gets G^{(i)}$
            }
            %{\rm If} $k > 0$ {\rm and} $w(u_i,v_i)\cdot \hat{r}_i \geq th$,
            %{\rm set} $G^{(i+1)} \gets G^{(i)} + (u_i,v_i)$ {\rm and} $k \gets k - 1$;
            %{\rm Otherwise, set} $G^{(i+1)} \gets G^{(i)}$.
        }
        {\rm Return the set of edges in} $G^{(\cen+1)}$ {\rm but not in} $G$.
        %\KwRet {\rm }
        %{\rm Set} $\GraphER .G \gets G^{(q+1)}$ {\rm and return} $k$.
    \end{algbox}
    \label{lem:ds}
\end{restatable}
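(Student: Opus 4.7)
The plan is to design $\AddAbove$ as an efficient streaming algorithm on the edge sequence, maintaining a data structure over the evolving graph that answers approximate effective-resistance queries in $\Otil(\eps^{-2})$ amortized time per step. The main tools are the Johnson--Lindenstrauss-style effective-resistance sketch of Spielman--Srivastava, nearly-linear-time Laplacian solvers, and rank-one sketch corrections via the Sherman--Morrison/Woodbury identity.

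Concretely, I would proceed in three stages. First, spectrally sparsify $G$ to $\tilde G$ with $\Otil(n\eps^{-2})$ edges in $\Otil(m)$ preprocessing time, so that all further Laplacian work is sized in $n+q$ rather than $m$. Second, compute a JL sketch $\ZZ \in \mathbb{R}^{O(\log n / \eps^{2}) \times n}$ on $\tilde G$ satisfying $\|\ZZ(\ee_u-\ee_v)\|_2^2 \in (1\pm\eps)\er^{\tilde G}(u,v)$ simultaneously for all pairs with high probability, via $O(\log n /\eps^{2})$ Laplacian solves. Third, scan $(u_i,v_i)_{i=1}^{\cen}$ while maintaining a list of edges added since the last sketch rebuild; at each step, estimate $\er^{G^{(i)}}(u_i,v_i)$ by combining $\|\ZZ(\ee_{u_i}-\ee_{v_i})\|_2^2$ with a Woodbury correction for the pending edges, apply the threshold test, and schedule a rebuild whenever the pending list grows past a fixed constant. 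The auxiliary inner products required by the correction are read off from $\ZZ$ via the polarization identity $2\aa^{\top}\bb = \|\aa+\bb\|_2^2 - \|\aa\|_2^2 - \|\bb\|_2^2$. By construction, the return value of $\AddAbove$ matches that of the idealized procedure in the lemma on the realized sequence of estimates $\hat r_i$, since both execute the same branch on the same realized values.

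For correctness I would apply the JL norm-preservation guarantee under a union bound over the $\cen$ primary queries and the $O(1)$ auxiliary inner-product queries per step, and compose with the $(1\pm\eps)$ spectral-sparsifier distortion; a constant-factor enlargement of the sketch dimension and a constant-factor rescaling of $\eps$ then fit the overall error into the $(1\pm 2\eps)$ requirement of the lemma. The running-time ledger would attribute $\Otil(m)$ to sparsification, $\Otil((n+\cen)\eps^{-2})$ to all sketch rebuilds combined, and $\Otil(\cen\eps^{-2})$ to per-query work. The hard part will be balancing the rebuild schedule so that the per-query Woodbury cost, the amortized rebuild cost, and the composite multiplicative error simultaneously stay within budget: a naive schedule easily produces $\Omega(\cen^{2})$ total work or error blow-up across rebuild epochs. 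I expect the fix to require an additional ingredient, plausibly an approximate Schur complement onto the endpoints of the candidate edges, which would reduce the rebuild target to a graph whose size depends on $n+\cen$ rather than $m$; proving that this intermediate graph does not compound the error beyond $(1\pm 2\eps)$ across all $\cen$ queries, and that the Woodbury correction matrix remains well-conditioned whenever it is inverted, is where the technical heart of the argument will lie.
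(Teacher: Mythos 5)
There is a genuine gap here, on both the accuracy and the running-time side, and the ingredient you defer to at the end (``plausibly an approximate Schur complement'') is in fact the entire content of the paper's proof rather than a patch.

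On accuracy: the Woodbury-plus-polarization step does not deliver the multiplicative $(1\pm 2\eps)$ guarantee the lemma demands. The rank-one correction for a pending edge $e$ subtracts $w(e)\,(\bb_{u_iv_i}^{T}\LL^{\dag}\bb_{e})^{2}/(1+w(e)\,\bb_{e}^{T}\LL^{\dag}\bb_{e})$, and a JL sketch recovers the cross inner product only up to \emph{additive} error of order $\eps\sqrt{\er(u_i,v_i)\er(e)}$ (norms are preserved multiplicatively, inner products are not). The resulting error in the corrected estimate is additive of order $\eps\cdot\er^{old}(u_i,v_i)$, whereas the true post-update resistance $\er^{G^{(i)}}(u_i,v_i)$ can be smaller than $\er^{old}(u_i,v_i)$ by an unbounded factor (e.g.\ when a heavy pending edge is nearly parallel to the query). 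So the relative error of your estimate is uncontrolled, and it compounds over the pending edges in an epoch. The paper never faces this: every query is answered from a freshly built $(1\pm\eps')$ spectral approximation (a chain of $\ApproxSchur$ calls) of the \emph{exactly} updated graph, so the guarantee stays multiplicative by Lemma~\ref{lem:ApproxSchur} and Fact~\ref{fact:SchurResistance}.

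On running time: rebuilding whenever a constant number of edges is pending forces $\Theta(k)$ rebuilds in the worst case, each costing at least $\Otil(n\eps^{-2})$, for a total of $\Otil(nk\eps^{-2})$ --- precisely the $k$-dependent bound the lemma is designed to beat. You correctly diagnose this, but the fix is not a tweak to the rebuild schedule; it is a different algorithm. The paper recurses on the \emph{edge sequence}: split $(u_i,v_i)_{i=1}^{\cen}$ into halves, take an approximate Schur complement onto the endpoints of the first half and recurse there, fold the edges actually added back into the Laplacian, then take a Schur complement onto the endpoints of the second half of the updated graph and recurse there. The per-level Schur-complement error is budgeted as $\Theta(\eps/\log\cen)$ so that the product over the $O(\log\cen)$ levels stays within $(1\pm 2\eps)$, and the recurrence $T(\cen,\eps)=2T(\cen/2,(1-1/\log\cen)\eps)+\Otil(\cen\eps^{-2})$ yields $\Otil(m+(n+\cen)\eps^{-2})$ overall. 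That recursion, the telescoping error composition, and the recurrence analysis are the technical heart of Lemma~\ref{lem:ds}, and they are missing from your proposal.
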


%We sketch the idea behind our algorithm below.
%
%Let $G + (u,v)$ denote the graph obtained from $G$ by adding edge $(u,v)$.
%By matrix determinant lemma~\cite{Har97},
%the weighted number of spanning trees in $G + (u,v)$ can be written
%in terms of $\Tcal(G)$ as
%\begin{align*}
%    \Tcal(G + (u,v)) = \kh{1 + w(u,v)\er(u,v)} \Tcal(G),
%\end{align*}
%where
%$
%    \er(u,v) \defeq (\ee_u - \ee_v) \LL^\dag (\ee_u - \ee_v)
%$
%is the effective resistance between vertices $u$ and $v$,
%in which $\LL^\dag$ denotes the Moore-Penrose inverse of $\LL$
%and $\ee_u$ denote the $u^{\mathrm{th}}$ standard basis vector.
%The submodularity of $\log \Tcal(G)$ then follows immediately from
%Rayleigh's monotonicity law~\cite{Sto87}.
%For maximizing a monotone submodular function faster,
%\cite{BV14} gives an algorithm which obtains
%a $(1 - \frac{1}{e} - \eps)$-approximation using nearly-linear objective function
%evaluations.
%However, since estimating $\Tcal(G)$ (or $\log \Tcal(G)$) is expensive,
%directly applying that algorithm here still makes the running time
%at least quadratic.
%
%A key step of the algorithm in~\cite{BV14}, interpreted in our setting,
%is to \textit{sequentially} pick edges whose effective resistances are above
%a given threshold.

The routine $\AddAbove$ can be seen 
as a hybrid of online and offline
processing techniques. The routine is provided a specific edge sequence as input, 
as is typical in offline graph algorithms. However, the routine does not know what operation should be performed on an edge 
(i.e., whether the edge should be added to the graph) until the edge is processed, in an online fashion. 
The routine thus has to alternately update the graph and query effective resistance. This routine may be of independent interest in areas including dynamic algorithms
and data streams.
%The routine $\AddAbove$ can be seen as an interpolation
%\hl{Is this word used properly?} \sep{See my comments is Section 1.2} 
%\hl{will explain in email}
%between online
%and offline.
%For the offline part,
%it is given a specific edge sequence in the input;
%for the online part, it does not know whether to add an edge
%until it processes that specific edge, and thus has to
%update the graph and query effective resistance alternately.
%Thus,
%this routine may be of independent interest in areas like dynamic algorithms
%and data streams\hl{Can we say this?}.

\paragraph{Hardness Results.}{To further show that the approximation ratio of the algorithm $\NSTMaximize$ is also
nearly optimal, we prove the following theorem, which indicates that
Problem~\ref{prob:ecard} is exponentially inapproximable:

\begin{theorem}\label{thm:hardness}
    There is a constant $c > 0$ such that
    given an instance of Problem~\ref{prob:ecard},
    it is $\mathbf{NP}$-hard to find an edge set $\ses \subseteq \ces$ with $\sizeof{\ses} \leq k$
    satisfying
    \[
        \log \frac{\Tcal(G + \ses)}{\Tcal(G)} > (1 - c) \cdot \log \frac{\Tcal(G + O)}{\Tcal(G)},
    \]
    %no polynomial time algorithm can
    %approximate \[\log \frac{\Tcal(G + O)}{\Tcal(G)}\] within $(1-c)$ unless $\rm{P=NP}$,
    where $O$ is an optimum solution defined in Theorem~\ref{thm:DetMaximize}.
\end{theorem}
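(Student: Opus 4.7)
The plan is to prove Theorem~\ref{thm:hardness} by a gap-preserving reduction from the Minimum Path Cover (MPC) problem on undirected graphs, whose APX-hardness in turn follows from that of $(1,2)$-TSP. First, I would record that Papadimitriou and Yannakakis's inapproximability of $(1,2)$-TSP transfers to MPC via the standard correspondence that a $(1,2)$-TSP tour of cost $|V_H| + \ell$ yields a path cover of size $\ell$ after deleting the weight-$2$ edges of the tour; this produces a constant $\alpha > 0$ such that it is $\mathbf{NP}$-hard to distinguish MPC instances $H$ with optimum $\pi(H)$ from instances where every path cover has size at least $(1+\alpha)\pi(H)$, even when restricted to instances with $\pi(H) = \Theta(|V_H|)$.

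Given $H = (V_H, E_H)$, the reduction produces an NSTM instance $(G, \ces, w, k)$ where $G$ is built on $V_H$ plus auxiliary vertices forming a ``backbone'' that keeps $G$ connected and normalizes the pairwise effective resistances between vertices of $V_H$; $\ces$ is a copy of $E_H$ with tuned weights; and $k = |V_H| - 1$. The central lemma to prove is that each selected candidate edge contributes to $\log(\Tcal(G+\ses)/\Tcal(G))$ an amount equal, up to $O(1)$ summed over $\ses$, to a constant $\beta > 0$ if the edge extends a current path fragment of $(V_H, \ses)$, and to some strictly smaller constant $\beta' < \beta$ if it closes a cycle or branches a vertex already of degree two. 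The essential tools are the cycle-sum identity $\Tcal(G+e)/\Tcal(G) = 1 + w_e \er(u,v)$ and Rayleigh monotonicity, which quantify how the effective resistance of a candidate edge shrinks due to previously added edges sharing its vertices.

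The hardness reduction is then immediate: with $k = |V_H| - 1$ fixed, maximizing $\log(\Tcal(G+\ses)/\Tcal(G))$ among $\ses \subseteq E_H$ with $|\ses| = k$ is, up to $O(1)$, equivalent to minimizing the number of vertex-disjoint path components of $(V_H, \ses)$; any algorithm achieving $\log(\Tcal(G+\ses)/\Tcal(G)) \ge (1-c)\log(\Tcal(G+O)/\Tcal(G))$ therefore produces a path cover of $H$ of size at most $\pi(H) + O(c \cdot |V_H|)$. For $c$ small compared to $\alpha$, this contradicts the MPC inapproximability. The equivalent $(1+d)^{-n}$ formulation noted in the abstract follows by exponentiating the $(1-c)$ gap in the exponent, using $\log(\Tcal(G+O)/\Tcal(G)) = \Theta(|V_H|)$.

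The principal obstacle is engineering the backbone so that the central lemma holds with concrete, separated constants $\beta > \beta'$. The hard direction is the uniform upper bound on the contribution of a ``bad'' (cycle-closing or branching) edge: after an arbitrary subset of prior candidate edges has been added, the effective resistance between the bad edge's endpoints must be provably bounded above by a constant strictly less than the unperturbed backbone resistance $R$, regardless of the order in which prior edges were processed. Establishing such a bound requires both tuning the backbone weights and likely choosing a highly symmetric structure (e.g., a star or an expander), so that the effective-resistance drop caused by each prior edge can be quantified uniformly over orderings. This quantitative gap between the path-extending and path-violating regimes is what converts the APX-hardness of MPC into the exponential inapproximability of NSTM, and it is the technical heart of the proof.
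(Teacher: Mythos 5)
Your overall strategy matches the paper's: a reduction from Minimum Path Cover (obtained from the $(1,2)$-TSP hardness of Papadimitriou--Yannakakis), a star-like base graph with candidate set $E_H$ and $k=|V_H|-1$, and an analysis of per-edge gains via the identity $\Tcal(G+e)/\Tcal(G)=1+w_e\er(u,v)$ together with Rayleigh monotonicity. The paper in fact takes $G=S_n$ unweighted with no auxiliary tuning, and its hard MPC instances are ``path cover number $1$ versus at least $\delta n$,'' so the yes-case is exactly a fan graph $F_n$ whose spanning-tree count is known in closed form.

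The genuine gap is your central lemma, which you correctly identify as the technical heart but do not supply --- and, as stated, it is not what drives the separation and would fail for the natural construction. In the star construction the accounting is: a path cover with $t$ components gives $\Tcal(S_n+\Pcal)=\prod_i\Tcal(F_{l_i+1})$, which costs roughly a factor $(\sqrt{5})^{t-1}$ relative to $\Tcal(F_n)$, while each of the $t-1$ non-cover edges regains a factor of at most $1+\er\le\sqrt{5}$ (or $5/2$, handled separately for isolated vertices). These two effects essentially \emph{cancel}; equivalently, the ``good'' edges do not contribute a uniform $\beta$ up to $O(1)$ total error --- each path component carries a $\Theta(1)$ positive deviation, so with $t=\Theta(n)$ components your error term is $\Theta(n)$, exactly the order of the gap you need. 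The paper's actual source of the exponential gap is different: by Markov's inequality a constant fraction of the $\Omega(n)$ paths have $O(1)$ length, and each short fan $F_{l+1}$ falls short of the asymptotic growth rate $\bigl(\tfrac{3+\sqrt{5}}{2}\bigr)^{l+1}/\sqrt{5}$ by a constant factor $1-\bigl(\tfrac{3-\sqrt{5}}{2}\bigr)^{2l+2}$ bounded away from $1$. Multiplying these $\Omega(n)$ constant deficits yields the $(1+d)^{-n}$ loss. Without this short-path argument (or a concretely engineered backbone for which your uniform $\beta$ versus $\beta'$ separation provably holds with $o(n)$ total error --- which you have not exhibited and which the star does not provide), the reduction does not go through.
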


The proof of Theorem~\ref{thm:hardness} follows by Lemma~\ref{soundness:lemma}. By the same lemma, we can also state the inapproximability of Problem~\ref{prob:ecard}
using the following theorem:

\begin{theorem}\label{thm:hardness2}
    There is a constant $d > 0$ such that
    given an instance of Problem~\ref{prob:ecard},
    it is $\mathbf{NP}$-hard to find an edge set $\ses \subseteq \ces$ with $\sizeof{\ses} \leq k$
    satisfying
    \[
        \Tcal(G + \ses) > \frac{1}{(1 + d)^n} \cdot \Tcal(G + O),
    \]
    %no polynomial time algorithm can
    %approximate \[\log \frac{\Tcal(G + O)}{\Tcal(G)}\] within $(1-c)$ unless $\rm{P=NP}$,
    where $O$ is an optimum solution defined in Theorem~\ref{thm:DetMaximize}.
\end{theorem}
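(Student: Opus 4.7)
The plan is to deduce Theorem~\ref{thm:hardness2} from the same Lemma~\ref{soundness:lemma} used for Theorem~\ref{thm:hardness}, via a short algebraic conversion that exploits an additional quantitative property of the reduction: on the instances it produces, the optimum exponent gap should grow linearly in $n$, i.e., $\log(\Tcal(G+O)/\Tcal(G)) = \Theta(n)$. This property is expected to follow from the structure of the reduction from minimum path cover (and ultimately TSP with distances $1$ and $2$), because those problems themselves admit NP-hardness gaps linear in the instance size, and the gadget should encode each additional path-cover edge as a multiplicative factor in the spanning tree count.

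Assuming the linear gap, I would argue by contraposition. Suppose there were an efficient algorithm returning $\ses$ with $\Tcal(G+\ses) > (1+d)^{-n}\,\Tcal(G+O)$. Taking logarithms and invoking $\log(\Tcal(G+O)/\Tcal(G)) \geq \alpha n$ for the constant $\alpha > 0$ supplied by the linear gap, one obtains
\[
\log\frac{\Tcal(G+\ses)}{\Tcal(G)} > \log\frac{\Tcal(G+O)}{\Tcal(G)} - n\log(1+d) \geq \kh{1 - \frac{\log(1+d)}{\alpha}} \log\frac{\Tcal(G+O)}{\Tcal(G)}.
\]
Choosing $d$ small enough that $\log(1+d)/\alpha \leq c$, where $c$ is the constant from Theorem~\ref{thm:hardness}, such an algorithm would violate the conclusion of Theorem~\ref{thm:hardness}, contradicting $\mathbf{P}\neq\mathbf{NP}$. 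Hence any $d\in(0,\, e^{c\alpha}-1)$ suffices, which is what Theorem~\ref{thm:hardness2} asserts.

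The main obstacle I expect is the linear-in-$n$ gap itself: one needs to read off from Lemma~\ref{soundness:lemma} (or directly from the gadget) that the maximum attainable $\log(\Tcal(G+\ses)/\Tcal(G))$ differs by $\Theta(n)$ between the YES and NO sides of the path-cover reduction. This is largely bookkeeping rather than deep mathematics, but it is essential, since without it Theorem~\ref{thm:hardness} would yield only a subexponential multiplicative inapproximability, whereas Theorem~\ref{thm:hardness2} asserts the full $(1+d)^n$ rate. I therefore expect the authors' proof either to state this linear gap explicitly inside Lemma~\ref{soundness:lemma}, or to spend a few lines extracting it from the gadget and then close with the algebraic conversion above.
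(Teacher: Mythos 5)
Your proposal is correct and matches the paper's (very terse) argument: the paper derives Theorem~\ref{thm:hardness2} from the same Lemma~\ref{soundness:lemma}, and the linear gap you need is explicit there, since on the hard instances $\Tcal(G)=\Tcal(S_n)=1$ and $\Tcal(G+O)=\Tcal(F_n)=\frac{1}{\sqrt{5}}\bigl(\bigl(\tfrac{3+\sqrt{5}}{2}\bigr)^{n}-\bigl(\tfrac{3-\sqrt{5}}{2}\bigr)^{n}\bigr)$, so $\log\Tcal(G+O)\geq \alpha n$ with $\alpha$ close to $\log\tfrac{3+\sqrt{5}}{2}$. Your conversion $d<e^{c\alpha}-1$ is exactly the intended bookkeeping (and, as you implicitly use, the lower bound is only needed on the YES instances, where it holds by the fan-graph formula).
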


Theorem~\ref{thm:hardness} implies that the approximation ratio of $\NSTMaximize$ is optimal up to a constant
factor in the exponent.
To our knowledge, this is the first hardness of approximation result
for maximizing the number of spanning trees in a graph, or equivalently,
maximizing the determinant of an SDDM matrix (a graph Laplacian minor).

In proving Theorem~\ref{thm:hardness}, we give a reduction from 
the minimum path cover in undirected graphs, whose hardness
follows from the constant inapproximability of the traveling salesman
problem (TSP) with distances $1$ and $2$.
The idea behind our reduction is to consider
a special family of graphs,
each graph from which
equals a star graph plus an arbitrary graph supported on its leaves.
%that can be obtained
%by adding a star
%(i.e. adding a dummy vertex connecting to all existing vertices)
%to one of its subgraphs.
Let $H = (V,E)$ be a graph %from such family
equal to a star $S_n$ plus
its subgraph $H[V'] = (V',E')$ supported on $S_n$'s leaves. % for a vertex set $V'$.
We can construct an instance of Problem~\ref{prob:ecard} from $H$ by letting
the original graph, the candidate edge set, and the number of edges to add
be, respectively
\begin{align*}
    G &\gets S_n, \hspace{10pt} Q \gets E', \hspace{10pt} k \gets \sizeof{V'} - 1.
\end{align*}
We give an example of such an instance in Figure~\ref{starPlusH:fig}.

\begin{figure}
    \centering
    \includegraphics[width=.6\linewidth]{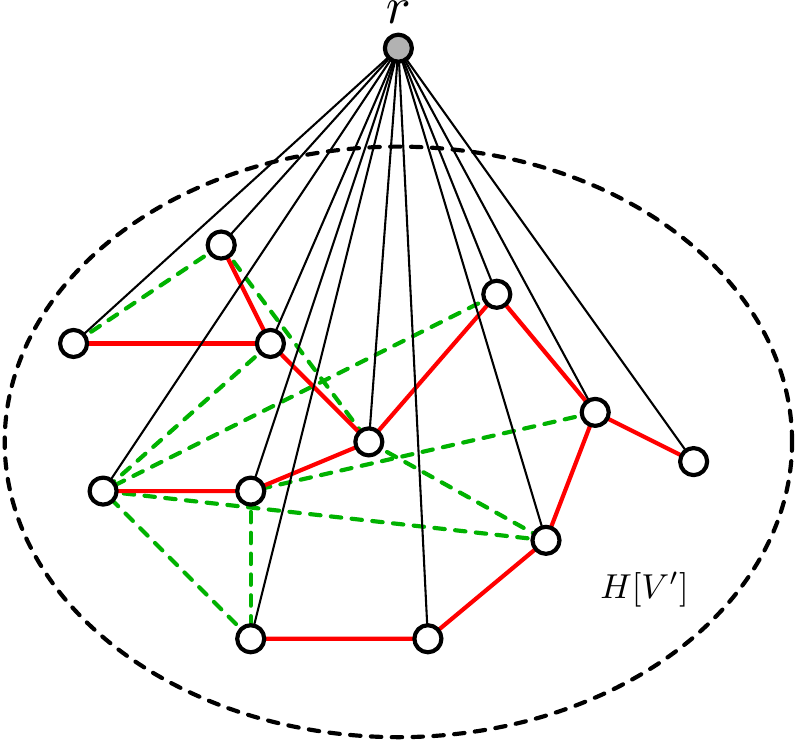}
    \caption{An instance of Problem~\ref{prob:ecard}
        constructed from $H$,
        which equals a star graph plus $H[V']$ supported on its leaves.
        Here, $r$ is the central vertex of the star.
        All red edges and green edges belong to the candidate edge set,
    where red edges denotes a possible selection with size $\sizeof{V'} - 1$.}
    \label{starPlusH:fig}
\end{figure}

%\begin{align*}
%    G &\gets S_n, \\
%    %\hspace{9pt}
%    Q &\gets E', \\
%    %\hspace{9pt
%    k &\gets \sizeof{V'} - 1.
%\end{align*}

%the original graph to
%$G = S_n$,
%the candidate edge set to $Q = E'$,
%and the number of edges to add to $k = \sizeof{V'} - 1$.
We then show in the following lemma that for two such instances whose $H[V']$s have respective
path cover number $1$ and $\Omega(\sizeof{V'})$,
the optimum numbers of spanning trees differ by a constant factor
in the exponent.

\begin{restatable}[]{lemma}{soundness}
	\label{soundness:lemma}
    Let $H=(V, E)$ be an unweighted graph equal to a star $S_n$ plus $H$'s subgraph 
    $H[V']=(V', E')$ supported on $S_n$'s leaves.
	For any constant $0<\delta<1$, there exists an absolute constant $c>0$ such that, if
	$H[V']=(V', E')$ does not have
	any path cover $\Pcal$ with $|\Pcal|<\delta n$, then
    \[
    \log \Tcal(S_n+\ses) \leq (1-c)\cdot \log \Tcal\kh{F_n}
\]
    holds for any $\ses\subseteq E'$ with $\sizeof{\ses}\leq n-1$.
    Here $F_n$ is a fan graph with $n-1$ triangles
    (i.e., a star $S_n$ plus a path supported on its leaves).
\end{restatable}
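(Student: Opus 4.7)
My plan is in three steps: compute $\Tcal(F_n)$ explicitly, transfer the path-cover hypothesis to the candidate edge set $\ses$, and upper bound $\Tcal(S_n + \ses)$ by a value smaller than $\Tcal(F_n)$ in the log-scale by a constant factor.

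First, Kirchhoff's matrix-tree theorem applied to $F_n$, whose Laplacian minor at the central vertex is $\II_{n-1} + \LL_{\mathrm{path}_{n-1}}$, yields $\Tcal(F_n) = F_{2n-2}$, the $(2n-2)$-th Fibonacci number, so $\log \Tcal(F_n) = 2(n-1)\log \phi + O(1)$ with $\phi = (1+\sqrt{5})/2$. Second, any path cover of the graph $(V', \ses)$ uses only edges in $\ses \subseteq E'$ and is hence a path cover of $H[V']$ as well; the hypothesis therefore gives $\pi((V', \ses)) \geq \pi(H[V']) \geq \delta n$ for every admissible $\ses$.

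For the third step, use $\Tcal(S_n + \ses) = \det(\II + \LL_\ses) = \prod_i \det(\II + \LL_{C_i})$, where $C_1, \ldots, C_j$ are the connected components of $\ses$ of sizes $s_i \geq 2$ with $e_i$ edges, plus $n_0$ isolated vertices (so $\sum s_i + n_0 = n-1$ and $\sum e_i \leq n-1$). The key per-component upper bound is $\log \det(\II + \LL_{C_i}) \leq 2 s_i \log \phi$ when $e_i \leq s_i$, which follows from the exact identity $\det(\II + \LL_{C_s}) = (\phi^s + \phi^{-s})^2 - 4 \leq \phi^{2s}$ for the $s$-cycle (obtained from the factorization $3 - 2\cos\theta = |\phi - e^{i\theta}/\phi|^2$ combined with $\prod_k(\phi^2 - e^{2\pi i k/s}) = \phi^{2s}-1$) together with a monotonicity argument showing the cycle dominates all connected graphs on $s$ vertices with at most $s$ edges. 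Denser components ($e_i > s_i$) use a disproportionate share of the edge budget $\sum e_i \leq n-1$, forcing their total vertex count to be small; they can be bounded separately via $\det(\II + \LL_C) \leq (s+1)^{s-1}$ for any graph on $s$ vertices.

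The main obstacle will be extracting an $\Omega(\delta n)$ deficit from $\pi(\ses) \geq \delta n$ via the identity $\pi(\ses) = n_0 + \sum_i \pi(C_i)$. I would split into two cases. If $n_0 \geq \delta n / 2$, then $\sum s_i \leq n - 1 - \delta n /2$ directly cuts the upper bound on $\log \det$ by at least $\delta n \log \phi$. Otherwise $\sum_i \pi(C_i) \geq \delta n / 2$, and a strengthened per-component bound of the form $\log \det(\II + \LL_{C_i}) \leq 2 s_i \log \phi - c_0 (\pi(C_i) - 1)$ for components without a Hamiltonian path (proved via a spectral argument exploiting that non-Hamiltonicity creates eigenvalue gaps below the cycle bound, e.g., the star gap $\phi^{2s}/(2^{s-2}(s+1))$ grows exponentially with $s$) provides the analogous $\Omega(\delta n)$ deficit. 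Either way, comparison with $\log \Tcal(F_n) = 2(n-1)\log\phi + O(1)$ yields $\log \Tcal(S_n + \ses) \leq (1 - c) \log \Tcal(F_n)$ for some $c = c(\delta) > 0$, completing the proof.
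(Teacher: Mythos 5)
Your overall architecture (decompose $(V',\ses)$ into connected components, bound each block determinant $\det(\II+\LL_{C_i})$ against $\phi^{2s_i}$, and extract an $\Omega(\delta n)$ log-deficit from the path-cover hypothesis) is a genuinely different route from the paper, which instead fixes a minimum path cover of $(V',\ses)$, writes $\Tcal(S_n+\ses)$ as the product of fan numbers over the paths times factors $1+\er(u_i,v_i)$ for the leftover edges, bounds each such effective resistance by at most $\sqrt{5}-1$ via Rayleigh monotonicity, and then uses Markov's inequality on the average path length to show that a constant fraction of the $\geq\delta n$ paths are short and each short path loses a constant factor. However, your plan has a concrete hole in the case analysis. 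When $n_0<\delta n/2$, you conclude $\sum_i\pi(C_i)\geq\delta n/2$ and invoke a strengthened bound with deficit $c_0\kh{\pi(C_i)-1}$ per component. But this sum can be large while $\sum_i\kh{\pi(C_i)-1}=0$: take $j\geq\delta n/2$ components each a short path or short cycle, so every $\pi(C_i)=1$. Your claimed deficit then vanishes entirely, yet these are precisely the hard instances. The actual loss there comes from the \emph{number} of components (each path component contributes $F_{2s_i}\approx\phi^{2s_i}/\sqrt{5}$ and each short cycle contributes $(\phi^{s}+\phi^{-s})^2-4<\phi^{2s}$ by a constant factor only because $s$ is bounded), i.e., from the components being short — which is exactly what the paper's Markov-inequality step captures and your decomposition does not.

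Two further steps are asserted rather than proved, and one is false as stated. First, the bound $\det(\II+\LL_{C})\leq\phi^{2s}$ for every connected $C$ with $e\leq s$ edges requires showing that the $s$-cycle maximizes $\Tcal(S_s+C)$ over all such $C$; Rayleigh monotonicity only compares a graph to its supergraphs, so it does not reduce a general unicyclic graph (a cycle with pendant trees) to the cycle, nor a tree to the cycle unless you already have the unicyclic case. Second, the treatment of components with $e_i>s_i$ is wrong: a single component with $s_1=n-2$ vertices and $e_1=n-1=s_1+1$ edges is consistent with the budget $\sizeof{\ses}\leq n-1$, so dense components need not have small vertex support, and the fallback $\det(\II+\LL_C)\leq(s+1)^{s-1}$ grows like $s^s$ and is far weaker than $\phi^{2s}$, so it cannot be charged against the fan. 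Finally, the ``spectral argument'' for the strengthened per-component inequality is where essentially all of the lemma's difficulty lives (it is the lemma restricted to connected underlying graphs), and a gesture at eigenvalue gaps for the star does not establish it. You would need to replace the component-wise spectral bounds with an edge-by-edge accounting — e.g., the paper's observation that each edge outside a path cover multiplies $\Tcal$ by at most $1+\er\leq\sqrt{5}$, which exactly cancels the $1/\sqrt{5}$ per-path factor and isolates the constant loss to the short paths guaranteed by Markov's inequality.
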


We remark that our reduction uses only simple graphs with all edge weights being 1.
Thus, Problem~\ref{prob:ecard} is exponentially inapproximable even for unweighted graphs
without self-loops and multi-edges.
}

\subsection{Ideas and Techniques}\label{sec:ideas}

%In the algorithm $\NSTMaximize$,
%to avoid searching for the edge with highest effective resistance,
%we maintain a geometrically decreasing threshold
%and sequentially pick any edge with effective resistance above the threshold.
%To 

\paragraph{Algorithms.}{
By the matrix determinant lemma~\cite{Har97},
the weighted number of spanning trees multiplies by
\begin{align*}
    1 + w(u,v) \er(u,v)
\end{align*}
upon the addition of edge $(u,v)$.
Then, the submodularity of $\log \Tcal(G)$ follows immediately
by Rayleigh's monotonicity law~\cite{Sto87}.
This indicates that
one can use a simple greedy algorithm~\cite{NWF78}
that
picks the edge with highest effective resistance iteratively for $k$ times
to achieve a $(1 - \frac{1}{e})$-approximation.
By computing effective resistances in nearly-linear time~\cite{SS11,DKP+17},
one can implement this greedy algorithm in $\Otil((m + (n + \cen) \eps^{-2})k)$ time
and obtain a $(1 - \frac{1}{e}-\eps)$-approximation.
%and picking the edge with highest effective resistance $k$ times,
%where $\cen$ is the number of edges in the candidate edge set.
To avoid searching for the edge with maximum effective resistance,
one can invoke another greedy algorithm proposed in~\cite{BV14},
which maintains a geometrically decreasing threshold
and sequentially picks any edge with effective resistance above the threshold.
However,
since the latter part of this greedy algorithm requires the recomputation of effective resistances
after each edge addition, it still needs $\Otil((m + \cen)k)$ running time.
Thus, our task reduces to performing the sequential updates faster.

We note that for a specific threshold,
the ordering in which we perform the sequential updates
does not affect our overall approximation.
Thus, by picking an arbitrary ordering of the edges in candidate set $\ces$,
we can  transform this seemingly  online task of processing edges sequentially into an online-offline hybrid setting.
%we can turn this task into a setting which is an interpolation
%\hl{Is this word used properly?}
%between online and offline.
% \sep{I'm not sure what you mean by interpolation. I think my confusion arises because I do not understand what you mean by ``offline'' and ``online''.
%To me, ``offline'' indicates some kind of pre-processing, and I'm not sure that makes sense in this context. Could you provide more detail about the intended meaning of offline here? Hopefully, with more information, I can answer your questions about the word.} \hl{We'll explain this in the email.}
While we do not know whether to add an edge
until the time we process it, we do know the order
in which the edges will be processed.
%We then gradually turn the online part into offline by performing 
We perform
divide-and-conquer on the edge sequence,
while alternately querying effective resistance and updating the graph.
The idea is that if we are dealing with a short interval of the edge sequence, instead of working with the entire graph,
we can work with a graph with size proportional to the length of the interval that preserves the effective resistances of the edges in the sequence.
%we could work with an equivalent graph \sep{equivalent graph to what? Is this graph equivalent to the input graph with respect to the effective resistances of the edges in the sequence?} \hl{Yes.} with size proportional to the length
%of that interval instead of the entire graph.
As we are querying effective resistances for candidate edges, 
%\sep{Could you explain what you mean by `querying'? Is this  computing the effective resistance for a candidate edge?} \hl{Yes.}
the equivalent graph for an interval can be obtained by
taking  the Schur complement
onto endpoints of the edges in it.
And, this can be done in nearly-linear time in the graph size
using the approximate
Schur complement routine in~\cite{DKP+17}.

Specifically, in the first step of divide-and-conquer,
we split the edge sequence $(f_i)_{i=1}^{\cen}$ into two halves
\[
    f^{(1)} \defeq f_1,\ldots, f_{\floor{\cen / 2}}
    \qquad
    \text{and}
    \qquad
    f^{(2)} \defeq f_{\floor{\cen / 2} + 1},\ldots,f_{\cen}.
\]
We note the following:
\begin{enumerate}
    \item Edge additions in $f^{(2)}$ do not affect effective
        resistance queries in $f^{(1)}$.
    \item An effective resistance query in $f^{(2)}$ is affected
        by
        \begin{enumerate}
            \item \label{2a} edge additions in $f^{(1)}$, and
            \item \label{2b} edge additions in $f^{(2)}$ which
                are performed before the specific query.
        \end{enumerate}
\end{enumerate}
Since $f^{(1)}$ is completely independent of $f^{(2)}$,
we can handle queries and updates in $f^{(1)}$ by performing recursion to
its Schur complement.
%by addressing the two parts of influence to its queries as follows.
We then note that edge additions
in $f^{(1)}$ are performed entirely before queries in $f^{(2)}$,
and thus can be seen as offline modifications to $f^{(2)}$.
Moreover, all queries in $f^{(2)}$ are affected by
the same set of modifications in $f^{(1)}$.
We thus address the total contribution of $f^{(1)}$ to $f^{(2)}$
%(i.e.~(\ref{2a}))
by computing Schur complement onto $f^{(2)}$ in the graph updated
by edge additions in $f^{(1)}$.
In doing so,
we have addressed~(\ref{2a}) for all queries in $f^{(2)}$,
and thus have made $f^{(2)}$ independent of $f^{(1)}$.
%and thus made
%queries in $f^{(2)}$ only affected by~(\ref{2b}).
%Since~(\ref{2b}) is independent of $f^{(1)}$,
%we make $f^{(2)}$ also completely independent of $f^{(1)}$,
%which follows by that~(\ref{2b}) is entirely within $f^{(2)}$.
This indicates that we can process 
%\sep{by handle, do you mean compute effective resistances?} 
%\hl{compute their effective resistances, and update the graph
 %(add the edge) if the effective resistance is larger than
 % some threshold.} 
$f^{(2)}$ by also performing recursion to its Schur complement.
We keep recursing until the interval only contains one edge,
where we directly query the edge's effective resistance
and decide whether to add it to the graph.
Essentially, our algorithm computes the effective resistance of each
edge with an elimination of the entire rest of the graph, while
heavily re-using previous eliminations.
This gives a
nearly-linear time routine for performing sequential updates.
Details for this routine can be found in Section~\ref{sec:seq}.

\paragraph{Hardness.}{
    A key step in our reduction is to show
    the connection between the minimum path cover
    and Problem~\ref{prob:ecard}.
    To this end,
    we consider an instance of Problem~\ref{prob:ecard}
    in which $G$ is a star graph $S_n$ with $n$ leaves,
    the candidate edge set $\ces$ forms an underlying graph supported on $S_n$'s leaves,
    and the number of edges to add equals $k = n - 1$.
    We show that for two instances
    whose underlying graphs have respective path cover number
    $1$ and $\Omega(n)$,
    their optimum numbers of spanning trees differ exponentially.

    Consider any set $\ses$ that consists of $n - 1$ edges from $\ces$,
    and any path cover $\Pcal=\{\pth_1, \pth_2,\dots, \pth_{t} \}$ of the underlying graph
    using only edges in $\ses$.
    Clearly $t$ is greater than or equal to
    the minimum path cover number of the underlying graph.
    If $\ses$ forms a Hamiltonian path $\Pcal^\ast$ in the underlying graph,
    $\Tcal(S_n + P)$ can be explicitly calculated~\cite{MEM14}
    and equals
    \begin{align}\label{eq:fannst}
        \Tcal(S_n + P) = \Tcal(S_n + \Pcal^\ast)=
        \frac{1}{\sqrt{5}}\kh{\kh{\frac{3+\sqrt{5}}{2}}^{n} -\kh{\frac{3-\sqrt{5}}{2}}^{n}}.
    \end{align}
    
    When the path cover number of the underlying graph is more than $1$,
    $\Tcal(S_n+P)$ can be expressed by a product of a sequence of effective resistances and
    $\Tcal(G+\Pcal)$. Specifically,
    for an arbitrary ordering
    $(u_i,v_i)_{i=1}^{t - 1}$ of edges in
    $\ses$ but not in the path cover $\Pcal$,
    %$\ses \setminus \Pcal$,
    we define a graph sequence $G^{(1)},\ldots,G^{(t)}$
    by
    \begin{align*}
        & G^{(1)} \hspace{10pt} \defeq G + \Pcal, \\
        & G^{(i+1)} \defeq G^{(i)} + (u_i, v_i)\hspace{5pt} \text{for $i=1,\ldots,t-1.$}
    \end{align*}
    By the matrix determinant lemma,
    we can write the number of spanning trees in $G^{(t)}$ as
    \begin{align}\label{eq:nstprod}
        \Tcal\kh{G^{(t)}} =
        \Tcal\kh{G^{(1)}} \cdot
        \prod_{i=1}^{t - 1} \kh{1 + \er^{G^{(i)}}(u_i,v_i)}.
    \end{align}
    Note that we omit edge weights here since we are dealing with
    unweighted graphs.
    
    Let %$\Pcal=\{\pth_1, \pth_2,\dots, \pth_{t} \}$,
    $l_i \defeq \sizeof{\pth_{i}}$ be the number of edges in path $\pth_{i}$.
    Since all paths $\pth_i\in \Pcal$ are disjoint,
    $\Tcal\kh{G^{(1)}}$ can be expressed as
    \begin{align}
    	\Tcal\kh{G^{(1)}}
        = &\prod_{i=1}^{t} \Tcal\kh{S_{l_i+1} + \pth_i} \notag \\
        = &\prod_{i=1}^{t}\kh{\frac{1}{\sqrt{5}}
        \kh{\kh{\frac{3+\sqrt{5}}{2}}^{l_i+1} -\kh{\frac{3-\sqrt{5}}{2}}^{l_i+1}} },
        \label{eq:nstg1}
    \end{align}
    where
    $S_{l_i+1}$ denotes the star graph with $l_i + 1$ leaves,
    and the second equality follows from~(\ref{eq:fannst}).

    When the path cover number of the underlying graph is at least $\Omega(n)$,
    we show that the number of spanning trees in $S_n + P$ is exponentially smaller
    than $\Tcal(S_n + \Pcal^\ast)$.
    %for two reasons:
    %\begin{enumerate}
    %    \item \label{rs1} $\Omega(n)$ edges in $\ses$ but not in $\Pcal$;
    %    \item \label{rs2} $\Omega(n)$ paths in $\Pcal$ with $O(1)$ lengths,
    %        whose existence follows from Markov's inequality.
    %\end{enumerate}
    %\yhy{Still need to polish}
    Let $\Pcal_1$ denote the set of paths in $\Pcal$ with $O(1)$ lengths.
    Let $t_1$ denote the number of paths in $\Pcal_1$.
    %$\gamma_1$ denote the total path lengths in $\Pcal_1$, and $n_1$ denote 
    %the total number of vertices covered in $\Pcal_1$.
    Then, %\ref{rs2} makes 
    $\Tcal(S_n + P)$ is exponentially smaller
    due to the following reasons. First, by~(\ref{eq:fannst}) and~(\ref{eq:nstg1}), $\Tcal(G^{(1)})$
    is less than 
    %the number of spanning trees in
    $\Tcal(S_n + \Pcal^\ast)$ by at least a multiplicative factor of
    $(\sqrt{5})^{t-1}\kh{1+\theta}^{t_1}$ for some constant $\theta>0$.
    Second, the effective resistances between endpoints of the $t-1$ edges in $\ses$ but not in
    the path cover $\Pcal$
    are less than $\sqrt{5} - 1$
    and hence cannot compensate for the $(\sqrt{5})^{t-1}$ factor.
    Third, $t_1$ is at least $\Omega(n)$ by Markov's inequality,
    which ensures that the factor $(1 + \theta)^{t_1}$ is exponential.
    %this gap cannot be 
    %compensated by adding $n_1-l_1-1$ edges that are not in $\Pcal$.
    %Here $\theta>0$ is a constant.
    %in a fan graph with the same edge number.
    %which can be verified using the formula in~\cite{MEM14}.
    %To give the intuition behind~\ref{rs1},
%    
%    
%    This can be shown by the following analyses.
%    We write $\Tcal(S_n + \ses)$ in a product of effective resistances.
%    Specifically,
%    for an arbitrary ordering
%    $(u_i,v_i)_{i=1}^{\sizeof{\Pcal} - 1}$ of edges in $\ses \setminus \Pcal$,
%    we define a graph sequence $G^{(1)},\ldots,G^{(\sizeof{\Pcal})}$
%    by
%    \begin{align*}
%        & G^{(1)} \hspace{10pt} \defeq G + \Pcal, \\
%        & G^{(i+1)} \defeq G^{(i)} + (u_i, v_i)\hspace{5pt} \text{for $i=1,\ldots,\sizeof{\Pcal}-1.$}
%    \end{align*}
%    By matrix determinant lemma,
%    we can write the number of spanning trees in $G^{(\sizeof{\Pcal})}$ as
%    \begin{align}\label{eq:nstprod}
%        \Tcal\kh{G^{(\sizeof{\Pcal})}} =
%        \Tcal\kh{G^{(1)}} \cdot
%        \prod_{i=1}^{\sizeof{\Pcal} - 1} \kh{1 + \er^{G^{(i)}}(u_i,v_i)}.
%    \end{align}
%    Note that we omit edge weights here since we are dealing with
%    unweighted graphs.
%    Then, by upper bounding effective resistances
%    for edges not in $\Pcal$,
%    %between vertices in different paths of $\Pcal$,
%    we show that $\Omega(n)$ factors in~(\ref{eq:nstprod}) are strictly
%    bounded away from the factor $\sqrt{5}$ in~(\ref{eq:fannst}).
%    %\yhy{Is this explanation convincing?}
    This leads to the exponential drop of
    $\Tcal(S_n + \ses)$.
    We defer our proof details to Section~\ref{hardness::sec}.

    %Intuitively, this relation can be illustrated by considering two extreme
    %examples: a fan graph $F_n$ with $n - 1$ triangles
    %(i.e., a star $S_{n}$ plus a path supported on its leaves), and a graph
    %equal to a star $S_n$ plus another star $S_{n-1}$ supported on its leaves.
    %We draw these two graphs for the case $n = 4$ in Figure\todo{}.
    %It is clear that the respective minimum path cover number of the two
    %underlying graphs
    %is $1$ and $n - 2$.
    %On the other hand,
    %the respective number of spanning trees in the two graphs
    %is
    %\begin{align}\label{eq:fannst}
    %\frac{1}{\sqrt{5}}\kh{\kh{\frac{3+\sqrt{5}}{2}}^{n} -\kh{\frac{3-\sqrt{5}}{2}}^{n}}
    %\end{align}
    %and $(n + 1) 2^{n-2}$.
    %Here the first number of spanning trees follows from~\cite{MEM14},
    %and the second follows by a direct calculation using matrix tree theorem.
    %One can observe a exponential difference between 
}

\subsection{Related Work}\label{sec:related}

%\paragraph{Applications}{
%    \begin{enumerate}
%        \item Robotics: the number of spanning trees
%            measures the tree-connectivity in sensor
%            networks~\cite{KHD16,KSHD16a,KSHD16b}
%        \item Consensus (control):
%            the number of spanning trees measures the robustness of
%            consensus algorithms\cite{SM14,dBCM15},
%            Section IV of~\cite{ZEP11}.
%        \item Use of determinant as performance measure in linear dynamical systems network~\cite{SM18,SCL16}
%        \item Reliability of networks: the number of spanning trees
%            measures the reliability of a network~\cite{FL01}
%    \end{enumerate}
%}

\paragraph{Maximizing the Number of Spanning Trees.}{
   There has been limited previous algorithmic study of the problem of maximizing the number of spanning trees in a graph.

    Problem~\ref{prob:ecard} was also studied in~\cite{KSHD16b}.
    This work proposed a greedy algorithm which,
    by computing effective resistances exactly,
    achieves an approximation factor of $(1 - \frac{1}{e})$
    in the exponent of the number of spanning trees
    in $O(n^3)$ time.
    As far as we are aware, the hardness of Problem~\ref{prob:ecard} has not been studied in any previous work.
    
    A related problem of,
    given $n$ and $m$,
    identifying graphs
    with $n$ vertices and $m$ edges
    that have  the maximum number of spanning trees has been studied.
    However, most solutions found to this problem are
    for either sparse graphs with $m = O(n)$ edges~\cite{BLXS91,W94},
    or dense graphs with $m = n^2 - O(n)$ edges~\cite{S74,K96,GM97,PBS98,P02}.
    Of particular note, a regular complete multipartite graph has been shown to have the
    maximum number of spanning trees from among all simple graphs with the same number
    of vertices and edges~\cite{C81}.
    %\cite{K96}
}

\paragraph{Maximizing Determinants.}{
    Maximizing the determinant of a positive semidefinite
    matrix (PSD) is a problem that has been extensively studied
    in the theory of computation.
    For selecting a principle minor of a PSD with the maximum determinant
    under certain constraints,
    ~\cite{Kha95,CM09,SEFM15,NS16,ESV17} gave algorithms for approximating the optimum
    solution.
    \cite{SEFM15,Nik15} also studied another related problem
    of finding a $k$-dimensional simplex of maximum volume inside a given convex hull,
    which can be reduced to the former problem under cardinality constraint.
    For finding the principal $k\times k$ submatrix of
    a positive semidefinite matrix with the largest determinant,
    \cite{Nik15} gave an algorithm that
    obtains an 
    approximation of $e^{-(k + o(k))}$.
    On the hardness side,
    all these
    problems have been showed to be exponentially inapproximable~\cite{Kou06,CM13,SEFM15}.
    %Comparing to these works,
    %all matrices considered in our problem are symmetric, diagonally dominant,
    %M-matrices (SDDM), which is a subclass of PSDs.
    %This implies that
    %we can approximate the optimum solution faster using graph-specific algorithms,
    %and that the hardness of our problem does not follow from those for PSDs.

    %Among these works,
    %the problem studied in~\cite{CM09,CM13} can be described
    %as the following column selection problem:
    %Given a matrix $A_{d\times n}$ and an integer $k$,
    %select a subset $J \subseteq \setof{1,2,\dots,n}$ with $\sizeof{J} = k$
    %such that $\det(A_J^T A_J)$ is maximized.
    %Our problem, on the other hand, can be seen as a row selection problem:
    %since one can write a Laplacian as $ \LL = (\WW^{1/2} \BB)^T \WW^{1/2} \BB $,
    %the goal is to select $k$ rows of $\WW^{1/2}\BB$ so that the determinant is maximized.
    %Here $\BB_{m\times n}$ is the signed edge-vertex incidence matrix 
    %and $\WW_{m\times m}$ is the diagonal edge-weight matrix. 

    The problem studied in~\cite{Nik15} can also be stated as the following:
    \begin{quote}
        {\em 
        Given $m$ vectors $\xx_1,\ldots,\xx_m \in \mathbb{R}^n$ and
        an integer $k$, find a subset $S\subseteq [m]$ of cardinality $k$
        so that the product of the largest $l$ eigenvalues of the matrix
        $\SSi_S \defeq \sum\nolimits_{i\in S} \xx_i \xx_i^T$ is maximized
        where $l \defeq \min\{k, n\}$. $(\star)$
    }
    \end{quote}
    \cite{Nik15} gave a polynomial-time algorithm that
    obtains an $e^{-(k+o(k))}$-approximation
    when $k \leq n$.
    When $k \geq n$, Problem~$(\star)$ is equivalent to maximizing the determinant
    of $\SSi_S$ by selecting $k$ vectors.
    \cite{SX18} showed that one can obtain an $e^{-n}$-approximation
    for $k \geq n$. Moreover,
    they showed that given $k = \Omega(n/\eps + \log (1/\eps) / \eps^2)$,
    one can obtain a $(1 + \eps)^{-n}$-approximation.
    Using the algorithms in~\cite{Nik15,SX18}, we can obtain
    an $e^{-n}$-approximation to a problem
    of independent interest
    but different from Problem~\ref{prob:ecard}:
    Select at most $k$ edges from a candidate edge set
    to add to an empty graph so that the number of spanning trees
    is maximized.
    In contrast,
    in Problem~\ref{prob:ecard},
    we are seeking to add $k$ edges to a graph that is already connected.
    Thus, their algorithms cannot directly apply to Problem~\ref{prob:ecard}.
    %\cite{SX18} studies the problem of D-optimal experimental design.
    %In D-optimal experimental design,
    %one is given $n$ vectors $\xx_1, \ldots, \xx_n \in \mathbb{R}^p$,
    %and need to find a subset $S \subseteq [n]$ of cardinality at most $k$ ($k \geq p$)
    %so that the following quantity is minimized:
    %\begin{align*}
    %    \det\kh{ \sum\nolimits_{i\in S} \xx_i \xx_i^T }^{-1/p}.
    %\end{align*}
    %\cite{SX18} showed that one can obtain an $e$-approximation
    %in polynomial time.

    In~\cite{ALSW17a,ALSW17b},
    the authors also studied Problem~$(\star)$. % the problem of D-optimal experimental design.
    They gave an algorithm that, when $k = \Omega(n / \eps^2)$,
    gives a $(1 + \eps)^{-n}$-approximation.
    Their algorithm first computes a fractional solution
    using convex optimization
    and then rounds the fractional solution to integers
    using spectral sparsification.
    Since spectral approximation is preserved
    under edge additions,
    their algorithm can apply to Problem~\ref{prob:ecard}
    obtaining a $(1 + \eps)^{-n}$-approximation.
    %Using their algorithm,
    %we can obtain a $(1 + \eps)^{-n}$-appromxation to
    %Problem~\ref{prob:ecard}.
    However, their algorithm
    needs $k$ to be $\Omega(n/\eps^2)$, given
    that the candidate edge set
    is supported on $O(n)$ vertices
    (which is natural in real-world datasets~\cite{KGS+11,KHD16}).
    In contrast,
    $k$ could be arbitrarily smaller than $n$ in our setting.
    %(b) obtains an approximation to the determinant itself
    %instead of its increase, which is in contrast to our algorithm.
    %An approximation to the increase in the determinant
    %will give a better approximation
    %to the optimum solution
    %when the increase is not comparable to the original determinant
    %(probably due to a small $k$)
    %which is typical in applications~\cite{KSHD16a,KSHD16b}.
    %In addition, our approximation ratio is at least as good
    %as $c^{-n}$ for some $c > 0$ when the optimum increase
    %in the determinant is at most $e^{O(n)}$,
    %which is true for real-world networks that are usually sparse.

    We remark that both our setting
    of adding edges to a connected graph
    and the scenario that $k$ could be arbitrarily smaller than $n$
    have been used
    in previous works solving graph optimization problems such as maximizing the
    algebraic connectivity of a graph~\cite{KMST10,NXCGH10,GB06}.
    We also remark that the algorithms in~\cite{Nik15,SX18,ALSW17a,ALSW17b}
    all need to solve a convex optimization for their continuous relaxation,
    which runs in polynomial time in contrast to our nearly-linear running time,
    while the efficiency is crucial in applications~\cite{KHD16,KSHD16a,SCL16,SM18}.
    %One could use the algorithm of~\cite{Nik15} to
    %give an $e^{-(k+o(k))}$-approximation to Problem~\ref{prob:ecard}.
    %This approximation could be either better or worse than ours
    %depending on the optimum solution.
    %Similar to~\cite{ALSW17a,ALSW17b}, this algorithm
    %needs to solve a convex optimization to obtain an optimum
    %fractional solution, which is a bottleneck for its efficiency.

    We also note that~\cite{DPPR17} gave an algorithm that computes the determinant
    of an SDDM matrix to a $(1\pm\eps)$-error in $\Otil(n^2 \eps^{-2})$ time.
    In our algorithm, we are able to maximize the determinant
    in nearly-linear time without computing it.
    %As for hardness,
    %if we change the cardinality constraint of Problem~\ref{prob:ecard}
    %to a partition constraint, the problem will contain
    %Nash social welfare maximization problem as a special case,
    %and thus its exponential inapproximability will follow
    %from the hardness results in~\cite{Lee17}.
    %However,
    %we are not aware of any existing proof for the inapproximaility
%of Problem~\ref{prob:ecard} with a cardinality constraint.
    %\hl{One of the reviewers pointed out [Lee17] and the reduction when adding a
    %    partition constraint. Note that cardinality constraints are subsets of
    %    partition constraints.
    %So this reduction cannot prove hardness of Problem~\ref{prob:ecard}.}
}

\paragraph{Fast Computation of Effective Resistances.}{
    Fast computation of effective resistances
    has various applications in sparsification~\cite{SS11,ADK+16,LS17},
    sampling random spanning trees~\cite{DKP+17,DPPR17,Sch18},
    and solving linear systems~\cite{KLP16}.
    \cite{SS11,KLP16} gave approximation routines that,
    using Fast Laplacian Solvers~\cite{ST14,CKMPPRX14},
    compute effective resistances for all edges to $(1\pm \eps)$-errors
    in $\Otil(m\eps^{-2})$ time.
    \cite{CGP+18} presents an algorithm that computes the effective resistances
    of all edges
    to $(1 \pm \eps)$-errors in $O(m^{1 + o(1)} \eps^{-1.5})$ time.
    For computing effective resistances for a given set
    of vertex pairs,
    \cite{DKP+17} gave a routine that,
    using divide-and-conquer based on Schur complements approximation~\cite{KS16},
    computes the effective resistances between $\cen$ pairs of
    vertices to $(1\pm \eps)$-errors in $\Otil(m + (n + \cen)\eps^{-2})$ time.
    \cite{DKP+17} also used a divide-and-conquer to sample random spanning trees
    in dense graphs faster.
    For maintaining $(1 + \eps)$-approximations to all-pair effective resistances
    of a fully-dynamic graph,
    \cite{DGGP18} gave a data-structure with
    $\Otil(m^{4/5} \eps^{-4})$ expected amortized update and query time.
    In~\cite{DPPR17}, the authors combined the divide-and-conquer idea
    and their determinant-preserving sparsification to further accelerate
    random spanning tree sampling in dense graphs.
    A subset of the authors of this paper (Li and Zhang)
    recently~\cite{LZ18}
    used a divide-and-conquer approach to compute,
    for every edge $e$,
    the sum of effective resistances between all vertex pairs
    in the graph in which $e$ is deleted.
    Our routine for performing fast sequential updates in Section~\ref{sec:seq}
    is motivated by these divide-and-conquer methods
    and is able to cope with
    an online-offline hybrid setting.
    %interpolation \hl{Is this word used properly?}
    %between online and offline.
}

\paragraph{Top-$k$ Selections on Graphs.}{
    Conventional top-$k$ selections on graphs
    that rely on submodularity
    usually run in $\Omega(mk)$ time, where $m$ is the number of edges.
    Here, we give a few examples of them.

    In~\cite{BBCL14}, the authors studied the problem of maximizing
    the spread of influence through a social network.
    Specifically, they studied the problem of finding a set of
    $k$ initial seed vertices in a network so that,
    under the independent cascade model~\cite{KKT03} of network diffusion,
    the expected
    number of vertices reachable from the seeds is maximized.
    Using hypergraph sampling, the authors gave a greedy algorithm
    that achieves a $(1 - \frac{1}{e} - \eps)$-approximation in $O((m + n)k\eps^{-2}\log n)$
    time.

    \cite{Yos14,MTU16} studied the problem of finding
    a vertex set $S$ with maximum betweenness centrality
    subject to the constraint $\sizeof{S}\leq k$.
    %The betweenness centrality of a vertex set $S$ is defined as
    %\[
    %    \sum_{s,t\in V} \frac{\sigma_{s,t}(S)}{\sigma_{s,t}},
    %\]
    %where $\sigma_{s,t}$ is the number of $s$-$t$ shortest paths,
    %and $\sigma_{s,t}(S)$ is the number of $s$-$t$ shortest paths
    %that have at least one internal vertex in $S$.
    Both algorithms in~\cite{Yos14,MTU16}
    are based on sampling shortest paths.
    To obtain a $(1 - \frac{1}{e} - \eps)$-approximation,
    their algorithms need at least $\Omega(mn\eps^{-2})$ running time
    according to Theorem 2 of~\cite{MTU16}.
    Given the assumption that
    the maximum betweenness centrality among all sets of $k$ vertices
    is $\Theta(n^2)$, the algorithm in~\cite{MTU16}
    is able to obtain a solution with the same approximation ratio in $O((m+n)k\eps^{-2}\log n)$ time.

    A subset of the authors of this paper (Li, Yi, and Zhang)
    and Peng and Shan
    recently~\cite{LPS+18}
    studied the problem of finding a set $S$ of $k$
    vertices so that
    the quantity
    \[
        \sum_{u\in (V\setminus S)} \er(u,S)
    \]
    is minimized.
    %where $\er(u,S)$ denotes the effective resistance between vertex $u$
    %and vertex set $S$.
    Here $\er(u,S)$ equals,
    in the graph in which $S$ is identified as a new vertex,
    the effective resistance between $u$ and the new vertex.
    By computing marginal gains for all vertices
    in a way similar to the effective resistance estimation routine
    in~\cite{SS11},
    the authors achieved a $(1 - \frac{k}{k-1}\cdot \frac{1}{e} - \eps)$-approximation
    in $\Otil(mk\eps^{-2})$
    time.
    
    %\todo{add one more example}
    %In~\cite{KSG08},

    We remark that there are algorithms for maximizing submodular functions
    that use only nearly-linear evaluations of the objective function~\cite{BV14,EN17}.
    However,
    in many practical scenarios,
    evaluating the objective function
    or the marginal gain is expensive.
    Thus, directly applying those algorithms usually
    requires superlinear or even quadratic running time.
}

\section*{Acknowledgements}

We thank Richard Peng, He Sun, and Chandra Chekuri for very helpful discussions. %and comments on our manuscript.

%\subsection{Organization}

\section{Preliminaries}

%\subsection{Notations}
%
%We use normal lowercase letters like $a,b,c$ to denote scalars in $\mathbb{R}$,
%normal uppercase letters like $A, B, C$ to denote sets, bold lowercase letters like $\aa,\bb,\cc$ to denote vectors,
%and bold uppercase letters like $\AA, \BB, \CC$ to denote matrices.
%We write $\aa_{[i]}$ to denote the $i^{\mathrm{th}}$ entry of vector $\aa$
%and $\AA_{[i,j]}$ to denote entry $(i,j)$ of matrix $\AA$.
%We also write $\AA_{[i,:]}$ to denote the $i^{\mathrm{th}}$ row of $\AA$
%and $\AA_{[:,j]}$ to denote the $j^{\mathrm{th}}$ column of $\AA$.
%
%We write sets in matrix subscripts to denote submatrices. For example,
%$\AA_{[I,J]}$ denotes the submatrix of $\AA$ with row indices in $I$
%and column indices in $J$.
%To simplify notation,
%we also write $\AA_{-i}$ to denote the submatrix of $\AA$
%obtained by removing the $i^{\mathrm{th}}$ row and $i^{\mathrm{th}}$ column of $\AA$.
%For example, for an $n\times n$ matrix $\AA$,
%$\AA_{-n}$ denotes the submatrix $\AA_{[1:n-1,1:n-1]}$.
%
%Note that the precedence of matrix subscripts is
%the lowest. Thus, $\AA_{-n}^{-1}$ denotes the inverse of $\AA_{-n}$
%instead of a submatrix of $\AA^{-1}$.
%
%For two matrices $\AA$ and $\BB$, we write $\AA \preceq \BB$ to denote
%that $\BB - \AA$ is positive semidefinite, i.e.,
%$\xx^T \AA \xx \leq \xx^T \BB \xx$ holds for every real vector $\xx$.
%
%We use $\ee_i$ to denote the $i^{\mathrm{th}}$ standard basis vector of appropriate dimension,
%and $\one_{S}$ to denote the indicator vector of $S$.

\subsection{Graphs, Laplacians, and Effective Resistances}

Let $G = (V, E)$ be a positively weighted undirected
graph. $V$ and $E$ is respectively the vertex set and the edge set of 
the graph, and $w: E\to \mathbb{R}^+$ is the weight 
function. Let $|V|=n$ and $|E|=m$. The Laplacian matrix $\LL$ of $G$ is given by
\begin{align*}
	\LL^G_{[u,v]} =
	\begin{cases}
		-w(u,v) & \text{if $u\sim v$}, \\
		\mathrm{deg}(u) & \text{if $u=v$}, \\
		0 & \text{otherwise},
	\end{cases}
\end{align*}
where $\mathrm{deg}(u) \defeq \sum\nolimits_{u\sim v} w(u,v)$,
and we write $u\sim v$ iff $(u,v) \in E$. We will use $\LL$ and $\LL^{G}$
interchangeably when the context is clear.
%Let $w_{\max}$ be the maximum weight
%and $w_{\min}$ be the minimum weight among all edges.

If we assign an arbitrary orientation to each edge of $G$,
we obtain a signed edge-vertex incident matrix $\BB_{m\times n}$ of graph $G$
defined as
\[
    \BB_{[e,u]} = \begin{cases}
        1 & \text{if $u$ is $e$'s head}, \\
        -1 & \text{if $u$ is $e$'s tail}, \\
        0 & \text{otherwise}.
    \end{cases}
\]
Let $\WW$ be an $m\times m$ diagonal matrix in which $\WW_{[e,e]} = w(e)$.
Then we can express $\LL$ as $\LL=\BB^T\WW\BB$.
It follows that a quadratic form of $\LL$ can be written as
\[
    \xx^T \LL \xx = \sum\nolimits_{u\sim v} w(u,v) (\xx_{[u]} - \xx_{[v]})^2.
\]
%\todo{write $\LL = \sum_{e\in E} w(e) \bb_e \bb_e^T$}
It is then observed that $\LL$ is positive semidefinite,
and $\LL$ only has one zero eigenvalue if $G$ is a connected graph.  
If we let $\bb_e$ be the $e^{\mathrm{th}}$ column of $\BB^T$,
we can then write $\LL$ in a sum of rank-1 matrices as
$\LL = \sum\nolimits_{e\in E} w(e) \bb_e \bb_e^T$.

The Laplacian matrix is related to the number of spanning
trees $\Tcal(G)$ by Kirchhoff's matrix-tree
theorem~\cite{Kir47}, which expresses $\Tcal(G)$ using
any $(n-1)\times (n-1)$ principle minors of $\LL$.
We denote by $\LL_{-u}$ the principle submatrix derived
from $\LL$ by removing the row and column
corresponding to vertex $u$. Since the removal of any
matrix leads to the same result, we will usually
remove the vertex with index $n$. Thus, we write the
Kirchhoff's matrix-tree theorem as
\begin{align}
\Tcal(G) = \det(\LL_{-n}^G)\,.
\end{align}

The effective resistance between any pair of vertices can be defined by 
a quadratic form of the Moore-Penrose inverse  $\LL^\dag$ of the Laplacian matrix~\cite{KR93}.
\begin{definition}%[Effective Resistance Between Two Vertices~\cite{KR93}]
	%Given a connected undirected graph $G = (V,E)$ with $n$ vertices, $m$ edges, and positive edge weights
	%$w : E \to \rea_{+}$.
	Given a connected graph $G = (V,E,w)$ with Laplacian matrix $\LL$,
	the effective resistance any two vertices $u$ and $v$ is defined as
	$
		\er(u,v) = \kh{\ee_u - \ee_v}^T \LL^\dag \kh{\ee_u - \ee_v}\,.
		%= \LL^\dag(u,u) - \LL^\dag(u,v) - \LL^\dag(v,u) + \LL^\dag(v,v).
	$
%	where $\ee_i$ denotes the $i^{\mathrm{th}}$ standard basis vector.
\end{definition}

%\iffalse
%The effective resistance can also be defined between a vertex and a vertex set.
%
%\begin{definition}[Effective Resistance Between a Vertex and a Vertex Set~\cite{CP11}]
%	For a connected graph $G = (V,E,w)$ with Laplacian matrix $\LL$,
%	the effective resistance between vertices $u$ and $S$ is defined as
%	\begin{align*}
%		\er^G(u,S) = \kh{\LL^{-1}_{-S}}_{[u,u]}.
%	\end{align*}
%\end{definition}
%
%\begin{remark}\label{rem:eer}
%$\er(u,S)$ can be interpreted as the electrical
%resistance when the graph is treated as a resistor network
%in which vertices in $S$ are grounded.
%Namely, if we treat every edge $e$
%as a resistor with resistance $r_e = 1 / w(e)$
%and ground all vertices in $S$,
%then $\er(u,S)$ equals the voltage of $u$
%when an electrical flow sends one unit of current into $u$ and
%removes one unit of current from $S$ (i.e. the ground).
%\end{remark}
%
%\fi

%\begin{definition}[Effective Resistance Between a Vertex and a Vertex Set]\label{def:ervs}
%	For a connected graph $G = (V,E,w)$ with Laplacian matrix $\LL$,
%	the effective resistance between a vertex $u$ and a vertex set $S$ such that $u\notin S$ is defined as
%	\begin{align*}
%		\er^G(u,S) = \kh{\LL_{-S}}^{-1}(u,u).
%	\end{align*}
%\end{definition}

For two matrices $\AA$ and $\BB$, we write
$\AA \pleq \BB$ to denote $\xx^T \AA \xx \leq \xx^T \BB \xx$
for all vectors $\xx$.
If for two connected graph $G$ and $H$ their Laplacians
satisfy $\LL^G \pleq \LL^H$,
then $\kh{\LL^H}^\dag \pleq \kh{\LL^G}^\dag$.

\subsection{Submodular Functions}

We next give the definitions for monotone and
submodular set functions.
For conciseness we use $S + u$ to denote $S\union\setof{u}$.
%and $S - u$ to denote $S\setminus \setof{u}$.

\begin{definition}[Monotonicity]
A set function $f : 2^V \to \mathbb{R}$ is
monotone if
$
	f(S)\leq f(T)
$
holds for all $S\subseteq T \subseteq V$.
\end{definition}

\begin{definition}[Submodularity]
A set function $f : 2^V \to \mathbb{R}$ is
submodular if
$
	f(S + u) - f(S)  \geq f(T + u) - f(T)
$
holds for
all $S\subseteq T \subseteq V$ and $u\in V\setminus T$.
\end{definition}

\subsection{Schur Complements}

Let $V_1$ and $V_2$ be a partition of vertex set $V$,
which means $V_2= V\setminus V_1$.
Then, we decompose the Laplacian into
blocks using $V_1$ and $V_2$ as the block indices:
\[
\LL =
\begin{pmatrix}
\LL_{[V_1, V_1]} & \LL_{[V_1, V_2]}\\
\LL_{[V_2, V_1]} & \LL_{[V_2, V_2]}
\end{pmatrix}.
\]
The Schur complement of $G$, or $\LL$, onto $V_1$ is defined as:
\[
SC(G,V_1)
= SC(\LL^{G}, V_1)
\defeq  \LL^{G}_{[V_1, V_1]}
-  \LL^{G}_{[V_1, V_2]} \left( \LL^{G}_{[V_2, V_2]} \right)^{-1}
	\LL^{G}_{[V_2, V_1]},
\]
and we will use $SC(G,V_1)$ and $SC(\LL^{G},V_1)$
interchangeably.
%We further note that when the context
%is clear, we always consider $V_1$ to be the vertex 
%set we Schur complement onto, and $V_2$ to be the vertex set we eliminate.

%Schur complements behave nicely with respect to determinants determinants, which suggests the general structure of the recursion we will use for estimating the determinant.
%\begin{fact}\label{fact:detMinor}
%For any matrix $\MM$ where $\MM_{[V_2, V_2]}$ is invertible,
%\[
%\det{(\MM_{-n})}
%= \det{\left(\MM_{\left[V_2, V_2\right]}\right)}
%\cdot \detp{\left(\sc{\MM}{V_1}\right)}.
%\]
%\end{fact}
%This relationship also suggests that
%there should exist a bijection between spanning tree distribution in $G$ and the product distribution given by sampling
%spanning trees independently from
%$\sc{\LL}{V_1}$ and the graph Laplacian formed by adding one
%row/column to $\LL_{[V_2, V_2]}$.

%Finally, our algorithms for approximating Schur
%complements rely on the fact that they preserve certain marginal probabilities. The algorithms of ~\cite{ColbournDN89,ColbournMN96,HarveyX16,DurfeeKPRS16} also use variants of some of these facts, which are closely related to the preservation of the spanning tree distribution on $\sc{\LL}{V_1}$. (See 
%Section~\ref{sec:spanning_tree} for details.)
The Schur complement preserves the effective resistance between vertices $u,v\in V_1$.
\begin{fact}
	\label{fact:SchurResistance}
	Let $V_1$ be a subset of vertices of a graph $G$. Then for any vertices
	$u, v \in V_1$, we have:
	\[
	\er^{G}\left(u, v\right)
	= \er^{SC(G,V_1)}\left(u, v\right).
	\]
\end{fact}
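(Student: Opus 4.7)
The plan is to reduce the claim to a standard block-elimination argument: show that the system $\LL \xx = \ee_u - \ee_v$, viewed as a block system with respect to the partition $(V_1, V_2)$, collapses under elimination of the $V_2$-block to the system $SC(\LL, V_1)\, \xx_1 = (\ee_u - \ee_v)|_{V_1}$, after which the equality of the two quadratic forms will fall out.

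More precisely, first I would write $\LL$ in blocks indexed by $(V_1, V_2)$ and decompose $\xx \defeq \LL^\dagger (\ee_u - \ee_v)$ as $(\xx_1, \xx_2)$ according to this partition. Using $\mathbf{1}^T \LL^\dagger = \mathbf{0}$, the vector $\xx$ satisfies $\LL \xx = \ee_u - \ee_v$ together with $\mathbf{1}^T \xx = 0$. The second block row reads $\LL_{[V_2,V_1]} \xx_1 + \LL_{[V_2,V_2]} \xx_2 = \mathbf{0}$; here I would invoke the standard fact that $\LL_{[V_2, V_2]}$, being a principal submatrix of a connected graph's Laplacian obtained by removing at least one row and column (assuming $V_1 \neq \emptyset$, else the statement is vacuous), is positive definite and hence invertible. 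Solving gives $\xx_2 = -\LL_{[V_2,V_2]}^{-1} \LL_{[V_2,V_1]} \xx_1$, and substituting into the first block row yields
\[
    SC(\LL, V_1)\, \xx_1 \;=\; (\ee_u - \ee_v)\big|_{V_1},
\]
where I am identifying $\ee_u - \ee_v$ with its restriction to $V_1$ since $u,v \in V_1$.

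Next I would handle the pseudo-inverse carefully. The Schur complement $SC(\LL, V_1)$ is itself the Laplacian of a weighted graph on $V_1$ (a well-known fact from Gaussian elimination on Laplacians), so it is positive semidefinite with kernel exactly spanned by $\mathbf{1}_{V_1}$. Since $(\ee_u - \ee_v)$ is orthogonal to $\mathbf{1}_{V_1}$, the above system has a unique solution in $\mathrm{span}(\mathbf{1}_{V_1})^{\perp}$, namely $[SC(\LL, V_1)]^\dagger (\ee_u - \ee_v)$. I then need to argue that the vector $\xx_1$ produced by block elimination actually lies in $\mathrm{span}(\mathbf{1}_{V_1})^\perp$: this would follow either by checking directly that the transformation $\xx \mapsto \xx_1$ preserves the orthogonality to the all-ones vector when $\xx_2$ is the eliminated piece (since adding a constant to $\xx_1$ and the corresponding adjustment to $\xx_2$ leaves $\LL \xx$ unchanged), or more cleanly by picking the unique solution in the required subspace and observing uniqueness.

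Finally I would compute
\[
    \er^G(u,v) \;=\; (\ee_u - \ee_v)^T \LL^\dagger (\ee_u - \ee_v) \;=\; (\ee_u - \ee_v)^T \xx \;=\; (\ee_u - \ee_v)^T \xx_1,
\]
where the last equality uses that $\ee_u - \ee_v$ is supported in $V_1$. Plugging in $\xx_1 = [SC(\LL, V_1)]^\dagger (\ee_u - \ee_v)$ gives $\er^{SC(G, V_1)}(u,v)$, completing the proof. The only delicate step is the pseudo-inverse bookkeeping at the end; everything else is standard block algebra, and the invertibility of $\LL_{[V_2, V_2]}$ ensures the elimination is valid without further fuss.
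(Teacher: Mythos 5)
The paper states this as a \emph{Fact} and gives no proof at all---it is treated as a standard property of Schur complements (it is implicit in, e.g., the effective-resistance machinery of the cited work of Durfee et al.)---so there is no in-paper argument to compare against. Your block-elimination proof is the standard one and is essentially correct: eliminating the $V_2$-block of $\LL \xx = \ee_u - \ee_v$ is valid because $\LL_{[V_2,V_2]}$ is nonsingular, and the reduced system is exactly $SC(\LL,V_1)\,\xx_1 = (\ee_u-\ee_v)|_{V_1}$.

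One intermediate claim you make is false, though the proof survives without it. You assert that $\xx_1$ lies in $\mathrm{span}(\mathbf{1}_{V_1})^{\perp}$; in general it does not. The condition $\mathbf{1}^T\xx = 0$ only gives $\mathbf{1}_{V_1}^T\xx_1 + \mathbf{1}_{V_2}^T\xx_2 = 0$, not that each block is separately balanced. (Concretely, for the path $a\!-\!b\!-\!c$ with $V_1=\{a,b\}$ and demand $\ee_a-\ee_b$, the potentials are $(2/3,-1/3,-1/3)$, so $\mathbf{1}_{V_1}^T\xx_1 = 1/3 \neq 0$.) The correct way to finish is the one you already have the ingredients for: any solution of $SC(\LL,V_1)\,\yy = (\ee_u-\ee_v)|_{V_1}$ differs from $[SC(\LL,V_1)]^{\dagger}(\ee_u-\ee_v)|_{V_1}$ by a multiple of $\mathbf{1}_{V_1}$, and since $(\ee_u-\ee_v)|_{V_1}$ is orthogonal to $\mathbf{1}_{V_1}$, the pairing $(\ee_u-\ee_v)|_{V_1}^T\xx_1$ is insensitive to that multiple and equals the Schur-complement effective resistance. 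With that substitution your argument is complete.
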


\section{Nearly-Linear Time Approximation Algorithm}

%We use notation $-e$ to denote deletion of edge $e$ and $+e$ to denote addition of edge $e$.
By the matrix determinant lemma~\cite{Har97},
we have
\begin{align*}
	\det\kh{\kh{\LL + w(u,v)\bb_{u,v}\bb_{u,v}^T}_{-n}} 
	=
	\kh{1 + w(u,v)\bb_{u,v}^T \LL^\dag \bb_{u,v}}
	\det\kh{\LL_{-n}}.
	%\er(u,v) \defeq \cchi_{u,v}^T \LL^\dag \cchi_{u,v} = \begin{cases}
	%	\frac{\Tcal(G) - \Tcal(G-(u,v))}{\Tcal(G)}, & (u,v) \in E \\
	%	\frac{\Tcal(G + (u,v)) - \Tcal(G)}{\Tcal(G)}, & (u,v) \in E
	%\end{cases}.
\end{align*}
Thus, by Kirchhoff's matrix tree theorem,
we can write the increase of $\log \Tcal(G)$ upon the addition of edge $(u,v)$ as
\begin{align*}
	\log \Tcal\kh{G + (u,v)} - \log \Tcal(G) =
    \log \kh{1 + w(u,v)\er^G(u,v)},
\end{align*}
which immediately implies $\log \Tcal(G)$'s submodularity
by Rayleigh's monotonicity law~\cite{Sto87}.

\begin{lemma}
    $\log \Tcal(G + \ses)$ is a monotone submodular function.
\end{lemma}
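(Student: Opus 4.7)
The plan is to use the single-edge identity already derived in the excerpt,
\[
    \log \Tcal(G + (u,v)) - \log \Tcal(G) = \log\kh{1 + w(u,v) \er^G(u,v)},
\]
as the only tool, and to reduce both monotonicity and submodularity to the behavior of effective resistance under subgraph inclusion (Rayleigh's monotonicity law).

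For monotonicity, I would pick any $\ses \subseteq \TT \subseteq Q$ and write $\TT \setminus \ses = \{e_1,\ldots,e_r\}$ in an arbitrary order. Telescoping the identity along the sequence of graphs $G + \ses \subseteq G + \ses + e_1 \subseteq \cdots \subseteq G + \TT$ gives
\[
    \log \Tcal(G + \TT) - \log \Tcal(G + \ses) = \sum_{i=1}^{r} \log\!\kh{1 + w(e_i) \er^{G + \ses + \{e_1,\ldots,e_{i-1}\}}(e_i)}.
\]
Because $w > 0$ and every effective resistance is nonnegative, each summand is nonnegative, so $\log \Tcal(G + \ses) \leq \log \Tcal(G + \TT)$.

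For submodularity, I fix $\ses \subseteq \TT \subseteq Q$ and $e = (u,v) \in Q \setminus \TT$. Applying the single-edge identity to both $G + \ses$ and $G + \TT$ gives
\[
    \log \Tcal(G + \ses + e) - \log \Tcal(G + \ses) = \log\!\kh{1 + w(e) \er^{G + \ses}(u,v)},
\]
\[
    \log \Tcal(G + \TT + e) - \log \Tcal(G + \TT) = \log\!\kh{1 + w(e) \er^{G + \TT}(u,v)}.
\]
Since $\ses \subseteq \TT$, we have $\LL^{G + \ses} \pleq \LL^{G + \TT}$, hence $\kh{\LL^{G + \TT}}^\dag \pleq \kh{\LL^{G + \ses}}^\dag$ (as recorded in the Preliminaries), and therefore $\er^{G + \TT}(u,v) \leq \er^{G + \ses}(u,v)$ by the quadratic-form definition of effective resistance. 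Using that $\log(1 + w(e) \cdot x)$ is a monotone increasing function of $x \geq 0$, the marginal gain under $\ses$ is at least the marginal gain under $\TT$, which is exactly the submodular inequality.

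There is no real obstacle here: every ingredient — the matrix determinant lemma identity, Rayleigh's monotonicity, and the fact that $\log(1 + x)$ is monotone — has already been used or recorded earlier in the paper, and the proof is a direct concatenation of these three facts. The only thing to be careful about is treating $\ses$ as a (multi)set of edges added to $G$, so that the telescoping argument is well-defined even when the candidate edges share endpoints with each other or with edges of $G$; this causes no issue because the single-edge identity is proved for an arbitrary Laplacian $\LL$ and arbitrary positively weighted edge $(u,v)$.
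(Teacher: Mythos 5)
Your proof is correct and follows exactly the route the paper intends: the paper leaves this lemma unproved, asserting that submodularity "follows immediately by Rayleigh's monotonicity law" from the single-edge increment $\log(1 + w(u,v)\er^G(u,v))$, and your telescoping argument for monotonicity plus the Loewner-order comparison of effective resistances for submodularity is precisely the intended filling-in of that assertion.
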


Thus, one can obtain a $(1 - \frac{1}{e})$-approximation for Problem~\ref{prob:ecard} by
a simple greedy algorithm that, in each of $k$ iterations, selects 
 the edge that results in the largest   effective resistance times edge weight~\cite{NWF78}.

Our algorithm is based on another greedy algorithm for maximizing
a submodular function, which is proposed in~\cite{BV14}.
Instead of selecting the edge with highest effective resistance
in each iteration,
the algorithm maintains a geometrically decreasing threshold and
sequentially selects any edge with effective resistance above
the threshold.
The idea behind this greedy algorithm is that one can
always pick an edge with highest effective resistance up to a $(1\pm\eps)$-error.
In doing so, the algorithm is able to obtain a $(1-\frac{1}{e}-\eps)$-approximation
using nearly-linear marginal value evaluations.
We give this algorithm in Algorithm~\ref{alg:GreedyTh}.

\begin{algorithm}
    \caption{$P = \GreedyTh(G, \ces, w, \eps, k)$}
	\label{alg:GreedyTh}
	\Input{
        $G = (V,E)$: A connected graph. \\
        $\ces$: A candidate edge set with $\sizeof{\ces} = \cen$. \\
        $w : (E\union \ces) \to \mathbb{R}^+$: An edge weight function. \\
        $\eps$: An error parameter. \\
        $k$: Number of edges to add.
	}
	\Output{
        $\ses$ : A subset of $\ces$ with at most $k$ edges.
	}
    $\ses \gets \emptyset$\;
    $er_{max} \gets \max\nolimits_{(u,v)\in \ces} w(u,v)\cdot \er^G(u,v)$ \label{line:ermax}\;
    $th \gets er_{max}$\;
    \While{$th \geq \frac{\eps}{\cen} er_{max}$}{
         \ForAll{$(u,v) \in \ces\setminus \ses$}{\label{line:for1}
             \If{$\sizeof{\ses} < k$ {\rm and $w(u,v)\cdot \er^G(u,v) \geq th$}}{
                $G\gets G + (u,v)$\;
                $\ses \gets \ses \union \setof{(u,v)}$
                \label{line:for2}
            }
        }
        $th \gets (1 - \eps) th$
    }
    \KwRet{\ses}
\end{algorithm}

The performance of algorithm $\GreedyTh$ is characterized in the following theorem.
%according to~\cite{BV14}.

\begin{theorem}%[\cite{BV14}]
    The algorithm $\ses = \GreedyTh(G,\ces,w,\eps,k)$
    takes a connected graph ${G = (V,E)}$ with $n$ vertices and $m$ edges,
    %such that $\Tcal(G) \geq 1$,
    an edge set $\ces$ of $\cen$ edges, %such that $\ces\intersect E = \emptyset$,
    an edge weight function ${w : (E \union \ces) \to \mathbb{R}^+}$,
    a real number $0 < \eps \leq 1/2$,
    and an integer $1\leq k\leq \cen$,
    and returns an edge set $\ses \subseteq \ces$ of at most $k$ edges.
    The algorithm computes effective resistances for
    $O(\frac{\cen}{\eps} \log \frac{\cen}{\eps})$ pairs of vertices,
    and uses $O(\frac{\cen}{\eps} \log \frac{\cen}{\eps})$ arithmetic operations.
    %effective resistance computations and arithmetic operations.
    %by computing effective resistances for $O(\frac{\cen}{\eps} \log \frac{\cen}{\eps})$
    %pairs of vertices.
    The $\ses$ returned satisfies the following statement:
    \begin{align*}
        \log \frac{\Tcal(G + \ses)}{\Tcal(G)} \geq
        \kh{1 - \frac{1}{e} - \eps} \log \frac{\Tcal(G + O)}{\Tcal(G)},
    \end{align*}
    where $O \defeq \argmax\nolimits_{S \subseteq \ces, \sizeof{S}\leq k} \Tcal(G + S)$
    denotes an optimum solution.
\end{theorem}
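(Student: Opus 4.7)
The theorem has two parts—an operation count and an approximation guarantee—so I would handle them separately. For the count, Line~\ref{line:ermax} performs $q$ effective-resistance queries and $O(q)$ arithmetic. The outer loop multiplies $th$ by $(1-\eps)$ each pass, starting at $er_{max}$ and stopping once $th<\eps\,er_{max}/q$; hence it runs $T = O(\eps^{-1}\log(q/\eps))$ times, with each pass iterating over at most $q$ candidates and spending one effective-resistance query plus $O(1)$ arithmetic per candidate. This totals $O(q\eps^{-1}\log(q/\eps))$ effective-resistance computations and arithmetic operations, matching both claims.

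For the approximation ratio, let $f(S) := \log(\Tcal(G+S)/\Tcal(G))$; by the preceding lemma $f$ is monotone submodular with $f(\emptyset)=0$. Writing $\sigma(e,S) := w(e)\er^{G+S}(e)$ gives $\Delta(e,S) := f(S+e)-f(S) = \log(1+\sigma(e,S))$, so the algorithm's surrogate test $\sigma \geq th$ is equivalent to the $f$-marginal test $\Delta \geq \log(1+th)$. The technical pivot is the elementary inequality
\[
\log\bigl(1+(1-\eps)t\bigr) \;\geq\; (1-\eps)\log(1+t) \qquad \forall\,t\geq 0,\,\eps\in(0,1],
\]
which follows by noting equality at $t=0$ and comparing derivatives; it guarantees that the induced marginal thresholds also shrink by at least a $(1-\eps)$ factor between successive passes. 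Using this, I would show that each selected edge $s_i$ is approximately greedy in $f$: for every other candidate $e\in \ces\setminus S_{i-1}$, the pass structure together with Rayleigh monotonicity applied to supersets yields $\sigma(e,S_{i-1}) < th_{t_i}/(1-\eps)$, so $\sigma(s_i,S_{i-1}) \geq (1-\eps)\max_e \sigma(e,S_{i-1})$, and therefore $\Delta(s_i,S_{i-1}) \geq (1-\eps)\max_e \Delta(e,S_{i-1})$ after applying the pivot inequality and monotonicity of $\log(1+\cdot)$.

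To close the argument I would case-split on $|P|$. When $|P|=k$, the standard approximate submodular greedy recursion $f(O)-f(S_i) \leq (1-(1-\eps)/k)\bigl(f(O)-f(S_{i-1})\bigr)$ iterates to $f(P)\geq(1-e^{-(1-\eps)})f(O)\geq(1-1/e-\eps)f(O)$ via the Taylor estimate $e^{-(1-\eps)}\leq 1/e+\eps$ valid for $\eps\in(0,1/2]$. When $|P|<k$, the last processed threshold satisfies $th_{\text{last}} \leq 2\eps\,er_{max}/q$, so by Rayleigh monotonicity every $e\in O\setminus P$ has $\Delta(e,P) < \log(1+th_{\text{last}}) \leq 2\eps\,er_{max}/q$, giving $f(O)-f(P)\leq 2\eps\,er_{max}$ after summing over the at-most-$k\leq q$ missing optimal edges. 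The hardest step is converting this residual into the promised $(1-1/e-\eps)$ ratio when $er_{max}/\log(1+er_{max})$ can be arbitrarily large; I would handle it by interleaving the approximate-greedy recursion (which shrinks the deficit by a constant factor per selected edge) with the stopping bound, using $f(O)\leq k\log(1+er_{max})$ as an upper bound to interpolate between the two regimes and fold everything into a single uniform inequality.
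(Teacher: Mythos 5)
Your operation count and your analysis of the selection step are sound: the concavity inequality $\log(1+(1-\eps)t)\ge(1-\eps)\log(1+t)$, combined with Rayleigh monotonicity applied at the pass in which a competing edge was last rejected, does show that every selected edge is a $(1-\eps)$-approximate greedy choice for $f(S)=\log\Tcal(G+S)-\log\Tcal(G)$, and the standard recursion then gives $1-e^{-(1-\eps)}\ge 1-1/e-\eps$ when exactly $k$ edges are chosen. (The paper never writes out a proof of this particular theorem; it defers to the analysis of Algorithm~1 of~\cite{BV14} and only spells out the argument for the accelerated version in the proof of Theorem~\ref{thm:DetMaximize}, so for this half your route is essentially the intended one.)

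The gap you flag at the end, however, is real and cannot be closed for the algorithm as literally written. Because $\GreedyTh$ runs its geometric schedule on the surrogate $\sigma=w\cdot\er$ and stops at $\frac{\eps}{\cen}er_{max}$, an optimal edge that is never selected is only guaranteed to satisfy $\sigma(e,P)<\frac{2\eps}{\cen}er_{max}$, hence $\Delta(e,P)\le\log\bigl(1+\frac{2\eps}{\cen}er_{max}\bigr)$. This need not be an $O(\eps/\cen)$ fraction of $f(O)\ge\log(1+er_{max})$: when $er_{max}=2^{\cen/\eps}$, the bound $\log\bigl(1+\frac{2\eps}{\cen}er_{max}\bigr)$ is within an additive $O(\log(\cen/\eps))$ of $\log(1+er_{max})$, i.e.\ a constant fraction of it. Concretely, with $k=2$, one edge of huge $w\cdot\er$ and a second independent edge sitting just below the final threshold, the algorithm returns a single edge with $f(P)\approx\tfrac12 f(O)<(1-1/e-\eps)f(O)$. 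Your proposed repair via the upper bound $f(O)\le k\log(1+er_{max})$ points the wrong way: the target is $f(O)-f(P)\le(1/e+\eps)f(O)$, so shrinking your estimate of $f(O)$ only makes the target harder. The correct fix is the one the paper's $\NSTMaximize$ implements: run the geometric schedule on the marginal gain itself, i.e.\ set $th_0=\log(1+er_{max})$, decrement $th$ geometrically, and test $w(u,v)\er(u,v)\ge 2^{th}-1$; then every unpicked optimal edge has $\Delta(e,P)<\frac{\eps}{2\cen}th_0\le\frac{\eps}{2\cen}f(O)$, the residual sums to at most $\tfrac{\eps}{2}f(O)$, and the two cases combine as in the proof of Theorem~\ref{thm:DetMaximize}. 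You should either prove the theorem for that log-space variant or restate the threshold schedule accordingly.
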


A natural idea to accelerate the algorithm $\GreedyTh$
is to compute effective resistances approximately, instead
of exactly, using the routines in~\cite{SS11,DKP+17}.
To this end, we develop the following lemma,
which shows that
to obtain a multiplicative approximation of
\[
    \log \kh{1 + w(u,v)\er^G(u,v)},
\]
it suffices to compute a multiplicative approximation of $\er^G(u,v)$. We note that if given that $a$ and $b$ are within a factor of $(1+\epsilon)$ of each other, then it follows that $\log{(1+a)}$ and $\log{(1+b)}$ are within $(1+\epsilon)$ as well, as the function $\log\log(1+e^x)$ is a $1$-Lipschitz function. Since we are using $(1\pm\eps)$-approximation, we give an alternative proof.

\begin{restatable}[]{lemma}{logapprox} \label{lem:logapprox}
	For any non-negative scalars $a,b$ and $0 < \eps \leq 1/2$ such that
	\[
		(1 - \eps) a \leq b \leq (1 + \eps) a,
	\]
	the following statement holds:
	\[
		(1 - 2\eps) \log(1 + a) \leq \log(1 + b) \leq (1 + 2\eps) \log(1 + a).
	\]
\end{restatable}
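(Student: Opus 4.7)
The plan is to factor out $1+a$ from $1+b$ and reduce both directions to a pointwise estimate on $\log(1\pm t)$ for small $t$. Writing $1 + b = (1+a)\bigl(1 + \tfrac{b-a}{1+a}\bigr)$, the hypothesis $(1-\eps)a \leq b \leq (1+\eps)a$ gives $\bigl|\tfrac{b-a}{1+a}\bigr| \leq \tfrac{\eps a}{1+a} \leq \eps \leq \tfrac{1}{2}$, and taking logs yields
\[
\log(1+b) = \log(1+a) + \log\Bigl(1 + \tfrac{b-a}{1+a}\Bigr).
\]
So the whole lemma reduces to bounding the additive term $\log(1 \pm t)$ with $t \in [0, \eps]$, and then converting that additive bound into a multiplicative one against $\log(1+a)$.

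The key auxiliary fact I would establish separately is the elementary inequality $\log(1+a) \geq \tfrac{a}{1+a}$ for all $a \geq 0$. This is immediate from calculus: the difference vanishes at $a=0$ and has nonnegative derivative $a/(1+a)^2$. This is precisely what lets me promote any additive error of the form $\tfrac{\eps a}{1+a}$ into a multiplicative $\eps \log(1+a)$.

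For the upper bound, I would apply $\log(1+t) \leq t$ to obtain $\log(1+b) - \log(1+a) \leq \tfrac{\eps a}{1+a} \leq \eps \log(1+a)$, which already gives the stronger bound $\log(1+b) \leq (1+\eps)\log(1+a) \leq (1+2\eps)\log(1+a)$. For the lower bound, I would use the estimate $\log(1-t) \geq -2t$ valid on $t \in [0, \tfrac{1}{2}]$, which follows from $\log(1-t) = -\sum_{k\geq 1} t^k/k \geq -t/(1-t)$ and then $1/(1-t) \leq 2$. Applied to $t = \tfrac{\eps a}{1+a}$ and combined with the auxiliary fact above, this yields $\log(1+b) \geq \log(1+a) - \tfrac{2\eps a}{1+a} \geq (1 - 2\eps)\log(1+a)$.

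The only real subtlety — and the reason the statement carries the factor $2\eps$ rather than $\eps$ — is that the lower bound genuinely needs $\log(1-t) \geq -2t$ rather than the optimistic $\log(1-t) \geq -t$ (which is false). This is also what makes the hypothesis $\eps \leq 1/2$ show up naturally, since the $-2t$ bound is the one that holds throughout $[0, 1/2]$. Beyond this, the argument is a single algebraic factorization followed by two standard log inequalities, with no case split on the size of $a$.
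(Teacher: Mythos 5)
Your proof is correct, but it takes a genuinely different route from the paper's. The paper first reduces to natural logarithm, then splits into two cases according to whether $a$ lies below or above the threshold $x_2\approx 2.51$ where $\ln(1+x)=x/2$; in each case it bounds $|\ln(1+a)-\ln(1+b)|$ by a mean-value (Lipschitz) estimate on $\ln(1+x)$ and converts the resulting additive error into a multiplicative one via $\ln(1+a)\geq a/2$ (resp.\ $\ln(1+a)\geq x_2/2>1$). You instead factor $1+b=(1+a)\bigl(1+\tfrac{b-a}{1+a}\bigr)$ and rely on the single sharper inequality $\ln(1+a)\geq\tfrac{a}{1+a}$, which absorbs the additive term $\ln(1\pm t)$ with $t=\tfrac{\eps a}{1+a}\leq\eps\leq 1/2$ uniformly in $a$, so no case split is needed; the two standard estimates $\ln(1+t)\leq t$ and $\ln(1-t)\geq -2t$ on $[0,1/2]$ are both valid as you justify them, and your diagnosis that the lower bound is the sole source of the factor $2$ and of the hypothesis $\eps\leq 1/2$ matches the actual bottleneck (your upper bound indeed gives the stronger constant $1+\eps$). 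The only cosmetic point is that your three logarithm inequalities are natural-log facts, so you should either note, as the paper does in its first line, that the multiplicative claim is base-independent and work with $\ln$ throughout, or verify the inequalities for the base in use. Beyond that, your argument is a clean and arguably tidier alternative to the paper's case analysis.
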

\begin{proof}
	Since $\log (1 + a) = \ln (1 + a) / \ln 2$ and $\log (1 + b) = \ln (1 + b) / \ln 2$,
	we only need to prove the $(1 + 2\eps)$-approximation between
	$\ln(1+a)$ and $\ln(1+b)$.
	
	Let $x_2 \approx 2.51$ be the positive root of the equation
	\[
		\ln (1 + x) = x / 2.
	\]
	Then, we have $\ln(1 + x) \geq x/2$ for $0 \leq x \leq x_2$ and
	 $\ln(1 + x) < x/2$ for $x>x_2$.
	 %reverse elsewhere 
	 %\sep{Should this be  $\ln(1 + x) < x/2$ for $x>x_2$?}.
	
	For $a \leq x_2$, we have
	\begin{align*}
		& \frac{\abs{\ln(1 + a) - \ln(1 + b)}}{\abs{a - b}} \leq
			\left.\frac{\mathrm{d} \ln(1 + x)}{\mathrm{d} x}\right|_{x = \min\{a,b\}}
			\leq \left.\frac{\mathrm{d} \ln(1 + x)}{\mathrm{d} x}\right|_{x = 0}
			= 1 \\
		\Rightarrow \quad &
		\abs{\ln(1 + a) - \ln(1 + b)} \leq \abs{a-b} \leq \eps a \leq 2\eps \ln(1 + a)
			\qquad \text{by $\ln(1 + a) \geq a/2$} \\
		\Rightarrow \quad &
		(1 - 2\eps) \ln(1 + a) \leq \ln(1 + b) \leq (1 + 2\eps) \ln(1 + a).
	\end{align*}
	
	For $a > x_2$, we have
	\begin{align*}
		& \frac{\abs{\ln(1 + a) - \ln(1 + b)}}{\abs{a - b}} \leq
			\left.\frac{\mathrm{d} \ln(1 + x)}{\mathrm{d} x}\right|_{x = \min\{a,b\}}
			\leq \frac{1}{1 + (1 - \eps)a}
			\\
		\Rightarrow \quad &
		\abs{\ln(1 + a) - \ln(1 + b)} \leq \frac{\abs{a - b}}{1 + (1 - \eps)a}
		\leq \frac{\eps a}{(1 - \eps)a} \leq \frac{\eps a}{\frac{1}{2} a}
		\qquad \text{by $0 < \eps \leq 1/2$} \\
 		= & 2\eps \leq 2\eps \ln(1 + a) \qquad \text{by $a > x_2\approx 2.51$} \\
		\Rightarrow \quad &
		(1 - 2\eps) \ln(1 + a) \leq \ln(1 + b) \leq (1 + 2\eps) \ln(1 + a).
	\end{align*}
\end{proof}

By using the effective resistance approximation routine in~\cite{DKP+17},
one can pick an edge with effective resistance above the threshold up to a $1\pm \eps$ error.
Therefore,
by an analysis similar to that of Algorithm~1 of~\cite{BV14},
one can obtain a $(1 - \frac{1}{e} - \eps)$-approximation in time
$\Otil((m + (n + \cen)\eps^{-2})k)$.
The reason that the running time has a factor $k$ is that one has to
recompute the effective resistances whenever an edge is added to the graph.
To make the running time independent of $k$, we will need a faster algorithm
for performing the sequential updates, i.e., Lines~\ref{line:for1}-\ref{line:for2}
of Algorithm~\ref{alg:GreedyTh}.

\subsection{Routine for Faster Sequential Edge Additions}\label{sec:seq}

We now use the idea we stated in Section~\ref{sec:ideas}
to perform the sequential updates at Lines~\ref{line:for1}-\ref{line:for2}
of Algorithm~\ref{alg:GreedyTh} in nearly-linear time.
We use a routine from~\cite{DKP+17} to compute the approximate
Schur complement:

\begin{lemma}\label{lem:ApproxSchur}
    There is a routine
    $\SS = \ApproxSchur(\LL, C, \eps, \delta)$ that takes
    a Laplacian $\LL$ corresponding to graph $G = (V,E)$ with $n$ vertices and $m$ edges,
    a vertex set $C\subseteq V$, and real numbers $0 < \eps \leq 1/2$ and $0 < \delta < 1$,
    and returns a graph Laplacian $\SS$ with $O(\sizeof{C}\eps^{-2}\log n)$ nonzero entries
    supported on $C$.
    With probability at least $1 - \delta$, $\SS$ satisfies
    \begin{align*}
        (1 - \eps) SC(\LL, C) \pleq \SS \pleq (1 + \eps) SC(\LL, C).
    \end{align*}
    The routine runs in $\Otil(m\log^2(n/\delta) + n\eps^{-2}\log^4(n/\delta))$ time.
\end{lemma}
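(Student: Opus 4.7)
The plan is to construct $\SS$ by an approximate Gaussian elimination on the vertices in $V \setminus C$, following the randomized Cholesky framework of~\cite{KS16, DKP+17}. Exact elimination of a single vertex $v$ from a graph Laplacian replaces the star at $v$ by a weighted clique on $N(v)$, with edge weight $w(v,u_i)\, w(v,u_j) / \mathrm{deg}(v)$ on $(u_i, u_j)$; iterating this over all of $V \setminus C$ produces exactly $SC(\LL, C)$. The challenge is that the intermediate fill-in cliques can be dense, so after each elimination we sparsify.

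First, I would fix a random ordering of $V \setminus C$ and process vertices one at a time. At the step eliminating $v$, rather than materializing the full clique on $N(v)$, we draw $\rho = \Theta(\eps^{-2} \log(n/\delta))$ edge samples: for each sample we independently pick two neighbors $u_i, u_j$ with probabilities proportional to $w(v,u_i)$ and $w(v,u_j)$, and add the edge $(u_i, u_j)$ with weight chosen so that the expected rank-$1$ contribution matches the corresponding rank-$1$ piece of the true clique. This is the sparsified-Cholesky step of~\cite{KS16}. After all vertices in $V \setminus C$ are processed, the resulting Laplacian $\SS$ is supported on $C$.

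Next, I would establish the spectral guarantee via matrix martingale concentration. Let $\XX_t$ denote the (appropriately preconditioned) error matrix accumulated through the first $t$ elimination steps. Each sampling step contributes a bounded, zero-mean random matrix to the martingale; the matrix Freedman inequality then bounds $\smallnorm{\XX_{\mathrm{final}}}$ by $\eps$ with probability at least $1 - \delta$, yielding $(1 - \eps)\, SC(\LL, C) \pleq \SS \pleq (1 + \eps)\, SC(\LL, C)$. For the running time, each elimination touches $O(\mathrm{deg}_t(v) + \rho)$ entries, and an amortized argument based on the random ordering yields total work $\Otil(m \log^2(n/\delta) + n \eps^{-2} \log^4(n/\delta))$; a final effective-resistance sparsification pass~\cite{SS11} on $\SS$ trims its support to $O(\abs{C} \eps^{-2} \log n)$ nonzero entries. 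The main obstacle is controlling the total variance of the martingale across adaptively-chosen samples (since later steps depend on all earlier random choices), but this is precisely the setting for which the Kyng--Sachdeva analysis was designed, so we inherit the full guarantee from~\cite{DKP+17}.
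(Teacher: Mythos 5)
This lemma is not proved in the paper at all: it is imported verbatim from~\cite{DKP+17}, and your sketch correctly outlines the sparsified-Cholesky construction and martingale analysis that underlie that cited result before deferring to it for the actual concentration bound, so you are in effect taking the same route the paper does. Your outline of the mechanism (clique replacement, per-vertex sampling with $\Theta(\eps^{-2}\log(n/\delta))$ samples, matrix Freedman, random elimination order for the running time) is a faithful summary of the source, and no further detail is expected here.
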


We give the routine for performing fast sequential updates
in Algorithm~\ref{alg:AddAbove}.
%We also illustrate the main steps of this routine in Figure~\todo{}.

\quad

\begin{algorithm}[H]
    \caption{$P = \AddAbove(G, (u_i,v_i)_{i=1}^{\cen}, w, th, \eps, k)$}
	\label{alg:AddAbove}
	\Input{
        $G = (V,E)$: A connected graph. \\
        $(u_i,v_i)_{i=1}^{\cen}$: An edge sequence of $\cen$ edges. \\
        $w : (E\union (u_i,v_i)_{i=1}^{\cen}) \to \mathbb{R}^+$: An edge weight function. \\
        $th$: A threshold. \\
        $\eps$: An error parameter. \\
        $k$: Number of edges to add.
	}
	\Output{
        $\ses$ : A subset of $\ces$ with at most $k$ edges.
	}
    Let $\eps_1 = \frac{2}{3}\cdot\eps / \log \cen$ and
    $\eps_2 = (1 - 1/\log\cen)\cdot\eps$.\label{line:eps}\;
    Let $\LL$ be the Laplacian matrix of $G$.\;
    \If{$\cen = 1$}{
        $\SS \gets\ApproxSchur(\LL,\setof{u_1,v_1},\eps, \frac{1}{10n(m + q)})$ \;
        \label{line:1}
        Compute $\SS^\dag$ by inverting $\SS$ in $O(1)$ time.\;
        %If $w(u_1,v_1)\cdot \bb_{u_1,v_1}^T \LL^\dag \bb_{u_1,v_1} \geq th$,
        %return $\setof{(u_1,v_1)}$; otherwise return $\emptyset$.
        \If{$w(u_1,v_1)\cdot \bb_{u_1,v_1}^T \SS^\dag \bb_{u_1,v_1} \geq th$\hspace{3pt} {\rm and}\hspace{3pt}$k > 0$ }{
            \KwRet $\setof{(u_1,v_1)}$
        }\Else{
            \KwRet $\emptyset$ \label{line:10}
        }
    }
    \Else{
        Divide $(u_i,v_i)_{i=1}^{\cen}$ into two intervals \label{line:>1}
        \vspace{-5pt}
        \[
            \vspace{-3pt}
            f^{(1)} \defeq (u_1,v_1),\ldots, (u_{\floor{\cen / 2}},v_{\floor{\cen / 2}})
            \quad
            \text{and}
            \quad
            f^{(2)} \defeq (u_{\floor{\cen / 2} + 1},v_{\floor{\cen / 2} + 1}),\ldots,(u_{\cen},v_{\cen}),
        \]
        %$f^{(1)} \defeq (u_i,v_i)_{i=1}^{\floor{\cen / 2}}$
        %and
        %$f^{(2)} \defeq (u_i,v_i)_{\floor{\cen / 2} + 1}^{\cen}$,
        and let $V^{(1)}$ and $V^{(2)}$ be the respective set of
        endpoints of edges in $f^{(1)}$ and $f^{(2)}$.\;
        $\SS^{(1)} \gets\ApproxSchur(\LL,V^{(1)},\eps_1, \frac{1}{10n(m + q)})$\;
        $P^{(1)} \gets \AddAbove(\SS^{(1)}, f^{(1)}, w, th, \eps_2, k)$\;
        Update the graph Laplacian by:
        \vspace{-5pt}
        \begin{align*}
        %$
        \LL \gets \LL + \sum\nolimits_{(u,v)\in P^{(1)}} w(u,v) \bb_{u,v}\bb_{u,v}^T.
        %$
        \end{align*}
        \vspace{-20pt}
        \;
        $\SS^{(2)} \gets\ApproxSchur(\LL,V^{(2)},\eps_1, \frac{1}{10n(m + q)})$\;
        $P^{(2)} \gets \AddAbove(\SS^{(2)}, f^{(2)}, w, th, \eps_2, k - \sizeof{P^{(1)}})$\;
        \KwRet{$P^{(1)} \union P^{(2)}$} \label{line:>10}
    }
\end{algorithm}

\quad

The performance of $\AddAbove$ is characterized in Lemma~\ref{lem:ds}.

%\lemalgo*

\begin{proof}[Proof of Lemma~\ref{lem:ds}]
    We first prove the correctness of this lemma by induction on $\cen$.

    When $\cen = 1$, the routine goes to lines~\ref{line:1}-\ref{line:10}.
    Lemma~\ref{lem:ApproxSchur} guarantees that $\SS$ satisfies
    \[
        (1 - \eps) SC(\LL,\setof{u_1,v_1}) \pleq \SS \pleq
        (1 + \eps) SC(\LL,\setof{u_1,v_1}),
    \]
    which implies
    \[
        (1 - 2\eps) \bb_{u_1,v_1}^T \LL^\dag \bb_{u_1,v_1} \leq \bb_{u_1,v_1}^T \SS^\dag \bb_{u_1,v_1} \leq
        (1 + 2\eps) \bb_{u_1,v_1}^T \LL^\dag \bb_{u_1,v_1}.
    \]
    Thus, the correctness holds for $\cen = 1$.

    Suppose the correctness holds for all $1 \leq \cen \leq t$ where $t\geq 1$.
    We now prove that it also holds for $\cen = t + 1$.
   Since $\cen > 1$, the routine goes to Line~\ref{line:>1}-\ref{line:>10}.
    Again by Lemma~\ref{lem:ApproxSchur}, we have
    \begin{align}\label{eq:s1}
        (1 - \frac{2}{3}\cdot\eps / \log \cen) SC(\LL,V^{(1)}) \pleq \SS^{(1)} \pleq
        (1 + \frac{2}{3}\cdot\eps / \log \cen) SC(\LL,V^{(1)}).
    \end{align}
    By the inductive hypothesis, any effective resistance query in $f^{(1)}$ is answered
    with an error within
    \[
        1 \pm (1 - 1/\log\cen)\cdot\eps.
    \]
    Combining this with~(\ref{eq:s1}) gives the correctness
    for $f^{(1)}$. Then, by a similar analysis, we can obtain the correctness
    for $f^{(2)}$. By induction, the correctness holds for all $\cen$.

    For the success probability,
    note that every time we invoke the routine $\ApproxSchur$,
    we set the failure probability to $\frac{1}{10n(m+q)}$.
    Thus, we get high probability by a union bound.

    We next analyzie the running time.

    Let $T(\cen, \eps)$ denote the running time of
    $\AddAbove(G, (u_i,v_i)_{i=1}^{\cen}, w, th, \eps, k)$
    when the number of edges in $G$ is $O(\cen\eps^{-2}\log n)$.
    It immediately follows that when $G$ contains $m$ edges, where $m$ is an arbitrary number,
    the total running time of $\AddAbove$ is at most
    \begin{align}
        2\cdot T(\cen/2, (1 - 1 / \log \cen)\cdot \eps) +
        \Otil(m + n\eps^{-2}\log^{2}\cen),
        \label{eq:totalrt}
    \end{align}
    since by Lemma~\ref{lem:ApproxSchur}, in the first step of divide-and-conquer,
    the routine will divide the graph into two Schur complements each with $O(\cen\eps^{-2}\log n)$ edges.
    Also by Lemma~\ref{lem:ApproxSchur}, we can write $T(\cen,\eps)$ in the following
    recurrence form:
    \begin{align}
        T(\cen,\eps) = 2\cdot T(\cen / 2, (1 - 1 / \log \cen)\cdot \eps) +
        \Otil(\cen\eps^{-2}\log^2 \cen),
        \label{eq:recur}
    \end{align}
    which gives $T(\cen,\eps) = \Otil(\cen\eps^{-2})$.
    Combining this with~(\ref{eq:totalrt}) gives the total nearly-linear running time
    $\Otil(m + (n + \cen)\eps^{-2})$.
\end{proof}

\subsection{Incorporating Fast Sequential Edge Additions into the Greedy Algorithm}

We now incorporate $\AddAbove$ into Algorithm~\ref{alg:GreedyTh} to obtain our
nearly-linear time greedy algorithm.
To estimate the maximum effective resistance at Line~\ref{line:ermax}
of Algorithm~\ref{alg:GreedyTh},
we will also need the effective resistance estimation routine from~\cite{DKP+17}:

\begin{lemma}\label{lem:EREst}
    There is a routine $(\hat{r}_{u,v})_{(u,v)\in \ces} = \EREst(G = (V,E), w, \ces, \eps)$ which takes
    a graph $G = (V,E)$ with $n$ vertices and $m$ edges,
    an edge weight function $w : E \to \mathbb{R}^+$,
    a set $\ces$ of $\cen$ vertex pairs,
    and a real number $0 < \eps \leq 1/2$,
    and returns $\cen$ real numbers $(\hat{r}_{u,v})_{(u,v)\in \ces}$
    in $\Otil(m + (n + \cen)\eps^{-2})$ time. With high probability,
    the following statement holds for all $(u,v)\in \ces$:
    \begin{align*}
        (1 - \eps) \er(u,v) \leq \hat{r}_{u,v} \leq (1 + \eps) \er(u,v).
    \end{align*}
\end{lemma}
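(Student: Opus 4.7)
The plan is to prove Lemma~\ref{lem:EREst} by combining the approximate Schur complement routine from Lemma~\ref{lem:ApproxSchur} with the Johnson--Lindenstrauss based effective-resistance sketch of Spielman--Srivastava~\cite{SS11}. The naive Spielman--Srivastava preprocessing already gives each $\er(u,v)$ to $(1\pm\eps)$-accuracy in $\Otil(m\eps^{-2})$ time, but one then pays $\Otil(\eps^{-2})$ per query, yielding $\Otil((m+q)\eps^{-2})$. To shave the $\eps^{-2}$ off of the $m$ term we first reduce to a graph of size depending only on the query set.

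First, let $V_Q \subseteq V$ be the set of endpoints of the $q$ query pairs, so $\sizeof{V_Q}\le 2q$. Invoke $\SS\gets \ApproxSchur(\LL,V_Q,\eps/3,\delta)$ with $\delta = 1/\polylog(n)$. By Lemma~\ref{lem:ApproxSchur}, this produces, in $\Otil(m + n\eps^{-2})$ time, a Laplacian $\SS$ supported on $V_Q$ with $O(q\eps^{-2}\log n)$ nonzero entries such that $(1-\eps/3)SC(\LL,V_Q)\pleq \SS \pleq (1+\eps/3)SC(\LL,V_Q)$. Since the Schur complement preserves effective resistances between vertices in $V_Q$ (Fact~\ref{fact:SchurResistance}) and the spectral approximation transfers to pseudoinverses in the relevant subspace, we have, for every query pair $(u,v)\in \ces$,
\[
(1-\eps/2)\er^G(u,v) \leq \bb_{u,v}^T \SS^\dag \bb_{u,v} \leq (1+\eps/2)\er^G(u,v).
\]

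Second, apply the Spielman--Srivastava sketch directly to $\SS$. This builds a $k\times \sizeof{V_Q}$ matrix $\ZZ = \QQ\,\WW_{\SS}^{1/2}\BB_{\SS}\SS^\dag$ with $k = O(\eps^{-2}\log n)$ Johnson--Lindenstrauss rows; each row requires one solve against $\SS$, which has $\Otil(q\eps^{-2})$ nonzeros, so building $\ZZ$ costs $\Otil(q\eps^{-2})$ time using a fast Laplacian solver. For each pair $(u,v)$, set $\hat r_{u,v} = \norm{\ZZ(\ee_u-\ee_v)}^2$; the JL property gives $(1\pm \eps/3)$-approximation of $\bb_{u,v}^T\SS^\dag\bb_{u,v}$ with high probability after a union bound over all $q$ pairs (tightening $k$ by a $\log q$ factor hidden in $\Otil$). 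Each query takes $O(k)=\Otil(\eps^{-2})$ time, for $\Otil(q\eps^{-2})$ in total.

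Chaining the two $(1\pm\eps/2)$ and $(1\pm\eps/3)$ distortions gives the claimed $(1\pm\eps)$-approximation of $\er^G(u,v)$ (after absorbing the constant into $\eps$), and the total running time is $\Otil(m + n\eps^{-2}) + \Otil(q\eps^{-2}) = \Otil(m + (n+q)\eps^{-2})$ as stated. The main technical point to verify carefully is the interplay between the spectral approximation of Schur complements and the pseudoinverse quadratic form $\bb_{u,v}^T\SS^\dag\bb_{u,v}$: one needs $\bb_{u,v}$ to lie in the column space of $\SS$ (true since $\bb_{u,v}\perp \mathbf{1}$ and $\SS$ is a connected-graph Laplacian on $V_Q$ after Schur reduction, assuming $G$ was connected), so that $(1\pm\eps)$ spectral approximation passes to $(1\mp\eps)$ approximation of $\SS^\dag$ on this vector. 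Aside from this bookkeeping, the proof is an assembly of the two tools already cited from~\cite{SS11} and~\cite{DKP+17}.
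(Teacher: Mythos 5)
First, a point of comparison: the paper does not prove this lemma at all --- it is imported verbatim from~\cite{DKP+17} --- so there is no in-paper proof to match. Your proposal is a reasonable attempt to reconstruct such a routine from the stated tools, and the structural steps are sound: the Schur complement onto the query endpoints preserves the relevant effective resistances (Fact~\ref{fact:SchurResistance}); a $(1\pm\eps)$ spectral approximation between two connected Laplacians with a common null space does pass to the pseudoinverse quadratic form on $\bb_{u,v}\perp\one$; and the Johnson--Lindenstrauss sketch with a union bound over the $\cen$ pairs gives the final per-pair estimates.

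The genuine gap is in the running time of your second step. The sparsifier $\SS$ returned by $\ApproxSchur$ has $O(\cen\eps^{-2}\log n)$ nonzero entries, and the Spielman--Srivastava sketch requires $k=O(\eps^{-2}\log n)$ Laplacian solves against $\SS$, each costing time nearly linear in the number of nonzeros of $\SS$. Building $\ZZ$ therefore costs $k\cdot\Otil(\cen\eps^{-2})=\Otil(\cen\eps^{-4})$, not $\Otil(\cen\eps^{-2})$ as you assert, so your total is $\Otil(m+n\eps^{-2}+\cen\eps^{-4})$, which misses the target. (Sketching $G$ directly instead is no better: that costs $\Otil(m\eps^{-2})$, and the whole point of the lemma is that the $m$ term carries no $\eps^{-2}$ factor.) This compounding of $\eps^{-2}$ factors is exactly what the recursive divide-and-conquer of~\cite{DKP+17} is designed to avoid: one splits the terminal set in half, takes an approximate Schur complement onto each half with accuracy degraded only by a $1/\log\cen$ factor, and recurses until each subproblem has $O(1)$ terminals, where the effective resistance is read off from a constant-size inverse. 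That is the same recursion underlying the paper's own $\AddAbove$ routine and its recurrence~(\ref{eq:recur}), and it is what yields $\Otil(m+(n+\cen)\eps^{-2})$. A secondary quibble: taking the failure probability of $\ApproxSchur$ to be $1/\polylog(n)$ does not give a high-probability guarantee; compare the paper's own choice of $\delta=\frac{1}{10n(m+q)}$.
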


We give our greedy algorithm in Algorithm~\ref{alg:GreedyThFast}.

\quad

\begin{algorithm}[H]
    \caption{$P = \NSTMaximize(G, \ces, w, \eps, k)$}
	\label{alg:GreedyThFast}
	\Input{
        $G = (V,E)$: A connected graph. \\
        $\ces$: A candidate edge set with $\sizeof{\ces} = \cen$. \\
        $w : (E\union \ces) \to \mathbb{R}^+$: An edge weight function. \\
        $\eps$: An error parameter. \\
        $k$: Number of edges to add.
	}
	\Output{
        $\ses$ : A subset of $\ces$ with at most $k$ edges.
	}
    $P \gets \emptyset$\;
    $(\hat{r}_{u,v})_{(u,v)\in \ces} \gets \EREst(G = (V,E), w, \ces, \eps)$\;
    $er_{max} \gets \frac{1 + \eps}{1 - \eps}\cdot \max\nolimits_{(u,v)\in\ces} w(u,v) \cdot \hat{r}_{u,v}$\;
    $th_0\gets \log (1 + er_{max})$\;
    $th \gets th_0$\;
    \While{$th \geq \frac{\eps}{2\cen} th_0$}{
        Pick an arbitrary ordering $(u_i,v_i)_{i=1}^{q - \sizeof{P}}$
        of edges in $\ces \setminus \ses$. \;
        $\ses' \gets \AddAbove(G, (u_i,v_i)_{i=1}^{q - \sizeof{P}}, w, 2^{th} - 1, \eps/12, k - \sizeof{P})$ \;
        Update the graph by $G \gets G + \ses'$.\;
        $\ses \gets \ses \union \ses'$\;
        $th \gets (1 - \eps/6) th$
    }
    \KwRet{\ses}
\end{algorithm}

\quad

The performance of algorithm $\NSTMaximize$ is characterized in Theorem~\ref{thm:DetMaximize}.

\begin{proof}[Proof of Theorem~\ref{thm:DetMaximize}]
    By Lemma~\ref{lem:ds}, the running time equals
    \begin{align*}
        O(\eps^{-1} \log \frac{\cen}{\eps}) \cdot \Otil(m + (n + \cen)\eps^{-2}) =
        \Otil((m + (n + \cen)\eps^{-2})\eps^{-1}).
    \end{align*}

    We next prove the correctness of the approximation ratio.

    We first consider the case in which the algorithm selects exactly $k$ edges from $\ces$.
    When the algorithm selects an edge $(u_i,v_i)$ at threshold $th$,
    the effective resistance between any $u_j,v_j$ where $i < j\leq \cen$
    can be upper bounded by
    \begin{align}
        \log(1 + \er(u_j, v_j)) &\leq \frac{1}{1 - \eps/6}\cdot \frac{th}{1 - \eps/6} \notag\\
        &\leq \frac{(1 + \eps/6)\log(1 + \er(u_i,v_i))}{(1 - \eps/6)^2} \notag\\
        &\leq \frac{\log(1 + \er(u_i,v_i))}{1 - \eps/2},
        \label{eq:i<j}
    \end{align}
    and the effective resistance between any $u_j,v_j$ where $1 \leq j < i$
    can be upper bounded by
    \begin{align}
        \log(1 + \er(u_j, v_j)) &\leq \frac{1}{1 - \eps/6}\cdot th \notag\\
        &\leq \frac{(1 + \eps/6)\log(1 + \er(u_i,v_i))}{(1 - \eps/6)} \notag\\
        &\leq \frac{\log(1 + \er(u_i,v_i))}{1 - \eps/2}.
        \label{eq:j<i}
    \end{align}
    Thus, the algorithm always picks an edge with at least $1 - \eps/2$ times
    the maximum marginal gain.
    Let $G^{(i)}$ be the graph after the first $i$ edges are added,
   and let $O$ be the optimum solution.
    By submodularity we have, for any $i\geq 0$
    \begin{align*}
        \log \frac{\Tcal(G^{(i+1)})}{\Tcal(G^{(i)})} 
        \geq
        \frac{1 - \eps/2}{k} \cdot \log \frac{\Tcal(G + O)}{\Tcal(G^{(i)})}.
    \end{align*}
    Then, we have
    \begin{align*}
        \log \frac{\Tcal(G^{(k)})}{\Tcal(G)} &\geq
        \kh{1 - \kh{1 - \frac{1 - \eps/2}{k}}^k }\cdot \log \frac{\Tcal(G + O)}{\Tcal(G)} \\
        %&= \kh{1 - \kh{1 - \frac{1 - \eps/2}{k}}^{\frac{k}{1-\eps/2}\cdot (1 - \eps/2)} }
        %\cdot \log \frac{\Tcal(G + O)}{\Tcal(G)} \\
        &\geq \kh{1 - \frac{1}{e^{1 - \eps/2}}} \cdot \log \frac{\Tcal(G + O)}{\Tcal(G)} \\
        &\geq \kh{1 - \frac{1}{e} - \eps/2} \cdot \log \frac{\Tcal(G + O)}{\Tcal(G)},
    \end{align*}
    where the second inequality follows from $(1 - 1/x)^x \leq 1/e$.

    When the algorithm selects fewer than $k$ edges,
    we know by the condition of the while loop that
    not selecting the remaining edges only causes a loss of $(1 - \eps/2)$.
    Thus, we have
    \begin{align*}
        \log \frac{\Tcal(G^{(k)})}{\Tcal(G)} &\geq 
        (1 - \eps/2) \kh{1 - \frac{1}{e} - \eps/2} \cdot \log \frac{\Tcal(G + O)}{\Tcal(G)} \\
        &\geq \kh{1 - \frac{1}{e} - \eps} \cdot \log \frac{\Tcal(G + O)}{\Tcal(G)},
    \end{align*}
    which completes the proof.
\end{proof}

%\thmalgo*

\section{Exponential Inapproximability}
\label{hardness::sec}
In this section, we will make use of properties of some special 
classes of graphs. Amongst these are:  the star graph $S_n$,  which is an $(n+1)$-vertex tree in which $n$ leaves are directly 
connected to a central vertex; an $n$-vertex path graph $P_n$; an $n$-vertex cycle $C_n$;  a fan graph $F_n$, defined by $S_n$ plus a 
$P_n$ supported on its leaves; and a wheel graph $W_n$,  defined by $S_n$ 
plus a $C_n$ supported on its leaves. 
%In this section, we will make use of properties of some special 
%classes of graphs, amongst them are  the star graph $S_n$, the path graph 
%$P_n$, the cycle $C_n$, the fan graph $F_n$, and the wheel graph $W_n$. A star 
%graph $S_n$ is an $n+1$-vertex tree in which $n$ leaves are directly 
%connected to a central vertex. $P_n$ is an $n$-vertex path and $C_n$ 
%is an $n$-vertex cycle. $F_n$ is a graph defined by $S_n$ plus a 
%$P_n$ supported on its leaves. $W_n$ is a graph defined by $S_n$ 
%plus a $C_n$ supported on its leaves. 
We define the length of a 
path or a cycle by the sum of edge weights in the path/cycle. 
Specifically, the lengths of $P_n$ and $C_n$ with all weights equal to $1$
are $n-1$ and $n$, respectively. We also write $K_n$ to denote the $n$-vertex complete graph.
%\yhy{Added the definition form $K_n$}

\subsection{Hardness of Approximation for Minimum Path Cover}

We begin by introducing the definition of the minimum path cover problem.

\begin{problem}[Minimum Path Cover]\label{prob:mpc}
Given an  undirected graph $G=(V,E)$, a path cover is a set of disjoint paths such that every vertex $v\in V$ belongs to exactly one path. Note that a path cover may include paths of length 0 (a single vertex). The minimum path cover problem is to find a path cover of G having the least number of paths.
\end{problem}

We  recall a known hardness result of TSP with distance $1$ and $2$ ($\rm{(1,2)}$-TSP).
%\todo{define length of cycles and paths, perhaps in preliminary.}
\begin{lemma}[\cite{PY93,EK01}]\label{12TSP}
    There is a constant $\sigma > 0$, such that it is $\mathbf{NP}$-hard
    to distinguish between the instances of {\rm{$(1,2)$-TSP}} ($K_n$ with edge weights $1$ and $2$)
    having shortest Hamiltonian cycle length $n$ and shortest Hamiltonian cycle length at least $(1 + \sigma n)$.
    %In a {\rm{$(1,2)$-TSP}} instance (a $K_n$ with edge weights $1$ and $2$),
    %there is a constant $\sigma > 0$ such that it is $\mathbf{NP}$-hard to
    %determine whether
    %there is a Hamiltonian cycle of length $n$ or any Hamiltonian cycle in the graph has a length
    %at least $1+\sigma n$.
\end{lemma}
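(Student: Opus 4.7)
The plan is to prove this hardness via a gap-preserving reduction from a MAX-SNP-hard constraint-satisfaction problem. A convenient starting point is \textsc{Max-E3-Sat-B}, the restriction of \textsc{Max-3Sat} in which every variable appears in at most some constant $B$ clauses; by the PCP theorem together with an expander-based degree reduction, there is a universal constant $c_0 > 0$ such that it is $\mathbf{NP}$-hard to distinguish satisfiable instances from instances in which at most a $(1 - c_0)$-fraction of the $m$ clauses can be simultaneously satisfied. Since $m = \Theta(n)$ in this restricted class, any additive $\Omega(m)$ gap produced in the tour length will automatically translate into a multiplicative $(1 + \sigma)$ gap.

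Given such an instance $\phi$ with variables $x_1,\dots,x_n$ and clauses $C_1,\dots,C_m$, I would construct a complete graph $K_N$ with edge weights in $\{1,2\}$ as follows. A ``core'' graph $G_\phi$ on the same vertex set carries all of the weight-$1$ edges; every remaining edge of $K_N$ receives weight $2$. The vertex set of $G_\phi$ is the disjoint union of local \emph{variable gadgets} (one per $x_i$, designed as a cycle-like structure admitting exactly two canonical Hamiltonian traversals, encoding $x_i = \text{true}$ and $x_i = \text{false}$) and local \emph{clause gadgets} (one per $C_j$, stitched into the three adjacent variable gadgets by weight-$1$ edges, so that $C_j$'s gadget can be traversed inside a Hamiltonian cycle of $G_\phi$ iff at least one of its three literals is set to true). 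Each variable occurs in at most $B$ clauses, so the construction is linear-size and $N = \Theta(m)$.

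The two-case analysis is then as follows. \textbf{Completeness:} if $\phi$ is satisfiable, the variable traversals induced by a satisfying assignment, together with the canonical traversals of the clause gadgets, assemble into a Hamiltonian cycle of $G_\phi$ and hence a TSP tour of length exactly $N$. \textbf{Soundness:} consider any Hamiltonian cycle of $K_N$ using exactly $t$ weight-$2$ edges, and extract from it an assignment $\alpha$ by reading off the traversal of each variable gadget. The invariant enforced by the gadget design is that every clause $C_j$ not satisfied by $\alpha$ forces the cycle to incur at least one weight-$2$ edge whose endpoints lie inside $C_j$'s gadget (or the bounded-size ``neighborhood'' thereof), and these edges can be charged to their clauses essentially disjointly. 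Hence $t \geq \Omega\bigl(|\{j : \alpha \not\models C_j\}|\bigr) \geq \Omega(c_0 m) = \Omega(c_0 N)$, and the tour length is at least $(1 + \sigma) N$ for a universal $\sigma > 0$.

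The main obstacle is exactly the local-rigidity property used in the soundness step: one must design the variable/clause gadgets so that weight-$2$ edges cannot be ``amortized'' across many unsatisfied clauses simultaneously — i.e., so that any cycle deviating from the canonical assignment-induced traversal pays a cost proportional to the number of clauses it fails, rather than patching many failures with a single long detour. Engineering gadgets that are both \emph{locally rigid} (a constant-size violation per unsatisfied clause) and of \emph{total linear size} (to keep $N = \Theta(m)$) is the technical heart of the Papadimitriou–Yannakakis and Engebretsen–Karpinski arguments, and is what ultimately fixes the explicit value of the constant $\sigma$.
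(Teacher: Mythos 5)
The paper does not actually prove this lemma: it is imported verbatim from the cited references \cite{PY93,EK01}, so there is no in-paper argument to compare against. Your outline faithfully reproduces the strategy of those references --- a gap-preserving reduction from a MAX-SNP-hard bounded-occurrence satisfiability problem, with weight-$1$ edges forming a ``core'' graph built from variable and clause gadgets, completeness giving a tour of length exactly $N$, and soundness charging weight-$2$ edges to unsatisfied clauses. So at the level of approach you are exactly aligned with the source of the result (and you correctly read the paper's ``$(1+\sigma n)$'' as the intended multiplicative gap $(1+\sigma)n$).

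As a standalone proof, however, your write-up has a genuine gap, and you identify it yourself: the variable and clause gadgets are never exhibited, and the soundness invariant --- that every unsatisfied clause forces at least one weight-$2$ edge in a constant-size neighborhood of its gadget, with these charges essentially disjoint --- is asserted rather than verified. This local-rigidity property is the entire content of the theorem; without concrete gadgets one cannot rule out a tour that ``amortizes'' a single long detour over many unsatisfied clauses, nor verify that reading an assignment off an arbitrary (non-canonical) traversal is even well defined. If you intend this as a citation-level justification, it is fine and matches what the paper does; if you intend it as a proof, you must construct the gadgets (e.g., the exclusive-or / consistency gadgets of Papadimitriou--Yannakakis) and prove the charging argument, since that is where the constant $\sigma$ actually comes from.
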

Next, we reduce the {\rm{$(1,2)$-TSP}} problem to a TSP-Path problem with distance $1$ and $2$ ($\rm{(1,2)}$-TSP-Path).
\begin{lemma}
    There is a constant $\delta > 0$, such that it is $\mathbf{NP}$-hard
    to distinguish between the instances of {\rm{$\rm{(1,2)}$-TSP-Path}} ($K_n$ with edge weights $1$ and $2$)
    having shortest Hamiltonian path length $(n - 1)$ and shortest Hamiltonian path length at least
    $(1 + \delta n)$.
%In a {\rm{$\rm{(1,2)}$-TSP-Path}} instance, there is a constant $\delta > 0$ such that it's $\mathbf{NP}$-hard to distinguish whether there is a Hamiltonian path of length $n-1$ or any Hamiltonian path in the graph has a length at least $(1+\delta)n$.
\end{lemma}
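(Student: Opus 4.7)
The plan is to reduce $(1,2)$-TSP directly to $(1,2)$-TSP-Path via the identity map: given the complete graph $K_n$ with weights in $\{1,2\}$ as an instance of $(1,2)$-TSP, interpret the exact same weighted graph as an instance of $(1,2)$-TSP-Path. The goal is then to verify that the shortest Hamiltonian path length tracks the shortest Hamiltonian cycle length closely enough that the multiplicative gap in Lemma~\ref{12TSP} is essentially preserved (with a slightly smaller constant).

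The first step is the completeness direction. If the shortest Hamiltonian cycle has length $n$, then since all edge weights lie in $\{1,2\}$, every edge of the optimal cycle must have weight $1$. Deleting any one edge from this cycle produces a Hamiltonian path using only weight-$1$ edges, whose length is exactly $n-1$. Hence the shortest Hamiltonian path in this case is at most (and, since weights are positive integers, exactly) $n-1$.

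The second step is the soundness direction. Suppose the shortest Hamiltonian cycle has length at least $(1+\sigma)n$. Take any Hamiltonian path $v_1, v_2, \dots, v_n$ of length $L$ and close it into a Hamiltonian cycle by adding the edge $(v_1, v_n)$, whose weight is at most $2$. The resulting cycle has length at most $L+2$, and must be at least $(1+\sigma)n$ by hypothesis, so $L \geq (1+\sigma)n - 2$. Thus it is $\mathbf{NP}$-hard to distinguish instances with shortest Hamiltonian path length $n-1$ from those with shortest Hamiltonian path length at least $(1+\sigma)n-2$.

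The final step is purely arithmetic: choose any constant $\delta \in (0,\sigma)$, say $\delta = \sigma/2$, and pick $n_0$ large enough that $(1+\sigma)n - 2 \geq (1+\delta)n$ for all $n \geq n_0$; instances with $n < n_0$ can be solved in constant time by brute force and do not affect hardness. I do not expect any serious obstacle here; the only thing to be careful about is matching the exact form of the gap used in the statement (multiplicative vs.\ additive in $n$), but since $(1+\sigma)n - 2$ differs from $(1+\sigma)(n-1)$ only by an $O(1)$ additive term, the same choice of $\delta$ works under either interpretation.
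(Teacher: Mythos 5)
Your proposal is correct and matches the paper's own argument essentially line for line: the identity reduction, the completeness via deleting one edge of a weight-$1$ Hamiltonian cycle, the soundness via closing a Hamiltonian path into a cycle at cost at most $2$, and the final choice $\delta = \sigma/2$ valid for $n > 4/\sigma$. No issues.
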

\begin{proof}
%The completeness follow directly from Lemma~\ref{12TSP}.
\emph{Completeness}: If a (1,2)-weighted complete graph $K_n$ 
has a Hamiltonian cycle of length $n$, then it has a Hamiltonian path of length $n-1$.

\emph{Soundness}: Given a (1,2)-weighted complete graph $K_n$, let $\mathcal{Q}$ be 
a shortest Hamiltonian cycle in $K_n$, and suppose its length satisfies $\sizeof{\mathcal{Q}}\geq (1+\sigma)n$. 
Then the shortest Hamiltonian path in $K_n$ has length at least
$(1+\sigma)n-2$. Let $\delta = \frac{\sigma}{2}$. Then for $n > \frac{4}{\sigma}$,
$\sizeof{\mathcal{H}} \geq (1+\delta)n$.
%\todo{soundness is also easy}
\end{proof}

\begin{lemma}
There is a constant $\delta$ such that it is $\mathbf{NP}$-hard to
distinguish between graphs with minimum path cover number $1$ and minimum
path cover number at least $\delta n$.
%and whether there is a path cover of size $1$ or the minimum path
% cover has a size at least $\delta n$.
\begin{proof}
\emph{Completeness}: In a (1,2)-weighted complete graph $K_n$, if $\mathcal{H}^*$ is 
a shortest Hamiltonian path in $K_n$, and its length satisfies $\sizeof{\mathcal{H}^*}=n-1$, 
then $\mathcal{H}^*$ is a Hamiltonian path of the subgraph consists of 
all edges in $K_n$ that have weights equal to $1$.

\emph{Soundness}: In a (1-2)-weighted complete graph $K_n$, in which any 
Hamiltonian path $\mathcal{H}$ satisfies $\sizeof{\mathcal{H}} \geq (1+\delta)n$, 
the minimum path cover number in the subgraph consists of all edges in $K_n$ 
with weight $1$ is at least $\delta n+2$.
\end{proof}
\end{lemma}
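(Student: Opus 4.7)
The plan is a gap-preserving reduction from the $(1,2)$-TSP-Path gap problem established in the preceding lemma. Given a $(1,2)$-weighted complete graph $K_n$ on vertex set $V$, I would in polynomial time construct the unweighted graph $G = (V, E_1)$, where $E_1$ is the set of edges of $K_n$ whose weight equals $1$. The claim to prove is that this reduction separates minimum path cover number $1$ from minimum path cover number $\Omega(n)$, with the same constant $\delta$ (up to additive slack) as in the preceding lemma.

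For completeness, if $K_n$ has a Hamiltonian path $\mathcal{H}^\ast$ of length exactly $n-1$, each of its $n-1$ edges must have weight $1$ and hence lie in $E_1$. Thus $\mathcal{H}^\ast$ is a Hamiltonian path of $G$, witnessing a path cover of size $1$.

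For soundness, I would suppose $G$ admits a path cover $\Pcal = \{\pth_1, \ldots, \pth_t\}$ of size $t$. Since the paths partition $V$ and together contain exactly $n - t$ edges, all of weight $1$ in $K_n$, I would concatenate them in an arbitrary order by inserting $t - 1$ connector edges between successive path endpoints. Each connector is an edge of $K_n$ of weight at most $2$, so the resulting Hamiltonian path $\mathcal{H}$ in $K_n$ has total length at most $(n - t) \cdot 1 + (t - 1) \cdot 2 = n + t - 2$. Under the soundness hypothesis of the preceding lemma, every Hamiltonian path in $K_n$ has length at least $(1 + \delta)n$, which forces $n + t - 2 \geq (1 + \delta)n$ and hence $t \geq \delta n + 2$. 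Contrapositively, a polynomial-time distinguisher for path-cover-number $1$ versus $\geq \delta n + 2$ on $G$ would decide the $(1,2)$-TSP-Path gap instance on $K_n$, contradicting the preceding lemma.

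The main obstacle is really just bookkeeping rather than conceptual difficulty: one must verify the arithmetic of the length bound $n + t - 2$ (keeping in mind that individual connector edges may turn out to have weight $1$, which only decreases the bound), confirm that the constant $\delta$ propagates unchanged from the $(1,2)$-TSP-Path gap, and note that the reduction is on the same vertex set, so the scaling ``$\delta n$'' is well-defined and the final statement in the lemma (which uses $\delta n$ rather than $\delta n + 2$) holds for all sufficiently large $n$ by absorbing the additive $2$ into the constant.
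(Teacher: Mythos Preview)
Your proposal is correct and follows essentially the same reduction as the paper: extract the weight-$1$ subgraph of the $(1,2)$-weighted $K_n$, then argue completeness via the Hamiltonian path of length $n-1$ and soundness by concatenating a size-$t$ path cover into a Hamiltonian path of length at most $n+t-2$. In fact your soundness argument spells out the concatenation and length-counting step that the paper's proof leaves implicit, and your remark about absorbing the additive $2$ into the constant is exactly the bookkeeping needed to match the lemma statement.
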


\subsection{Exponential Inapproximability for NSTM}

We now consider an instance of NSTM we described in Section~\ref{sec:res}.
	Recall that we obtain the number of spanning trees in $G+P$ by iteratively using the matrix 
	determinant lemma: $$\Tcal(G^{(i+1)}) = \kh{ 1+\er^{G^{(i)}}(u_i,v_i) } \Tcal(G^{(i)})\,. $$
	We are interested in the effective resistance between endpoints of an edge $e\in\ses$ 
	that is not contained in any path in a minimum path cover $\Pcal$ of the graph $H[V']$ using
	only edges in $\ses$.
	
	\begin{lemma}
	\label{intterRes:lemma}
		Let $H = (V, E)$ be an unweighted graph equal to a star $S_n$ plus $H$'s subgraph 
		$H[V']=(V', E')$ supported on $S_n$'s leaves (see Figure~\ref{starPlusH:fig}).
        Let
        $P$ be an arbitrary subset of $E'$ with $\sizeof{P} = n - 1$,
        and $\Pcal$ be the minimum path cover of $H[V']$
        using only edges in $\ses$.
        Let $e=(u_i, v_i)$ be an edge that is contained
		in $\ses$ but not in the path cover $\Pcal$, then:
		\begin{enumerate}
            \item If $u_i$ is not an endpoint of a path in the path cover $\Pcal$,
			and $v_i$  is not an isolated vertex in $\Pcal$, then
            \[
                \er(u_i, v_i)\leq \frac{7}{6}.
            \]
			\item If $u_i$  is not an endpoint of a path in the path cover $\Pcal$,
			and $v_i$ is an isolated vertex in $\Pcal$, then
            \[
                \er(u_i, v_i) \leq \frac{3}{2}.
            \]
			\item If $u_i$ and $v_i$ are endpoints of the same path in the path cover $\Pcal$,
			then
            \[
                \er(u_i, v_i) < \sqrt{5}-1.
            \]
		\end{enumerate}
	\end{lemma}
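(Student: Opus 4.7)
The plan is to apply Rayleigh's monotonicity law to reduce to working in $S_n + \Pcal$ (since $G^{(i)}$ is obtained by further edge additions, which can only decrease effective resistance), and then to analyze each case separately: Cases 1 and 2 by Thomson's principle with an explicit unit flow, and Case 3 by an explicit spanning-tree count in a fan subgraph together with the matrix determinant lemma.

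For Cases 1 and 2, I would construct a unit $u_i$-to-$v_i$ flow that funnels current through the star center $r$, using the path-neighbors of $u_i$ and $v_i$ in $\Pcal$ as auxiliary side-routes. Since $u_i$ is internal, it has two path-neighbors $a_1, a_2$; sending $1/2$ of the outflow directly through $(u_i, r)$ and $1/4$ through each length-$2$ detour $(u_i, a_j)$--$(a_j, r)$ contributes exactly $1/2$ to the energy on the $u_i$ side. Symmetrically on the $v_i$ side, the analogous routing contributes $2/3$ when $v_i$ has exactly one path-neighbor (Case 1 endpoint), $1/2$ when $v_i$ has two path-neighbors (Case 1 internal), and $1$ when $v_i$ is isolated in $\Pcal$ (Case 2, using only the direct edge $(v_i, r)$). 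Summing the two contributions yields energies of $7/6$, $1$, and $3/2$ respectively, each of which upper-bounds $\er(u_i, v_i)$ by Thomson's principle.

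For Case 3, monotonicity lets me delete all edges outside the path $\pth$ containing $u_i, v_i$ together with all star edges to leaves not on $\pth$, leaving a fan graph $F_{l+1}$ in which $\pth$ becomes the path on the leaves. Since $(u_i, v_i) \notin \Pcal$, the path $\pth$ has length $l \geq 2$. Adding the chord $(u_i, v_i)$ converts $F_{l+1}$ into the wheel $W_{l+1}$, so by Kirchhoff's matrix-tree theorem and the matrix determinant lemma,
\[
\er^{F_{l+1}}(u_i, v_i) = \frac{\Tcal(W_{l+1})}{\Tcal(F_{l+1})} - 1.
\]
Combining the given formula $\Tcal(F_n) = (\alpha^n - \alpha^{-n})/\sqrt{5}$ with $\alpha = (3+\sqrt{5})/2$ and the corresponding wheel formula $\Tcal(W_n) = \alpha^n + \alpha^{-n} - 2$, and factoring $\alpha^n + \alpha^{-n} - 2 = (\alpha^{n/2} - \alpha^{-n/2})^2$, this ratio simplifies to $\sqrt{5} \cdot (\alpha^{l+1} - 1)/(\alpha^{l+1} + 1) - 1$, which I can verify is strictly less than $\sqrt{5} - 1$ for every $l \geq 2$ (and approaches $\sqrt{5} - 1$ as $l \to \infty$).

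The subtle point is in Cases 1 and 2 when $u_i, v_i$ lie on the same path of $\Pcal$, so that a path-neighbor of $u_i$ may coincide with one of $v_i$ and the two halves of the above flow share edges. I expect to handle this by noting that superposing the two halves still yields a valid unit $u_i v_i$-flow, and that on any shared edge the opposing currents partially cancel, so the total energy is at most (in fact strictly less than) the sum of the two separately computed energies. Hence the bounds $7/6$ and $3/2$ continue to hold in these overlap cases.
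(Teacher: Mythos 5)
Your proof is correct and follows essentially the same route as the paper: Cases 1 and 2 are the paper's Rayleigh-monotonicity/series--parallel bounds $\tfrac12+\tfrac23$ and $\tfrac12+1$ recast as energies of explicit unit flows via Thomson's principle, and Case 3 is exactly the paper's argument $1+\er^{F_{l+1}}(u_i,v_i)=\Tcal(W_{l+1})/\Tcal(F_{l+1})<\sqrt5$, except that you derive the inequality from the closed-form tree counts rather than citing~\cite{MEM14}. Your explicit treatment of the situation where $u_i$ and $v_i$ share a path-neighbor (so the two half-flows overlap on a spoke with opposing, hence cancelling, currents) is handled correctly and is a point the paper's figure-based justification leaves implicit.
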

	\begin{figure}
        \centering
  	\begin{subfigure}[b]{0.35\textwidth}
        \centering
    	\includegraphics[width=\textwidth]{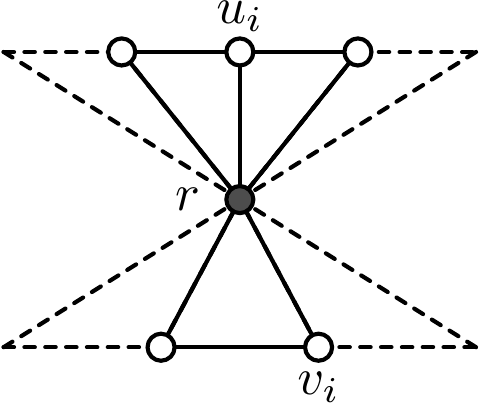}
    	\subcaption{Case 1}
    	\label{subfig:1}
  	\end{subfigure}
  	\qquad
  	\begin{subfigure}[b]{0.35\textwidth}
        \centering
    	\includegraphics[width=\textwidth]{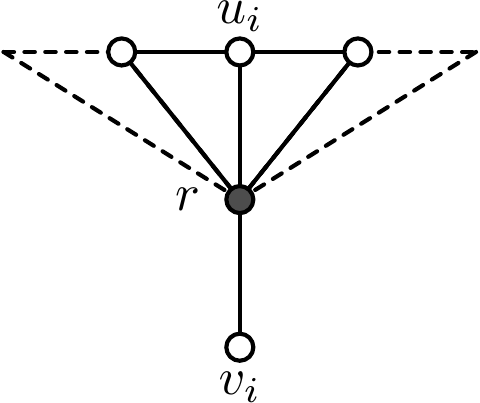}
    	\subcaption{Case 2}
    	\label{subfig:2}
  	\end{subfigure}
    \caption{Explanation for Lemma~\ref{intterRes:lemma}. Dotted lines represent
    remaining parts of the path that $u_i$ or $v_i$ belongs to, and the edges
    connecting $r$.}
  	\label{ERBetwPaths:fig}
	\end{figure}
	\begin{proof}
		From Rayleigh's Monotonicity law, we know that for the first case 
		\[
            \er(u_i, v_i)\leq \frac{1}{2}+\frac{2}{3}.
        \]
        Similar analysis shows
		that 
        \[
            \er(u_i, v_i)\leq \frac{1}{2}+1
        \]
        holds for the second case in the lemma.
        Figure~\ref{ERBetwPaths:fig} shows how we obtain these bounds.
		
		The last case follows directly from the fact that
        \[
            \Tcal(W_n) < \sqrt{5} \cdot \Tcal(F_n),
        \]
		which was shown in~\cite{MEM14}.
	\end{proof}		
	
	We first prove the completeness of the reduction.
\begin{lemma}
	Let $H = (V, E)$ be an unweighted graph equal to a star $S_n$ plus its subgraph
	$H[V']=(V', E')$ supported on $S_n$'s leaves.
    In the NSTM instance where $G=S_{n}$, $\ces=E'$, and
    $k=n-1$, there exists $\ses\subseteq \ces$, $|\ses|=k$ which satisfies
    $$\Tcal\kh{G+\ses} = \frac{1}{\sqrt{5}}\kh{\kh{\frac{3+\sqrt{5}}{2}}^{n}
    -\kh{\frac{3-\sqrt{5}}{2}}^{n}}\,$$
    if $H[V']$ has a Hamiltonian path.
\end{lemma}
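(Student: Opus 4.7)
The plan is to give a direct constructive proof: since $H[V']$ has a Hamiltonian path by assumption, I will simply take $\ses$ to be the edge set of that path, observe that $G+\ses$ is then exactly the fan graph $F_n$, and invoke the closed-form count from equation~(\ref{eq:fannst}) which has already been stated (citing~\cite{MEM14}).

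First, I would let $\Pcal^\ast \subseteq E'$ denote the edge set of a Hamiltonian path of $H[V']$. Since $H[V']$ is supported on the $n$ leaves of $S_n$, we have $|V'|=n$, so $|\Pcal^\ast|=n-1=k$, and $\Pcal^\ast \subseteq E' = \ces$, making $\Pcal^\ast$ a feasible choice for $\ses$. Set $\ses \defeq \Pcal^\ast$.

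Next, I would observe the graph-theoretic identification: $G + \ses = S_n + \Pcal^\ast$ is precisely a star $S_n$ together with a Hamiltonian path supported on its leaves, i.e., the fan graph $F_n$ with $n-1$ triangles defined in Section~\ref{hardness::sec}. Plugging this into~(\ref{eq:fannst}) yields
\[
    \Tcal(G+\ses) = \Tcal(S_n+\Pcal^\ast) = \frac{1}{\sqrt{5}}\kh{\kh{\frac{3+\sqrt{5}}{2}}^{n} - \kh{\frac{3-\sqrt{5}}{2}}^{n}},
\]
which is exactly the claimed value.

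There is no substantive obstacle: the lemma is really a packaging step aligning the completeness side of the reduction with the known fan-graph spanning tree count. The only care needed is bookkeeping around conventions, namely confirming that the $n$ in $S_n$ (the number of leaves), the size $|V'|=n$, the cardinality $k=n-1$, and the exponent $n$ appearing in~(\ref{eq:fannst}) are all consistent with the definitions fixed at the start of Section~\ref{hardness::sec}; once that is verified, the proof is a one-line substitution.
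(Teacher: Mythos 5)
Your proposal is correct and matches the paper's own proof: the paper likewise takes $\ses$ to be the edge set of a Hamiltonian path $\Pcal^\ast$ of $H[V']$, observes that $S_n + \Pcal^\ast$ is the fan $F_n$, and cites the closed-form count from~\cite{MEM14}. Your extra bookkeeping that $|\Pcal^\ast| = n-1 = k$ so the choice is feasible is a welcome (if minor) addition.
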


\begin{proof}
    Suppose there is a Hamiltonian path $\Pcal^\ast$ in graph $H[V']$, then $S_{n}+ \Pcal^\ast$ 
    is a fan $F_n$. It's number of
    spanning trees is given in~\cite{MEM14} as
    \begin{align}
        \Tcal(S_{n}+ \Pcal^\ast)= \Tcal(F_{n}) = \frac{1}{\sqrt{5}}\kh{\kh{\frac{3+\sqrt{5}}{2}}^{n}
        -\kh{\frac{3-\sqrt{5}}{2}}^{n}}\,.
    \end{align}

%in $G_0$ that do not constitutes a Hamiltonian path,
%we can find a sequence of edge sets $E_1, E_2,\dots, E_{n}$ which maintains the same cardinality while
%$\Tcal(G_{\rm{ini}}\cup E_0) < \Tcal(G_{\rm{ini}}\cup E_1)<\dots<\Tcal(G_{\rm{ini}}\cup E_n)$ and
%$E_n$ is a Hamiltonian path of $G_0$.
%
%Before we prove the necessity we introduce a lemma which compares number of spanning trees in some considered graphs.
\end{proof}

Before proving the soundness of the reduction, we warm up by proving the following lemma:

\begin{lemma}
	Let $H = (V, E)$ be an unweighted graph equal to a star $S_n$ plus its subgraph
	$H[V']=(V', E')$ supported on $S_n$'s leaves.
    In the NSTM instance where $G=S_{n}$, $\ces=E'$, and
    $k=n-1$, for any edge 
    set $\ses\subseteq \ces$ with $k$ edges that does not constitute a 
    Hamiltonian path, $$\Tcal\kh{S_{n}+\ses} < \Tcal\kh{S_{n}+\Pcal^\ast}.$$ 
\end{lemma}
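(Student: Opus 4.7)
The plan is to reuse the minimum-path-cover decomposition that powers Lemma~\ref{soundness:lemma} but aim only for a strict (rather than exponential) gap. Let $\mathcal{P}=\{\mathfrak{p}_1,\ldots,\mathfrak{p}_t\}$ be any minimum path cover of $H[V']$ that uses only edges from $\ses$, and write $l_i = |\mathfrak{p}_i|$. Since $\ses$ is not a Hamiltonian path we have $t\geq 2$, so there are exactly $t-1$ extra edges $(u_j,v_j)\in \ses\setminus\mathcal{P}$. Combining (\ref{eq:nstprod}) and (\ref{eq:nstg1}) yields
\[
\Tcal(S_n+\ses) \;=\; \prod_{i=1}^t f(l_i+1)\cdot\prod_{j=1}^{t-1}\bigl(1+\er^{G^{(j)}}(u_j,v_j)\bigr),
\]
where $f(m)=(\phi^m-\psi^m)/\sqrt 5$ with $\phi,\psi$ as in (\ref{eq:fannst}) and $G^{(1)}=S_n+\mathcal{P}$; the objective is to prove this right-hand side is strictly less than $f(n)=\Tcal(S_n+\mathcal{P}^\ast)$.

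First I would prove the strict super-multiplicative identity $f(a+b) > \sqrt 5\cdot f(a)f(b)$ for all $a,b\geq 1$. Expanding the left-hand side using $\phi\psi=1$ reduces this to the elementary inequality $\phi^{|a-b|}+\psi^{|a-b|} > 2\psi^{a+b}$, which is immediate because the left-hand side is $\geq 2$ by AM--GM while $\psi^{a+b}<1$. Iterating $t-1$ times gives $f(n) > (\sqrt 5)^{t-1}\prod_i f(l_i+1)$. Then I would invoke Lemma~\ref{intterRes:lemma} to bound each extra edge: minimality of $\mathcal{P}$ forbids extra edges joining endpoints of different paths or an endpoint to an isolated vertex, so (after relabeling $u_j\leftrightarrow v_j$) every extra edge falls into case 1, 2, or 3, bounding $1+\er^{G^{(j)}}(u_j,v_j)$ by $13/6$, $5/2$, or $\sqrt 5$ respectively. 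In the sub-case where no extra edge is in case 2, each of these bounds is strictly less than $\sqrt 5$, so $\prod_j(1+\er^{G^{(j)}}) < (\sqrt 5)^{t-1}$ and the desired strict inequality follows at once.

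The main obstacle is handling case-2 extra edges, whose $5/2$ bound overshoots $\sqrt 5$. Here I would use that every case-2 edge is incident to an isolated vertex of $\mathcal{P}$, which contributes a factor $f(1)=1$ to $\prod f(l_i+1)$. In the super-multiplicativity chain I would absorb each such isolated vertex into a length-$\geq 2$ path \emph{before} combining the length-$\geq 2$ paths with one another; this replaces a generic $\sqrt 5$ merge by a single-vertex absorption of gain $f(a+1)/f(a)\geq\phi$, provable by induction from the recurrence $f(n+1)=3f(n)-f(n-1)$. Since $\phi=(3+\sqrt 5)/2 > 5/2$, the extra slack from the single-vertex merge exactly absorbs the case-2 bound. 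The remaining subtlety is when several case-2 edges share the same isolated vertex; I would exploit the order-independence of $\prod(1+\er^{G^{(j)}})$ to add them consecutively and invoke Rayleigh's monotonicity: after the first such edge the vertex is no longer isolated, so each subsequent $\er^{G^{(j)}}$ falls strictly below $3/2$, yielding the residual slack needed to close the budget and conclude $\Tcal(S_n+\ses) < f(n) = \Tcal(S_n+\mathcal{P}^\ast)$.
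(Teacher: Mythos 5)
Your proposal follows essentially the same route as the paper's proof: decompose $\Tcal(S_n+\ses)$ via a minimum path cover using (\ref{eq:nstprod})--(\ref{eq:nstg1}), bound the $t-1$ extra-edge factors $1+\er^{G^{(j)}}(u_j,v_j)$ through the three cases of Lemma~\ref{intterRes:lemma} (minimality ruling out the other configurations), and beat them with the super-multiplicativity $f(a+b)>\sqrt{5}\,f(a)f(b)$ together with the single-vertex absorption gain $f(a+1)/f(a)>\phi=(3+\sqrt{5})/2>5/2$. In fact you are more explicit than the paper's own inequality (\ref{snp:ieqty1}), which silently charges only $\kappa$ extra edges with the $5/2$ bound; your ordering argument for several case-2 edges sharing one isolated vertex is exactly the justification that step needs. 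The one quantitative slip is at the very end: knowing that the subsequent resistances at a reused isolated vertex fall ``strictly below $3/2$'' does not close the budget, because the per-vertex requirement is roughly $(5/2)\prod_{l\geq 2}(1+r_l)<\phi\cdot(\sqrt{5})^{\,j-1}$, and a factor as large as $5/2-\epsilon$ per subsequent edge still exceeds $\sqrt{5}\approx 2.236$. What you actually need is each subsequent resistance below $\sqrt{5}-1\approx 1.236$, and the same Rayleigh/series computation that gives Case 1 its $7/6$ bound delivers it: once the formerly isolated vertex has acquired an edge to an interior vertex of a path, its resistance to the center drops to at most $2/3$, and the other endpoint of the new edge (interior, hence of degree at least $3$) contributes at most $1/2$, so $\er\leq 7/6<\sqrt{5}-1$. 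With that one bound tightened, your argument is complete and matches the paper's.
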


\begin{proof}

    Suppose $\Pcal'=\{\pth'_1, \pth'_2,\dots, \pth'_{t'} \}$ is a minimum path cover of $G''=(V', P)$,
    with $\sizeof{\pth_i'} = l'_i$ and $l'_1 \leq l'_2 \leq \dots \leq l'_{t'} $. Suppose 
    $l'_1 = \dots = l'_\kappa = 0< l'_{\kappa+1}$, then
    \begin{align}
        \Tcal\kh{S_{n} + \Pcal'} & = \prod_{i=1}^{t'} \frac{1}{\sqrt{5}}\kh{\kh{\frac{3+\sqrt{5}}{2}}^{l'_i+1}
        -\kh{\frac{3-\sqrt{5}}{2}}^{l'_i+1}}\,,
    \end{align}
    %\todo{discuss isolated vertices}
    Since $7/6 < \sqrt{5}-1 < 3 / 2$
    \begin{align}
        \Tcal\kh{S_{n} + P} &\leq \kh{\frac{5}{2}}^\kappa\cdot 5^{(t-\kappa-1)/2}\cdot 
        5^{-(t-\kappa)/2} \prod_{i=\kappa+1}^{t'}
        \kh{\kh{\frac{3+\sqrt{5}}{2}}^{l'_i+1}-\kh{\frac{3-\sqrt{5}}{2}}^{l'_i+1}}     \label{snp:ieqty1} \\
        & <\kh{\frac{5}{2}}^\kappa \cdot \frac{1}{\sqrt{5}}\kh{\kh{\frac{3+\sqrt{5}}{2}}^{n-\kappa} 
        -\kh{\frac{3-\sqrt{5}}{2}}^{n-\kappa}}\qquad \label{snp:ieqty1:2}\\
        & <\frac{1}{\sqrt{5}}\kh{\kh{\frac{3+\sqrt{5}}{2}}^{n}
        -\kh{\frac{3-\sqrt{5}}{2}}^{n}}  \notag\\
        & = \Tcal(S_{n+1}+ \Pcal^\ast)\,. \notag
    \end{align}
    where (\ref{snp:ieqty1:2}) follows from (\ref{snp:ieqty1}) by the fact that $\sum_{i=\kappa+1}^{t'}\kh{l'_i+1}=n-\kappa$.
\end{proof}

Next, we prove the soundness of the reduction.
\begin{proof}[Proof of Lemma~\ref{soundness:lemma}]
    Suppose the minimum path cover number of $H[V'] = (V',E')$ is at least $\delta n$.
	%Let $\Pcal$ be the minimum path cover of $G'' =(V',\ses)$
	%and $t\geqslant \delta n$.
    Then any path cover $\Pcal'=\{\pth'_1, \pth'_2,\dots, \pth'_{t'} \}$
	of $G''=(V', \ses)$ must satisfy $t'\geqslant t$. We let $\sizeof{\pth'_i}=l'_i$
	for $i\in \{1,\dots, t'\}$ and $l'_1\leq l'_2 \leq \dots \leq l'_{t'} $. Suppose 
    $l'_1 = \dots = l'_\kappa = 0 < l'_{\kappa+1}$; then according to (\ref{snp:ieqty1}), we obtain
	\begin{align*}
        \Tcal\kh{S_{n} + P} &\leq \kh{\frac{5}{2}}^{\kappa}\cdot \frac{1}{\sqrt{5}} \prod_{l'_i\geqslant 1}
        \kh{\kh{\frac{3+\sqrt{5}}{2}}^{l'_i+1}-\kh{\frac{3-\sqrt{5}}{2}}^{l'_i+1}}
    \end{align*}
    Since the average length of paths in $\Pcal'$ satisfies
    $$\expec{i}{l'_i}\leqslant \frac{n-\delta n}{\delta n}=\frac{1-\delta}{\delta},$$
    by Markov's inequality, $$\prob{i}{l'_i \geqslant \frac{2(1-\delta)}{\delta}}
    \leqslant \frac{1}{2}\,.$$ Therefore %\todo{discuss isolated vertices}
    \begin{align*}
    	\Tcal\kh{S_{n} + P} \leq &\kh{\frac{5}{2}}^{\kappa}\cdot \frac{1}{\sqrt{5}}
    	\kh{\prod_{1 \leq l'_i<\frac{2(1-\delta)}{\delta}}
    	\kh{\kh{\frac{3+\sqrt{5}}{2}}^{l'_i+1}-\kh{\frac{3-\sqrt{5}}{2}}^{l'_i+1}}}\\
    	&\qquad \cdot \kh{\prod_{l'_i\geqslant \frac{2(1-\delta)}{\delta}}
    	\kh{\kh{\frac{3+\sqrt{5}}{2}}^{l'_i+1}-\kh{\frac{3-\sqrt{5}}{2}}^{l'_i+1}}}\\
    	\leq &\kh{\frac{5}{2}}^{\kappa} \cdot \frac{1}{\sqrt{5}}
    	\kh{ \prod_{1\leq l'_i<\frac{2(1-\delta)}{\delta}}
    	\kh{ 1- \kh{\frac{3-\sqrt{5}}{2}}^{2l'_i+2} }}
    	\kh{\frac{3+\sqrt{5}}{2}}^{n-\kappa}\\
    	<& \kh{\frac{5}{2}}^\kappa\cdot \frac{1}{\sqrt{5}}\kh{ 1- 
    	\kh{\frac{3-\sqrt{5}}{2}}^{\frac{4-2\delta}{\delta}} }
    	^{\frac{\delta n}{2}-\kappa}
    	\kh{\frac{3+\sqrt{5}}{2}}^{n-\kappa} \\%\quad \text{By Markov's inequality}\\
    	=& \kh{\frac{5}{2}}^\kappa\cdot \frac{1}{\sqrt{5}}\kh{ 1- 
    	\kh{\frac{3-\sqrt{5}}{2}}^{\frac{4-2\delta}{\delta}} }
    	^{\frac{\delta n}{2}-\kappa}
    	\kh{\frac{3+\sqrt{5}}{2}}^{\frac{\delta n}{2}-\kappa} 
    	\kh{\frac{3+\sqrt{5}}{2}}^{n-\frac{\delta n}{2}}\\
    	=& \frac{1}{\sqrt{5}}\kh{\frac{5}{2}}^{\kappa}
    	\kh{ \kh{\frac{3+\sqrt{5}}{2}} - \kh{\frac{3-\sqrt{5}}{2}}^{\frac{4-3\delta}{\delta}} }
    	^{\frac{\delta n}{2}-\kappa}
    	\kh{\frac{3+\sqrt{5}}{2}}^{-\frac{\delta}{2}n}
    	\kh{\frac{3+\sqrt{5}}{2}}^{n}\,.
    \end{align*}
    The third inequality follows by Markov's inequality. Let
    \begin{align}
    	\alpha \defeq \max \setof{ \kh{\frac{15-5\sqrt{5}}{4}}^{\frac{\delta}{2}}\, , \,\,\,
    	\kh{1
    	- \kh{\frac{3-\sqrt{5}}{2}}^{\frac{4-2\delta}{\delta}}}^{\frac{\delta}{2}} }\,.
    \end{align}
    Since $\delta$ is a positive constant, $\alpha$ is a constant that satisfies $0<\alpha<1$. Then
    \begin{align*}
    	\Tcal\kh{S_{n} + P} < \frac{1}{\sqrt{5}}\kh{\alpha \cdot \frac{3+\sqrt{5}}{2}}^n\,.
    \end{align*}
    Therefore
    \begin{align*}
    	\frac{\log \Tcal\kh{S_{n}+P} }{\log \Tcal \kh{F_n}}& <\frac{-\frac{1}{2} \log 5+ n\cdot
    	 \log \kh{\alpha\cdot \frac{3+\sqrt{5}}{2}}}{-\frac{1}{2} \log 5+  n\cdot 
    	 \log \kh{\frac{3+\sqrt{5}}{2}} + \log \kh{ 1- \kh{\frac{3-\sqrt{5}}{2}}^{2n}}}\\
    	& < \frac{n\log \kh{\alpha \cdot \frac{3+\sqrt{5}}{2}}}
    	{n\log \kh{\frac{3+\sqrt{5}}{2}}+ \log \kh{ 1- \kh{\frac{3-\sqrt{5}}{2}}^{2n}}} \,.
    \end{align*}
    If $n$ satisfies $$n>\max \setof{\frac{1}{2\log\frac{3+\sqrt{5}}{2}}\,,\,
    \frac{2}{\log \frac{1}{\alpha}} }\,,$$
    then
    \begin{align*}
    	\frac{\log \Tcal\kh{S_{n}+P} }{\log \Tcal \kh{F_n}}& 
    	<\frac{ n\cdot
    	 \log \kh{\alpha}+n\cdot \log\kh{ \frac{3+\sqrt{5}}{2}}}{ n\cdot 
    	 \log \kh{\frac{3+\sqrt{5}}{2}} + \log \kh{ 1- \kh{\frac{3-\sqrt{5}}{2}}^{2n}}}\\
    	& < \frac{\frac{1}{2}\log \kh{\alpha} +\log\kh{ \frac{3+\sqrt{5}}{2}}}
    	{\log \kh{\frac{3+\sqrt{5}}{2}}}\\ 
    	& = 1 - \frac{\log \kh{1/ \alpha}}{2\log \kh{\frac{3+\sqrt{5}}{2}}}\,.
    \end{align*}
    Thus, we obtain the constant $c$ stated in Lemma~\ref{soundness:lemma}:
    \begin{align}
    	c = \frac{\log \kh{1/ \alpha}}{2\log \kh{\frac{3+\sqrt{5}}{2}}}\,.
    \end{align}
    %\todo{find the absolute constant}
\end{proof}

\newpage

\newcommand{\etalchar}[1]{$^{#1}$}

%\bibliographystyle{alpha}
%\bibliography{ref,sep_ref}

\begin{thebibliography}{ALSW17b}

\bibitem[ADK{\etalchar{+}}16]{ADK+16}
Ittai Abraham, David Durfee, Ioannis Koutis, Sebastian Krinninger, and Richard
  Peng.
\newblock On fully dynamic graph sparsifiers.
\newblock In {\em {IEEE} 57th Annual Symposium on Foundations of Computer
  Science, {FOCS} 2016, 9-11 October 2016, Hyatt Regency, New Brunswick, New
  Jersey, {USA}}, pages 335--344, 2016.

\bibitem[ALSW17a]{ALSW17a}
Zeyuan Allen{-}Zhu, Yuanzhi Li, Aarti Singh, and Yining Wang.
\newblock Near-optimal design of experiments via regret minimization.
\newblock In {\em Proceedings of the 34th International Conference on Machine
  Learning, {ICML} 2017, Sydney, NSW, Australia, 6-11 August 2017}, pages
  126--135, 2017.

\bibitem[ALSW17b]{ALSW17b}
Zeyuan Allen{-}Zhu, Yuanzhi Li, Aarti Singh, and Yining Wang.
\newblock Near-optimal discrete optimization for experimental design: {A}
  regret minimization approach.
\newblock {\em CoRR}, abs/1711.05174, 2017.

\bibitem[BAE11]{BAE11}
Ceren Budak, Divyakant Agrawal, and Amr {El Abbadi}.
\newblock Limiting the spread of misinformation in social networks.
\newblock In {\em Proceedings of the 20th International Conference on World
  Wide Web}, pages 665--674. ACM, 2011.

\bibitem[BBCL14]{BBCL14}
Christian Borgs, Michael Brautbar, Jennifer~T. Chayes, and Brendan Lucier.
\newblock Maximizing social influence in nearly optimal time.
\newblock In {\em Proceedings of the Twenty-Fifth Annual {ACM-SIAM} Symposium
  on Discrete Algorithms, {SODA} 2014}, pages 946--957, 2014.
\newblock Available at \url{https://arxiv.org/abs/1212.0884v5}.

\bibitem[BLS91]{BLXS91}
Francis~T Boesch, Xiaoming Li, and C~Suffel.
\newblock On the existence of uniformly optimally reliable networks.
\newblock {\em Networks}, 21(2):181--194, 1991.

\bibitem[BV14]{BV14}
Ashwinkumar Badanidiyuru and Jan Vondr{\'{a}}k.
\newblock Fast algorithms for maximizing submodular functions.
\newblock In {\em Proceedings of the Twenty-Fifth Annual {ACM-SIAM} Symposium
  on Discrete Algorithms, {SODA} 2014, Portland, Oregon, USA, January 5-7,
  2014}, pages 1497--1514, 2014.

\bibitem[CGP{\etalchar{+}}18]{CGP+18}
Timothy Chu, Yu~Gao, Richard Peng, Sushant Sachdeva, Saurabh Sawlani, and
  Junxing Wang.
\newblock Graph sparsification, spectral sketches, and faster resistance
  computation, via short cycle decompositions.
\newblock {\em CoRR}, abs/1805.12051, 2018.

\bibitem[Che81]{C81}
Ching-Shui Cheng.
\newblock Maximizing the total number of spanning trees in a graph: two related
  problems in graph theory and optimum design theory.
\newblock {\em Journal of Combinatorial Theory, Series B}, 31(2):240--248,
  1981.

\bibitem[CKM{\etalchar{+}}14]{CKMPPRX14}
Michael~B. Cohen, Rasmus Kyng, Gary~L. Miller, Jakub~W. Pachocki, Richard Peng,
  Anup~B. Rao, and Shen~Chen Xu.
\newblock Solving sdd linear systems in nearly mlog1/2n time.
\newblock In {\em Proceedings of the Forty-sixth Annual ACM Symposium on Theory
  of Computing}, STOC '14, pages 343--352, New York, NY, USA, 2014. ACM.

\bibitem[{\c{C}}M09]{CM09}
Ali {\c{C}}ivril and Malik Magdon{-}Ismail.
\newblock On selecting a maximum volume sub-matrix of a matrix and related
  problems.
\newblock {\em Theor. Comput. Sci.}, 410(47-49):4801--4811, 2009.

\bibitem[{\c{C}}M13]{CM13}
Ali {\c{C}}ivril and Malik Magdon{-}Ismail.
\newblock Exponential inapproximability of selecting a maximum volume
  sub-matrix.
\newblock {\em Algorithmica}, 65(1):159--176, 2013.

\bibitem[dBCM15]{dBCM15}
Mathias~Hudoba de~Badyn, Airlie Chapman, and Mehran Mesbahi.
\newblock Network entropy: {A} system-theoretic perspective.
\newblock In {\em 54th {IEEE} Conference on Decision and Control, {CDC} 2015,
  Osaka, Japan, December 15-18, 2015}, pages 5512--5517, 2015.

\bibitem[DGGP18]{DGGP18}
David Durfee, Yu~Gao, Gramoz Goranci, and Richard Peng.
\newblock Fully dynamic effective resistances.
\newblock {\em CoRR}, abs/1804.04038, 2018.

\bibitem[DK06]{DK06}
Frank Dellaert and Michael Kaess.
\newblock Square root sam: Simultaneous localization and mapping via square
  root information smoothing.
\newblock {\em Int. J. Rob. Res.}, 25(12):1181--1203, December 2006.

\bibitem[DKP{\etalchar{+}}17]{DKP+17}
David Durfee, Rasmus Kyng, John Peebles, Anup~B. Rao, and Sushant Sachdeva.
\newblock Sampling random spanning trees faster than matrix multiplication.
\newblock In {\em Proceedings of the 49th Annual {ACM} {SIGACT} Symposium on
  Theory of Computing, {STOC} 2017, Montreal, QC, Canada, June 19-23, 2017},
  pages 730--742, 2017.

\bibitem[DPPR17]{DPPR17}
David Durfee, John Peebles, Richard Peng, and Anup~B. Rao.
\newblock Determinant-preserving sparsification of {SDDM} matrices with
  applications to counting and sampling spanning trees.
\newblock In {\em 58th {IEEE} Annual Symposium on Foundations of Computer
  Science, {FOCS} 2017, Berkeley, CA, USA, October 15-17, 2017}, pages
  926--937, 2017.

\bibitem[EK01]{EK01}
Lars Engebretsen and Marek Karpinski.
\newblock Approximation hardness of {TSP} with bounded metrics.
\newblock In {\em Automata, Languages and Programming, 28th International
  Colloquium, {ICALP} 2001, Crete, Greece, July 8-12, 2001, Proceedings}, pages
  201--212, 2001.

\bibitem[EN17]{EN17}
Alina Ene and Huy~L. Nguyen.
\newblock A nearly-linear time algorithm for submodular maximization with a
  knapsack constraint.
\newblock {\em CoRR}, abs/1709.09767, 2017.

\bibitem[ESV17]{ESV17}
Javad~B. Ebrahimi, Damian Straszak, and Nisheeth~K. Vishnoi.
\newblock Subdeterminant maximization via nonconvex relaxations and
  anti-concentration.
\newblock In {\em 58th {IEEE} Annual Symposium on Foundations of Computer
  Science, {FOCS} 2017, Berkeley, CA, USA, October 15-17, 2017}, pages
  1020--1031, 2017.

\bibitem[FL01]{FL01}
Nasser~S. Fard and Taehan Lee.
\newblock Spanning tree approach in all-terminal network reliability expansion.
\newblock {\em Computer Communications}, 24(13):1348--1353, 2001.

\bibitem[GB06]{GB06}
Arpita Ghosh and Stephen Boyd.
\newblock Growing well-connected graphs.
\newblock In {\em Decision and Control, 2006 45th IEEE Conference on}, pages
  6605--6611. IEEE, 2006.

\bibitem[GM97]{GM97}
Bryan Gilbert and Wendy Myrvold.
\newblock Maximizing spanning trees in almost complete graphs.
\newblock {\em Networks}, 30(2):97--104, 1997.

\bibitem[Har97]{Har97}
David~A Harville.
\newblock {\em Matrix algebra from a statistician's perspective}, volume~1.
\newblock Springer, 1997.

\bibitem[Kel96]{K96}
Alexander~K Kelmans.
\newblock On graphs with the maximum number of spanning trees.
\newblock {\em Random Structures \& Algorithms}, 9(1-2):177--192, 1996.

\bibitem[KGS{\etalchar{+}}11]{KGS+11}
Rainer K{\"{u}}mmerle, Giorgio Grisetti, Hauke Strasdat, Kurt Konolige, and
  Wolfram Burgard.
\newblock G\({}^{\mbox{2}}\)o: {A} general framework for graph optimization.
\newblock In {\em {IEEE} International Conference on Robotics and Automation,
  {ICRA} 2011, Shanghai, China, 9-13 May 2011}, pages 3607--3613, 2011.

\bibitem[Kha95]{Kha95}
Leonid Khachiyan.
\newblock On the complexity of approximating extremal determinants in matrices.
\newblock {\em J. Complexity}, 11(1):138--153, 1995.

\bibitem[KHD15]{khosoussi2015good}
Kasra Khosoussi, Shoudong Huang, and Gamini Dissanayake.
\newblock Good, bad and ugly graphs for slam.
\newblock In {\em RSS Workshop on the problem of mobile sensors}, 2015.

\bibitem[KHD16]{KHD16}
Kasra Khosoussi, Shoudong Huang, and Gamini Dissanayake.
\newblock Tree-connectivity: Evaluating the graphical structure of {SLAM}.
\newblock In {\em 2016 {IEEE} International Conference on Robotics and
  Automation, {ICRA} 2016, Stockholm, Sweden, May 16-21, 2016}, pages
  1316--1322, 2016.

\bibitem[Kir47]{Kir47}
G~Kirkhoff.
\newblock {\"U}ber die aufl{\"o}sung der gleichungen, auf welche man bei der
  untersuchung der linearen verteilung galvanischer str{\"o}me gef{\"u}rht
  wird.
\newblock {\em Ann. Phys. Chem}, 72:497--508, 1847.

\bibitem[KKT03]{KKT03}
David Kempe, Jon~M. Kleinberg, and {\'{E}}va Tardos.
\newblock Maximizing the spread of influence through a social network.
\newblock In {\em Proceedings of the Ninth {ACM} {SIGKDD} International
  Conference on Knowledge Discovery and Data Mining, Washington, DC, USA,
  August 24 - 27, 2003}, pages 137--146, 2003.

\bibitem[KLP16]{KLP16}
Ioannis Koutis, Alex Levin, and Richard Peng.
\newblock Faster spectral sparsification and numerical algorithms for {SDD}
  matrices.
\newblock {\em {ACM} Trans. Algorithms}, 12(2):17:1--17:16, 2016.

\bibitem[KMST10]{KMST10}
Alexandra Kolla, Yury Makarychev, Amin Saberi, and Shang{-}Hua Teng.
\newblock Subgraph sparsification and nearly optimal ultrasparsifiers.
\newblock In {\em Proceedings of the 42nd {ACM} Symposium on Theory of
  Computing, {STOC} 2010, Cambridge, Massachusetts, USA, 5-8 June 2010}, pages
  57--66, 2010.

\bibitem[Kou06]{Kou06}
Ioannis Koutis.
\newblock Parameterized complexity and improved inapproximability for computing
  the largest j-simplex in a v-polytope.
\newblock {\em Inf. Process. Lett.}, 100(1):8--13, 2006.

\bibitem[KR93]{KR93}
Douglas~J Klein and Milan Randi{\'c}.
\newblock Resistance distance.
\newblock {\em Journal of mathematical chemistry}, 12(1):81--95, 1993.

\bibitem[KS16]{KS16}
Rasmus Kyng and Sushant Sachdeva.
\newblock Approximate gaussian elimination for laplacians - fast, sparse, and
  simple.
\newblock In {\em {IEEE} 57th Annual Symposium on Foundations of Computer
  Science, {FOCS} 2016, 9-11 October 2016, Hyatt Regency, New Brunswick, New
  Jersey, {USA}}, pages 573--582, 2016.

\bibitem[KSHD16a]{KSHD16a}
Kasra Khosoussi, Gaurav~S Sukhatme, Shoudong Huang, and Gamini Dissanayake.
\newblock Designing sparse reliable pose-graph {SLAM}: A graph-theoretic
  approach.
\newblock In {\em 2016 International Workshop on the Algorithmic Foundations of
  Robotics (WAFR)}, 2016.

\bibitem[KSHD16b]{KSHD16b}
Kasra Khosoussi, Gaurav~S. Sukhatme, Shoudong Huang, and Gamini Dissanayake.
\newblock Maximizing the weighted number of spanning trees: Near-$t$-optimal
  graphs.
\newblock {\em CoRR}, abs/1604.01116, 2016.

\bibitem[LPS{\etalchar{+}}18]{LPS+18}
Huan Li, Richard Peng, Liren Shan, Yuhao Yi, and Zhongzhi Zhang.
\newblock Current flow group closeness centrality for complex networks.
\newblock {\em CoRR}, abs/1802.02556, 2018.
\newblock Available at \url{https://arxiv.org/abs/1802.02556}.

\bibitem[LS17]{LS17}
Yin~Tat Lee and He~Sun.
\newblock An sdp-based algorithm for linear-sized spectral sparsification.
\newblock In Hamed Hatami, Pierre McKenzie, and Valerie King, editors, {\em
  Proceedings of the 49th Annual {ACM} {SIGACT} Symposium on Theory of
  Computing, {STOC} 2017, Montreal, QC, Canada, June 19-23, 2017}, pages
  678--687. {ACM}, 2017.

\bibitem[LZ18]{LZ18}
Huan Li and Zhongzhi Zhang.
\newblock Kirchhoff index as a measure of edge centrality in weighted networks:
  Nearly linear time algorithms.
\newblock In {\em Proceedings of the Twenty-Ninth Annual {ACM-SIAM} Symposium
  on Discrete Algorithms, {SODA} 2018, New Orleans, LA, USA, January 7-10,
  2018}, pages 2377--2396, 2018.

\bibitem[MEM14]{MEM14}
Raihana Mokhlissi and Mohamed El~Marraki.
\newblock Enumeration of spanning trees in a closed chain of fan and wheel.
\newblock {\em Applied Mathematical Sciences}, 8(82):4053--4061, 2014.

\bibitem[MTU16]{MTU16}
Ahmad Mahmoody, Charalampos~E. Tsourakakis, and Eli Upfal.
\newblock Scalable betweenness centrality maximization via sampling.
\newblock In {\em Proceedings of the 22nd {ACM} {SIGKDD} International
  Conference on Knowledge Discovery and Data Mining, San Francisco, CA, USA,
  August 13-17, 2016}, pages 1765--1773, 2016.

\bibitem[Myr96]{M96}
Wendy Myrvold.
\newblock Reliable network synthesis: Some recent developments.
\newblock In {\em Proceedings of International Conference on Graph Theory,
  Combinatorics, Algorithms, and Applications}, volume~3. Citeseer, 1996.

\bibitem[Nik15]{Nik15}
Aleksandar Nikolov.
\newblock Randomized rounding for the largest simplex problem.
\newblock In {\em Proceedings of the Forty-Seventh Annual {ACM} on Symposium on
  Theory of Computing, {STOC} 2015, Portland, OR, USA, June 14-17, 2015}, pages
  861--870, 2015.

\bibitem[NS16]{NS16}
Aleksandar Nikolov and Mohit Singh.
\newblock Maximizing determinants under partition constraints.
\newblock In {\em Proceedings of the 48th Annual {ACM} {SIGACT} Symposium on
  Theory of Computing, {STOC} 2016, Cambridge, MA, USA, June 18-21, 2016},
  pages 192--201, 2016.

\bibitem[NWF78]{NWF78}
George~L. Nemhauser, Laurence~A. Wolsey, and Marshall~L. Fisher.
\newblock An analysis of approximations for maximizing submodular set functions
  - {I}.
\newblock {\em Math. Program.}, 14(1):265--294, 1978.

\bibitem[NXC{\etalchar{+}}10]{NXCGH10}
Huazhong Ning, Wei Xu, Yun Chi, Yihong Gong, and Thomas~S. Huang.
\newblock Incremental spectral clustering by efficiently updating the
  eigen-system.
\newblock {\em Pattern Recognition}, 43(1):113--127, 2010.

\bibitem[PBS98]{PBS98}
Louis Petingi, Frank Boesch, and C~Suffel.
\newblock On the characterization of graphs with maximum number of spanning
  trees.
\newblock {\em Discrete mathematics}, 179(1-3):155--166, 1998.

\bibitem[PR02]{P02}
Louis Petingi and Jose Rodriguez.
\newblock A new technique for the characterization of graphs with a maximum
  number of spanning trees.
\newblock {\em Discrete mathematics}, 244(1-3):351--373, 2002.

\bibitem[PY93]{PY93}
Christos~H. Papadimitriou and Mihalis Yannakakis.
\newblock The traveling salesman problem with distances one and two.
\newblock {\em Mathematics of Operations Research}, 18(1):1--11, 1993.

\bibitem[Sch18]{Sch18}
Aaron Schild.
\newblock An almost-linear time algorithm for uniform random spanning tree
  generation.
\newblock In {\em Proceedings of the 50th Annual {ACM} {SIGACT} Symposium on
  Theory of Computing, {STOC} 2018, Los Angeles, CA, USA, June 25-29, 2018},
  pages 214--227, 2018.

\bibitem[SCL16]{SCL16}
Tyler~H Summers, Fabrizio~L Cortesi, and John Lygeros.
\newblock On submodularity and controllability in complex dynamical networks.
\newblock {\em IEEE Transactions on Control of Network Systems}, 3(1):91--101,
  2016.

\bibitem[SEFM15]{SEFM15}
Marco~Di Summa, Friedrich Eisenbrand, Yuri Faenza, and Carsten Moldenhauer.
\newblock On largest volume simplices and sub-determinants.
\newblock In {\em Proceedings of the Twenty-Sixth Annual {ACM-SIAM} Symposium
  on Discrete Algorithms, {SODA} 2015, San Diego, CA, USA, January 4-6, 2015},
  pages 315--323, 2015.

\bibitem[Shi74]{S74}
DR~Shier.
\newblock Maximizing the number of spanning trees in a graph with n nodes and m
  edges.
\newblock {\em Journal Research National Bureau of Standards, Section B},
  78(193-196):3, 1974.

\bibitem[SM14]{SM14}
Milad Siami and Nader Motee.
\newblock Schur-convex robustness measures in dynamical networks.
\newblock In {\em American Control Conference, {ACC} 2014, Portland, OR, USA,
  June 4-6, 2014}, pages 5198--5203, 2014.

\bibitem[SM18]{SM18}
M.~Siami and N.~Motee.
\newblock Growing linear dynamical networks endowed by spectral systemic
  performance measures.
\newblock {\em (To appear in) IEEE Transactions on Automatic Control}, 63(8),
  Aug 2018.

\bibitem[SS11]{SS11}
Daniel~A. Spielman and Nikhil Srivastava.
\newblock Graph sparsification by effective resistances.
\newblock {\em {SIAM} J. Comput.}, 40(6):1913--1926, 2011.

\bibitem[ST14]{ST14}
Daniel~A. Spielman and Shang{-}Hua Teng.
\newblock Nearly linear time algorithms for preconditioning and solving
  symmetric, diagonally dominant linear systems.
\newblock {\em {SIAM} J. Matrix Analysis Applications}, 35(3):835--885, 2014.

\bibitem[Sto87]{Sto87}
Kenneth~B Stolarksy.
\newblock Random walks and electric networks., 1987.

\bibitem[SX18]{SX18}
Mohit Singh and Weijun Xie.
\newblock Approximate positive correlated distributions and approximation
  algorithms for d-optimal design.
\newblock In {\em Proceedings of the Twenty-Ninth Annual {ACM-SIAM} Symposium
  on Discrete Algorithms, {SODA} 2018, New Orleans, LA, USA, January 7-10,
  2018}, pages 2240--2255, 2018.

\bibitem[TM06]{TM06}
Sebastian Thrun and Michael Montemerlo.
\newblock The graph {SLAM} algorithm with applications to large-scale mapping
  of urban structures.
\newblock {\em I. J. Robotics Res.}, 25(5-6):403--429, 2006.

\bibitem[Wan94]{W94}
Guifang Wang.
\newblock A proof of {Boesch's} conjecture.
\newblock {\em Networks}, 24(5):277--284, 1994.

\bibitem[Yos14]{Yos14}
Yuichi Yoshida.
\newblock Almost linear-time algorithms for adaptive betweenness centrality
  using hypergraph sketches.
\newblock In {\em The 20th {ACM} {SIGKDD} International Conference on Knowledge
  Discovery and Data Mining, {KDD} '14, New York, NY, {USA} - August 24 - 27,
  2014}, pages 1416--1425, 2014.

\bibitem[ZEP11]{ZEP11}
M.~M. Zavlanos, M.~B. Egerstedt, and G.~J. Pappas.
\newblock Graph-theoretic connectivity control of mobile robot networks.
\newblock {\em Proceedings of the IEEE}, 99(9):1525--1540, Sept 2011.

\bibitem[ZSA13]{ZSA13}
Daniel Zelazo, Simone Schuler, and Frank Allg{\"o}wer.
\newblock Performance and design of cycles in consensus networks.
\newblock {\em Systems \& Control Letters}, 62(1):85--96, 2013.

\end{thebibliography}

%\begin{appendix}
%    \input{append}
%\end{appendix}

\end{document}